\newtheorem{lemma}{Lemma}[section]
\newtheorem{proposition}[lemma]{Proposition}
\newtheorem{theorem}[lemma]{Theorem}
\newtheorem{corollary}[lemma]{Corollary}
\newtheorem{definition}[lemma]{Definition}
\newtheorem{example1}[lemma]{Example}
\newtheorem{rem1}[lemma]{Remark}
\newtheorem{assumption}[lemma]{Assumption}
\newtheorem{alg1}[lemma]{Algorithm}
\newtheorem{me1}[lemma]{Mechanism}
\newenvironment{remark}{\begin{rem1}\rm}{\end{rem1}}
\newenvironment{example}{\begin{example1}\rm}{\end{example1}}
\numberwithin{equation}{section}
\newcommand{\R}{\mathbb{R}}
\DeclareMathOperator{\interior}{int}
\DeclareMathOperator{\dom}{dom}
\DeclareMathOperator{\gr}{gr}
\DeclareMathOperator{\esssup}{ess\,sup}
\DeclareMathOperator{\essinf}{ess\,inf}
\let\abs=\envert
\renewcommand{\O}{\Omega}
\newcommand{\F}{\mathcal{F}}
\newcommand{\A}{\mathcal{A}}
\renewcommand{\P}{\mathcal{P}}
\newcommand{\N}{\mathbb{N}}
\renewcommand{\Pr}{\mathbb{P}}
\newcommand{\of}[1]{\ensuremath{\left( #1 \right)}}
\newcommand{\cb}[1]{\ensuremath{ \left\{ #1 \right\} }}
\def\prehp(#1,#2){\ensuremath{  #1 \cdot #2 }}
\newcommand{\E}{\mathbb{E}}
\newcommand{\T}{\top}
\newcommand{\diag}{\operatorname{diag}}
\newcommand{\ind}{\mathbbm{1}}
\DeclareMathOperator*{\argmin}{arg\,min}
\begin{document}

\title{Can Nash inform capital requirements?\\ Allocating systemic risk measures}
\author{\c{C}a\u{g}{\i}n Ararat\thanks{Bilkent University, Department of Industrial Engineering, Ankara, Turkey, cararat@bilkent.edu.tr.}\and  Zachary Feinstein\thanks{Stevens Institute of Technology, School of Business, Hoboken, NJ, USA, zfeinste@stevens.edu.}
}
\date{November 21, 2025}
\maketitle
\abstract{
Systemic risk measures aggregate the risks from multiple financial institutions to find system-wide capital requirements. Though much attention has been given to assessing the level of systemic risk, less has been given to allocating that risk to the constituent institutions. Within this work, we propose a Nash allocation rule that is inspired by game theory. Intuitively, to construct these capital allocations, the banks compete in a game to reduce their own capital requirements while, simultaneously, maintaining system-level acceptability. We provide sufficient conditions for the existence and uniqueness of Nash allocation rules, and apply our results to the prominent structures used for systemic risk measures in the literature. We demonstrate the efficacy of Nash allocations with numerical case studies using the Eisenberg-Noe aggregation mechanism.
\\
\textbf{Key words:} systemic risk, capital allocation, risk measure, game theory, Eisenberg-Noe model.
}

\section{Introduction}\label{sec:intro}
The financial system is comprised of complex relationships and interconnections between different firms. Because of these connections, \emph{systemic risk} naturally arises, i.e., when one firm or sector is shocked, negative externalities are felt by all other institutions. In its extreme, the distress or failure of one institution can lead to forms of financial contagion which amplify that initial shock throughout the whole of the financial system.
While there are many models of systemic risk, e.g., for default contagion \citep{eisenberg2001systemic,rogers2013failure} or price-mediated contagion~\citep{cifuentes2005liquidity,amini2016uniqueness,feinstein2017financial}, within this work we take a model-agnostic view of systemic risk.

As systemic risk concerns the risk of the entire financial system, rather than \emph{ex ante} focusing on any particular portfolio or firm, many metrics have been developed to assess such risks. 
Taking the idea from monetary risk measures \citep{artzner1999coherent,follmer2002convex}, systemic risk measures have been introduced that provide system-wide capital requirements to guarantee that the outcome to the entire system is acceptable.
Within the literature, there are two broad notions of systemic risk measures:
\begin{itemize}
\item \textbf{Insensitive systemic risk measures:} First introduced in \cite{chen2013axiomatic,kromer2016systemic}, these systemic risk measures take the concept of aggregate-then-allocate. That is, the risks of the financial system are aggregated into a single random variable on which the system-wide capital requirements are determined.
\item \textbf{Sensitive systemic risk measures:} First introduced in \cite{feinstein2017measures,armenti2018multivariate}, these systemic risk measures consider the converse of the insensitive systemic risk measures in that they allocate-then-aggregate. That is, each bank is assessed a capital charge which is included into the financial system before the aggregation to the system-wide risks is determined. Because there may be multiple Pareto-efficient allocations, these systemic risk measures are often presented as set-valued risk measures. Notably, and as highlighted in \cite{feinstein2017measures}, this approach includes seemingly disparate methodologies such as CoVaR \citep{adrian2016covar}. A variation on this approach, in which the capital requirements are placed in a recovery fund to be distributed to any bank during a stress event so as to rescue the system, was introduced in \cite{biagini2019unified}.
\end{itemize}

In this way, systemic risk measures provide either the system-wide capital requirements (i.e., insensitive) or sets of acceptable capital allocations (i.e., sensitive systemic risk measures). However, except in special cases highlighted in the below literature review (Section~\ref{sec:literature}), no general methodology has been proposed to jointly select the capital requirements for all banks. We seek to develop a framework to construct capital requirements for all banks based on any systemic risk measure.

Within this work, we approach this problem with a game-theoretic construction for capital requirements. In utilizing game theory, we naturally capture the competition between banks trying to lower their own capital requirements while still maintaining systemic acceptability. Our goal is to systematically define these Nash allocation rules and determine sufficient conditions so as to guarantee their existence and uniqueness.

In systematizing the construction of systemic risk-aware capital allocations, we wish to highlight three primary contributions of this work.
First, as far as the authors are aware, we are the first to propose a \textbf{generic framework to allocate systemic risk} capital requirements to the banks in the system. As we demonstrate within this work, the approach taken herein applies to all systemic risk measures that have been proposed in the prior literature. Notably, not only is this framework general, but due to its construction via a game, the proposed allocation mechanism is financially meaningful in all settings.
Second, due to the game-theoretic construction, we prove general conditions for \textbf{existence and uniqueness} so that this construction can be guaranteed to function in practice. 
Third, we consider the Eisenberg-Noe aggregation function \cite{eisenberg2001systemic} so as to demonstrate the proposed methodology when applied to financial networks. Within this structure, we provide a \textbf{single convex optimization problem} to compute a Nash allocation. This is in contrast to the traditional methodologies for systemic risk measures on financial networks that suffer from the curse of dimensionality (e.g., \cite{feinstein2017measures}) or utilize approximation methods from, e.g., machine learning \citep{doldi2023multivariate,gonon2024computing,feinstein2023acceptable}. We conclude with numerical case studies to demonstrate the efficacy of the Nash allocations when applied to financial networks.

The remainder of the main part of this work is organized as follows. The introduction is concluded with a literature review of other capital allocation rules in Section~\ref{sec:literature}. Section~\ref{sec:background} provides the general notation and definitions that are used throughout this work. The main results on the construction of a Nash allocation rule, including sufficient conditions for the existence and uniqueness of that rule, are provided in Section~\ref{sec:nash}. Nash allocation rules are further studied in detail and demonstrated in numerical case studies in Section~\ref{sec:EN} on systemic risk measures generated by the Eisenberg-Noe clearing system \citep{eisenberg2001systemic}. Section~\ref{sec:conclusion} concludes the work.

\subsection{Literature Review and Systemic Capital Allocations}\label{sec:literature}

Though the aforementioned systemic risk measures are based on the idea of capital requirements, there has been only limited consideration for how to determine the actual capital requirement to be charged to each institution. The first attempt at determining the capital allocations for sensitive systemic risk measures was proposed in \cite{feinstein2017measures}, and independently developed in \cite{armenti2018multivariate}, via the efficient cash-invariant allocation rules. These allocations are determined by minimizing the total system-wide capital required and they are studied further in, e.g., \cite{fissler2019elicitability}.
In contrast, \cite{brunnermeier2019measuring} presents economic axioms for a capital allocation; they present a construction for such a capital allocation in the setting of a specific sensitive systemic risk measure.

There have been different approaches taken for determining the allocation of the individual capital requirements for the scenario-dependent systemic risk measures. Within \cite{biagini2020fairness}, these allocations are determined via the dual representation of the systemic risk measure where the system health is determined by the sum of bank utilities.
A separate allocation approach is presented in \cite{wang2023optimal} in which multiple optimization problems need to be solved to determine the allocations.

A different approach, in which banks are allowed to directly exchange their risks, is taken in \cite{biagini2021systemic,doldi2022multivariate}. However, in doing so, the authors did not directly consider the question of capital requirement but rather the optimal sharing of an exogenously provided rescue fund.
Though we will directly consider the question of capital requirements, we note that the approach taken herein is most similar to this risk-transfer approach in that we will consider how banks may interact with each other to reduce their capital requirements.

Finally, due to the use of risk measures to compute systemic capital requirements, it may be tempting to apply a \emph{vector-valued} risk measure to provide individualized capital requirements while maintaining the interconnections between firms. However, as proven in~\cite{ararat2024separability}, no lower semicontinuous, convex vector-valued risk measure can admit capital requirements that include systemic interactions. That is, following a vector-valued risk measure, the capital requirements for every bank can only depend on its own financial situation. In this way, such a methodology is incapable of capturing the interconnections that are inherent in financial systems.

\section{Background and Notation}\label{sec:background}

Before discussing the relevant background on risk measures, we provide some overarching notation that is used throughout this work. Let us consider the Euclidean space $\R^N$, where $N\in\N:=\{1,2,\ldots\}$. For each $i\in[N]\coloneqq\{1,\ldots,N\}$, we denote by $e_i$ the $i^{\text{th}}$ standard unit vector in $\R^N$. Given $x=(x_1,\ldots,x_N)^{\T}\in \R^N$, $m\in\R$, $i\in[N]$, we define $(m,x_{-i})\in\R^N$ to be the vector whose $i^{\text{th}}$ component is $m$ and who is equal to $x$ in every other component. To compare two vectors $x,y \in \R^N$ componentwise, we write $x\leq y$ if $x_i \leq y_i$ for each $i\in[N]$. In this case, we also define the hyperrectangle $[x,y]:=\{z\in\R^N\; | \; x\leq z\leq y\}$. We say that a function $f\colon \R^N\to \bar\R\coloneqq [-\infty,+\infty]$ is \emph{increasing} if $x\leq y$ implies $f(x)\leq f(y)$ for every $x,y\in\R^N$; we say that $f$ is \emph{decreasing} if $-f$ is increasing. Since we will deal with concave aggregation functions in the sequel, we define the \emph{effective domain} of a function $f\colon \R^N\to \bar\R$ as 
\[
\dom f\coloneqq \{x\in\R^N\; | \; f(x)>-\infty\}.
\]
For a sequence $(x^n)_{n\in\N}$ and a number $x$ in $\bar{\R}$, we write $x^n \uparrow x$ ($x^n \downarrow x$) if $x^1\leq x^2\leq \ldots$ ($x^1\geq x^2\geq \ldots$) and $\lim_{n\rightarrow\infty}x^n=x$. We denote by $I_N$ the $N\times N$ identity matrix. For a matrix $a\in\R^{N\times M}$, where $M\in\N$, we denote by $a_{i\cdot}$ its $i^{\text{th}}$ row (treated as a column vector) and by $a_{\cdot j}$ its $j^{\text{th}}$ column for each $i\in[N]$ and $j\in[M]$. For sets $A,B\subseteq\R^N$, we define their \emph{Minkowski sum} as $A+B:=\{x+y\; | \; x\in A, y\in B\}$ with the convention $A+\emptyset=\emptyset+B=\emptyset$. Then, the collection of all \emph{upper sets} in $\R^N$ is given by
\begin{equation}\label{eq:upperset}
\P(\R^N;\R^N_+) \coloneqq \{A \subseteq \R^N \; | \; A + \R^N_+ = A\},
\end{equation}
where $\R^N_+:=\{x\in\R^N\; | \; 0\leq x\}$ is the cone of positive vectors. We also define the collection of all \emph{rectangular upper sets} in $\R^N$ by
\begin{equation}\label{eq:rectset}
\mathcal{R}(\R^N;\R^N_+)\coloneqq \cb{\bigtimes_{i=1}^N A_i\mid \forall i\in[N]\colon A_i=\R \text{ or }A_i=[x_i,+\infty) \text{ for some }x_i\in\R}\cup\{\emptyset\}.
\end{equation}
Clearly, $\mathcal{R}(\R^N;\R^N_+)\subseteq \P(\R^N;\R^N_+)$.

Let $(\O,\F,\Pr)$ be a probability space. We denote the space of all (equivalence classes of) $\F$-measurable $N$-dimensional random vectors $X = (X_1,\ldots,X_N)^{\T}$ by $L^0(\R^N)$. In particular, (in)equalities and limits about random variables are understood in the $\mathbb{P}$-almost sure (a.s.) sense unless stated otherwise. For a random variable $Y\in L^0(\R)$, we define its $\Pr$-essential supremum as $\esssup Y:=\inf\{c>0\; | \; \Pr\{Y\leq c\}=1\}$ and $\Pr$-essential infimum as $\essinf Y:=-\esssup (-Y)$. For mathematical convenience, we often restrict ourselves to the space $L^\infty(\R^N) \subseteq L^0(\R^N)$ of essentially bounded random vectors, i.e., the set of all $X \in L^0(\R^N)$ such that $\|X_i\|_\infty := \max\{\esssup X_i, -\essinf X_i\} < +\infty$ for every $i \in [N]$. With abuse of notation, we often write $\esssup X:=(\esssup X_1,\ldots,\esssup X_N)^{\T}\in\R^N$ and $\essinf X:=(\essinf X_1,\ldots,\essinf X_N)^{\T}\in\R^N$ 
for $X \in L^\infty(\R^N)$.

\subsection{Monetary Risk Measures}\label{sec:background-riskmsr}

Let us recall the definition of a risk measure in the univariate setting.

\begin{definition}\label{defn:riskmsr}
A functional $\rho\colon L^\infty(\R)\to\R$ is called a \textbf{risk measure} if it satisfies the following properties:
\begin{itemize}
\item \textbf{Normalized:} $\rho(0) = 0$.
\item \textbf{Monotone:} $Y^1 \geq Y^2$ implies $\rho(Y^1) \leq \rho(Y^2)$ for every $Y^1,Y^2\in L^\infty(\R)$.
\item \textbf{Translative:} $\rho(Y+m) = \rho(Y)-m$ for every $Y\in L^\infty(\R)$ and $m\in\R$.
\end{itemize}
In addition, we usually consider the following properties for $\rho$:
\begin{itemize}
\item \textbf{Convex:} $\rho(\alpha Y^1+(1-\alpha)Y^2)\leq \alpha \rho(Y^1)+(1-\alpha)\rho(Y^2)$ for every $Y^1,Y^2\in L^\infty(\R)$ and $\alpha\in[0,1]$.
\item \textbf{Positively homogeneous:} $\rho(\alpha Y)=\alpha \rho(Y)$ for every $Y\in L^\infty(\R)$ and $\alpha\geq 0$.
\item \textbf{Continuous from above:} $Y^n \downarrow Y$ implies $\rho(Y^n)\uparrow \rho(Y)$ for every $Y,Y^1,Y^2,\ldots\in L^\infty(\R)$.
\end{itemize}
A risk measure $\rho$ is called \textbf{coherent} if it is convex and positively homogeneous.
\end{definition}

Given a risk measure $\rho$, the set
\[
\A := \{Y \in L^\infty(\R) \; | \; \rho(Y) \leq 0\}
\]
is called its \emph{acceptance set}. As recalled in the next proposition, this set characterizes $\rho$.

\begin{proposition}\label{prop:riskmsr-primal}
\citep[Proposition~4.6]{fs:sf} Let $\rho\colon L^\infty(\R)\to\R$ be a risk measure with acceptance set $\A$. Then, for every $Y\in L^\infty(\R)$, it holds
\[
\rho(Y) = \inf\{m \in \R \; | \; Y + m \in \A\}.
\]
Moreover, $\rho$ is convex (positively homogeneous, coherent) if and only if $\A$ is a convex set (cone, convex cone); whenever $\rho$ is convex, $\rho$ is continuous from above if and only if $\A$ is closed in the weak* topology on $L^\infty(\R)$.
\end{proposition}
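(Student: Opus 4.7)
The plan is to derive all four assertions from translation invariance together with standard convex-analytic arguments, leaving the weak*-closedness equivalence as the only substantive step.

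First, I would establish the primal formula. By translation invariance, $Y+m\in\A$ is equivalent to $\rho(Y)-m=\rho(Y+m)\le 0$, i.e., $m\ge \rho(Y)$, and taking the infimum of such $m$ yields $\rho(Y)$.

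Second, for the convex / positively homogeneous / coherent equivalences I would use a ``lift and project'' argument via translation invariance. For the direction $\rho$-property $\Rightarrow$ $\A$-property, I take $Y^1,Y^2\in\A$ and compute, e.g., $\rho(\lambda Y^1+(1-\lambda)Y^2)\le\lambda\rho(Y^1)+(1-\lambda)\rho(Y^2)\le 0$; the same argument with $\lambda Y$ handles the conic case. For the converse, given $Y^i$ and $\epsilon>0$, I set $m_i:=\rho(Y^i)+\epsilon$ so that $\rho(Y^i+m_i)=-\epsilon<0$, which places $Y^i+m_i$ inside $\A$; convexity (resp.\ conic closure) of $\A$ then produces an element of $\A$ whose acceptability, combined with the primal formula from the first step, yields the desired inequality up to $\epsilon$, and I let $\epsilon\downarrow 0$. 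The coherent case is obtained by combining the two.

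Third, for the equivalence between continuity from above and weak*-closedness of $\A$, the easier direction goes as follows. Given $Y^n\downarrow Y$, monotonicity forces $\rho(Y^n)\uparrow m\le \rho(Y)$. For fixed $\epsilon>0$, translation invariance gives $\rho(Y^n+m+\epsilon)\le\rho(Y^n)-m-\epsilon\le -\epsilon<0$, so $Y^n+m+\epsilon\in\A$; this sequence is dominated in $L^\infty(\R)$ and converges a.s.\ (hence weak*) to $Y+m+\epsilon$, and weak*-closedness delivers $Y+m+\epsilon\in\A$, so $\rho(Y)\le m+\epsilon$, and $\epsilon\downarrow 0$ concludes.

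The hard part is the converse. Convexity of $\A$ plus continuity from above is easily seen to yield the Fatou property $\rho(Y)\le\liminf_n \rho(Y^n)$ whenever $Y^n\to Y$ a.s.\ with $\sup_n\|Y^n\|_\infty<\infty$. By the Krein--Smulian theorem applied to the convex set $\A\subseteq L^\infty(\R)=(L^1(\R))^\ast$, it suffices to verify that each slice $\A_r:=\A\cap\{Y\in L^\infty(\R):\|Y\|_\infty\le r\}$ is weak*-closed; since $\A_r$ is weak*-metrizable, sequential closedness is enough. For a weak*-convergent sequence $Y^n\in\A_r$ with weak* limit $Y$, I would apply a Komlos-type extraction to obtain convex combinations $\tilde Y^n\in\A$ (using convexity of $\A$) converging a.s.\ to $Y$, and then invoke the Fatou property to conclude $\rho(Y)\le 0$, i.e., $Y\in\A$. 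The delicate bookkeeping in this last step is the principal obstacle; I would appeal to the classical treatment in \cite{fs:sf} for its detailed execution.
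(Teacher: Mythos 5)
The paper does not prove this proposition at all: it is recalled verbatim from \citet[Proposition~4.6]{fs:sf}, so there is no internal argument to compare against. Your reconstruction is essentially the standard textbook proof. Steps one and two are correct and complete as sketched (for positive homogeneity, note that the case $\lambda=0$ is covered by normalization), and your ``easy'' direction of the last equivalence is also fine: it uses only that a uniformly bounded, a.s.\ convergent sequence converges in $\sigma(L^\infty(\R),L^1(\R))$ by dominated convergence, and in fact does not need convexity.

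The one genuine wrinkle is in the hard direction. After invoking Krein--Smulian you claim that each slice $\A_r=\A\cap\{Y\colon\|Y\|_\infty\le r\}$ is weak*-metrizable, so that sequential closedness suffices; this is only true when $L^1(\R)$ is separable, which is not assumed for the underlying probability space, and the Komlos extraction is then applied to a sequence that need not exist in the first place (weak*-closedness of a non-metrizable compact set is not a sequential property). The standard repair, and the route taken in \cite{fs:sf}, is to avoid metrizability altogether: for a \emph{convex} set, Krein--Smulian reduces weak*-closedness to showing each $\A_r$ is closed in $L^1$ (equivalently, in probability), and for an $L^1$-convergent sequence in $\A_r$ one passes to an a.s.\ convergent subsequence and applies the Fatou property (which, as you say, follows from continuity from above together with monotonicity and translativity, by sandwiching with $Z^m:=\esssup_{n\ge m}Y^n\downarrow Y$). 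With that substitution no Komlos-type argument is needed, and your deferral to the classical treatment is then just the proof you have already outlined.
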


\subsection{Systemic Risk Measures}\label{sec:background-systemic}

Consider a financial system with $N\in\N$ institutions. In this multivariate setting, we will consider systemic risk measures that are defined as set-valued functionals on the space $L^\infty(\R^N)$. Here, each $X\in L^\infty(\R^N)$ is called a \emph{random shock} or a \emph{stress scenario}, e.g., for each $i\in [N]$, the random variable $X_i$ represents the future worth of the assets of institution $i$, which affects its ability to meet its contractual obligations.

We start by introducing the notion of aggregation function in our setting. For a function $\Lambda\colon\R^N\times\R^N\to\R\cup\{-\infty\}$ and $x,m\in\R^N$, we denote by $\Lambda|_{m}$ the section $x\mapsto \Lambda(x,m)$ and by $\Lambda|^{x}$ the section $m\mapsto \Lambda(x,m)$. Recall that we consider the effective domain $\dom\Lambda = \{(x,m) \in \R^N \times \R^N \; | \; \Lambda(x,m) \in\R\}$; in particular, $\Lambda(x,m) = -\infty$ for $(x,m) \not\in \dom\Lambda$.

\begin{definition}\label{defn:agg}
A function $\Lambda\colon\R^N\times\R^N\to\R\cup\{-\infty\}$ is called an \textbf{aggregation function} if it satisfies the following properties:
\begin{enumerate}
\item \textbf{Monotone:} $\Lambda$ is increasing.
\item \textbf{Concave:} $\Lambda$ is a concave function with a nonempty and closed domain $\dom\Lambda$.
\item \textbf{Continuous:} $\Lambda$ is continuous relative to $\dom\Lambda$.
\item \textbf{Rectangular domain:} (i) For each $m\in\R^N$, it holds $\dom\Lambda|_{m}\in \mathcal{R}(\R^N;\R^N_+)$.\\
(ii) For each $x\in\R^N$, it holds $\dom\Lambda|^{x}\in\mathcal{R}(\R^N;\R^N_+)$.
\end{enumerate}
A function $\bar\Lambda\colon\R^N\to\R\cup\{-\infty\}$ is called a \textbf{single-element aggregation function} if $(x,m)\mapsto \bar{\Lambda}(x)$ is an aggregation function.
\end{definition}

\begin{remark}\label{rem:single}
In view of Definition~\ref{defn:agg}, a function $\bar{\Lambda}\colon\R^N\to\R\cup\{-\infty\}$ is a single-element aggregation function if and only if it is increasing, concave, and $\dom\bar\Lambda$ is a nonempty rectangular upper set. In particular, part (ii) of the rectangular domain property becomes redundant here.
\end{remark}

\begin{remark}
For an aggregation function $\Lambda$, the quantity $\Lambda(x,m)$ can be seen as the impact of a shock of magnitude $x\in \R^N$ to society under capital level $m\in\R^N$. More generally, we can easily extend Definition~\ref{defn:agg} to functions $\Lambda\colon \R^N \times \R^M \to \R \cup \{-\infty\}$, where $M\in\N$. In this way, the shocks need not be of the same dimension as the capital requirements. This applies if, e.g., we consider factor models to generate the shocks (see \cite{banerjee2022pricing}) or groupings of the institutions for capital requirements (see \cite{feinstein2017measures,meimanjan2023}).
\end{remark}

We will discuss single-element aggregation functions later starting at Example~\ref{ex:syst}. For our general framework, let $\A \subseteq L^\infty(\R)$ be the acceptance set of a risk measure $\rho$ and $\Lambda\colon\R^N \times \R^N \to \R \cup \{-\infty\}$ be an aggregation function. For each $X\in L^\infty(\R^N)$, we define the set
\[
D(X)\coloneqq \{m\in\R^N\; | \; \mathbb{P}\{(X,m)\in \dom\Lambda\}=1\},
\]
i.e., $\Lambda(X,m)$ is a real-valued random variable for each $m\in D(X)$. In the sequel, it will be convenient to introduce the set
\[
L^\infty_\Lambda(\R^N)\coloneqq \cb{X\in L^\infty(\R^N)\; | \; D(X)\neq\emptyset}.
\]
Since $\dom\Lambda\neq\emptyset$, it is easy to verify that $L^\infty_\Lambda(\R^N)\neq\emptyset$ as well.

\begin{lemma}\label{asmp:domain}
(i) For each $X\in L^\infty(\R^N)$, we have $D(X)=D(\essinf X) = \dom\Lambda|^{\essinf X}$.\\
(ii) For each $X\in L_\Lambda^\infty(\R^N)$ and $m\in D(X)$, it holds $\Lambda(X,m)\in L^\infty(\R)$.
\end{lemma}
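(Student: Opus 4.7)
The plan is to exploit the rectangular domain property to reduce the probabilistic condition defining $D(X)$ to a deterministic condition on $\essinf X$, after which part (ii) follows immediately from monotonicity by sandwiching $\Lambda(X,m)$ between two real constants.

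For part (i), I would first invoke part (i) of the rectangular domain property: for each fixed $\bar m\in\R^N$, the section $\dom\Lambda|_{m=\bar m}$ is a rectangular upper set in $\R^N$. Hence there exist lower bounds $\ell_i(\bar m)\in\R\cup\{-\infty\}$ such that
\[
\dom\Lambda|_{m=\bar m}=\prod_{i=1}^{N}[\ell_i(\bar m),+\infty),
\]
with the convention $[\ell_i(\bar m),+\infty)=\R$ when $\ell_i(\bar m)=-\infty$ (and the product being empty if any $\ell_i(\bar m)=+\infty$, corresponding to $\dom\Lambda|_{m=\bar m}=\emptyset$). Fix $X\in L^\infty(\R^N)$ and $m\in\R^N$. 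Then $(X,m)\in\dom\Lambda$ a.s.\ iff $X_i\geq \ell_i(m)$ a.s.\ for every $i\in[N]$, iff $\essinf X_i\geq \ell_i(m)$ for every $i\in[N]$, iff $(\essinf X,m)\in\dom\Lambda$. This yields $D(X)=D(\essinf X)$. The second equality $D(\essinf X)=\dom\Lambda|_{x=\essinf X}$ is immediate from the definition of $D$ since $\essinf X$ is deterministic, so $\Pr\{(\essinf X,m)\in\dom\Lambda\}\in\{0,1\}$ and equals $1$ precisely when $(\essinf X,m)\in\dom\Lambda$.

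For part (ii), take $X\in L^\infty_\Lambda(\R^N)$ and $m\in D(X)$. By part (i), $(\essinf X,m)\in\dom\Lambda$, and since $\essinf X\leq \esssup X$ and $\dom\Lambda|_{m=m}$ is an upper set, $(\esssup X,m)\in\dom\Lambda$ as well. Using the a.s.\ bounds $\essinf X\leq X\leq \esssup X$ together with monotonicity of $\Lambda$, I obtain
\[
\Lambda(\essinf X,m)\leq \Lambda(X,m)\leq \Lambda(\esssup X,m)\quad\text{a.s.},
\]
where both bounds are finite real numbers (since $\Lambda$ takes values in $\R\cup\{-\infty\}$ and both endpoints lie in $\dom\Lambda$). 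Therefore $\Lambda(X,m)$ is essentially bounded, i.e., $\Lambda(X,m)\in L^\infty(\R)$.

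The only mildly delicate step is the translation of the probabilistic condition $\Pr\{(X,m)\in\dom\Lambda\}=1$ into a deterministic condition on $\essinf X$; this is exactly where the rectangular structure of the section is essential — without it, the a.s.\ condition cannot in general be encoded componentwise as $X_i\geq \ell_i(m)$. Once this reduction is in hand, the rest of the argument is a routine monotonicity sandwich.
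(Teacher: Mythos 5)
Your proposal is correct and follows essentially the same route as the paper: use the rectangular structure of the section $\dom\Lambda|_{m=\bar m}$ to convert the a.s.\ membership condition into componentwise deterministic inequalities at $\essinf X$ (the paper does this by exhibiting a point $\underline{x}\in\dom\Lambda|_{m=\bar m}$ with $\underline{x}\leq X$ a.s.\ and invoking monotonicity, which is the same reduction), and then obtain (ii) by sandwiching $\Lambda(X,m)$ between $\Lambda(\essinf X,m)$ and $\Lambda(\esssup X,m)$.
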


\begin{proof}
(i) Let $X\in L^\infty(\R^N)$. Since $\essinf X\leq X$ a.s., we have $D(\essinf X)\subseteq D(X)$. The converse is trivial if $D(X)=\emptyset$. Otherwise, let $m\in D(X)$. Then, $\Pr\{X\in \dom\Lambda|_{m}\}=1$. In particular, $\dom\Lambda|_{m}$ is a nonempty set. Since $\dom\Lambda|_{m}\in \mathcal{R}(\R^N_+;\R^N_+)$, there exists $\underline{x}\in\dom\Lambda|_{m}$ such that $\underline{x}\leq X$ a.s. Then, $\underline{x}\leq \essinf X$ so that $-\infty<\Lambda(\underline{x},m)\leq \Lambda(\essinf X, m)$ by the monotonicity of $\Lambda$. Hence, $m\in D(\essinf X)$ and we get $D(X)\subseteq D(\essinf X)$. \\
(ii) Let $X\in L_\Lambda^\infty(\R^N)$ and $m\in D(X)$. By (i), we have $m\in D(\essinf X)$ so that $-\infty<\Lambda(\essinf X,m)\leq \Lambda(X,m)\leq \Lambda(\esssup X,m)$ a.s. by monotonicity. Hence, we obtain $\Lambda(X,m)\in L^\infty(\R)$.
\end{proof}

\begin{definition}\label{defn:systrisk}
The set-valued functional $R\colon L^\infty_\Lambda(\R^N)\rightrightarrows \R^N$ defined by
\[
R(X) \coloneqq \{m \in D(X) \; | \; \Lambda(X,m) \in \A\},
\]
for each $X\in L_\Lambda^\infty(\R^N)$, is called the \textbf{$(\A,\Lambda)$-systemic risk measure}.
\end{definition}

In the above definition, if a stress scenario $X \in L_\Lambda^\infty(\R^N)$ is given, then $R(X)$ provides the set of all acceptable \emph{systemic} capital allocations, i.e., accounting for interactions between institutions as modeled by the aggregation function $\Lambda$.

\begin{remark}\label{rem:upperset}
Since $m\mapsto \Lambda(x,m)$ is increasing with a rectangular domain for each $x\in\R^N$, the set-valued function $D$ and the $(\A,\Lambda)$-systemic risk measure $R$ map into the collection of rectangular upper and upper sets, respectively, i.e., $D(X) \in \mathcal{R}(\R^N;\R^N_+)$ and $R(X)\in \P(\R^N;\R^N_+)$ for each $X\in L_\Lambda^\infty(\R^N)$; see \eqref{eq:upperset} and \eqref{eq:rectset}. Moreover, since $x\mapsto \Lambda(x,m)$ is increasing for each $m\in\R^N$ as well, $D$ and $R$ have the following monotonicity property: $D(X^1)\subseteq D(X^2)$ and $R(X^1)\subseteq R(X^2)$ for every $X^1,X^2\in L^\infty_\Lambda(\R^N)$ with $X^1\leq X^2$.
\end{remark}

We proceed with examples of systemic risk measures that have been studied in the recent literature.

\begin{example}\label{ex:syst}
Within the literature, systemic risk measures tend to follow the forms discussed below. Both of these constructions are derived from a single-element aggregation function $\bar\Lambda\colon \R^N \to \R \cup \{-\infty\}$.
\begin{itemize}
\item \textbf{Insensitive systemic risk measures:} Let us take $\Lambda=\Lambda^I$, where
\begin{equation}\label{eq:lambda-ins}
\Lambda^I(x,m) := \bar\Lambda(x) + \sum_{i \in [N]} m_i
\end{equation}
for each $x,m\in\R^N$. Clearly, $\Lambda^I$ is an aggregation function and $\dom\Lambda^I=\dom\bar\Lambda\times\R^N$. Let
\[
L^\infty(\dom \bar\Lambda)\coloneqq \{X\in L^\infty(\R^N)\; | \; \mathbb{P}\{X\in\dom\bar\Lambda\}=1\}.
\]
Then, we have $D^I(X):=D(X)=\emptyset$ if $X\notin L^\infty(\dom \bar\Lambda)$ and $D^I(X):=D(X)=\R^N$ if $X\in L^\infty(\dom \bar\Lambda)$. In particular, $L^\infty_{\Lambda^I}(\R^N)=L^\infty(\dom\bar\Lambda)$. For every $X\in L^\infty(\dom\bar\Lambda)$, by Lemma~\ref{asmp:domain}, we have $\bar\Lambda(X)\in L^\infty(\R)$ and
\[
R^I(X)\coloneqq R(X)\negthinspace=\negthinspace\cb{m\in\R^N\; | \; \bar\Lambda(X)+\sum_{i\in[N]} m_i \in \A}\negthinspace=\negthinspace \cb{m\in\R^N\; | \; \rho(\bar\Lambda(X))\leq \sum_{i\in[N]} m_i};
\]
in particular, $R^I(X)$ is a halfspace. Hence, the set-valued functional $R^I$ essentially reduces to the real-valued functional $X\mapsto \rho(\bar\Lambda(X))$ on $L^\infty(\dom \bar\Lambda)$ as studied in \cite{chen2013axiomatic,kromer2016systemic}.
\item \textbf{Sensitive systemic risk measures:} Let us take $\Lambda=\Lambda^S$, where
\begin{equation}\label{eq:lambda-sen}
\Lambda^S(x,m) \coloneqq \bar\Lambda(x+m)
\end{equation}
for each $x,m\in\R^N$. Clearly, $\Lambda^S$ is an aggregation function with
\begin{equation}\label{eq:domsen}
\dom\Lambda^S=\{(x,m)\in\R^N\times\R^N\; | \; x+m\in\dom\bar\Lambda\}.
\end{equation}
Then, for each $X\in L^\infty(\R^N)$, we have
\begin{equation}\label{eq:Dsen}
D^S(X):=D(X)=\{m\in\R^N\; | \; \mathbb{P}\{X+m\in \dom \bar\Lambda\}=1\}.
\end{equation}
Since $\dom\bar\Lambda\neq\emptyset$ and $\bar\Lambda$ is increasing, it follows that $D^S(X)\neq \emptyset$ for every $X\in L^\infty(\R^N)$. In particular, $L^\infty_{\Lambda^S}(\R^N)=L^\infty(\R^N)$. By Lemma~\ref{asmp:domain}, we have $\bar\Lambda(X+m)\in L^\infty(\R)$ for every $X\in L^\infty(\R^N)$ and $m\in D^S(X)$. Then,
\begin{align*}
R^S(X)\coloneqq R(X)&=\{m\in D^S(X)\; | \; \bar\Lambda(X+m)\in \A\}\\
&=\{m\in \R^N\; | \; \rho(\bar\Lambda(X+m))\leq 0,\ \essinf X+m\in \dom\bar\Lambda\}
\end{align*}
for every $X\in L^\infty(\R^N)$, which yields the systemic risk measures studied in \cite{feinstein2017measures,ararat2020dual}. Compared to the previous case, these systemic risk measures consider the direct impact of capital allocations on the stress scenario before aggregation, hence they are more sensitive to capital levels.
\end{itemize}
\end{example}

\begin{remark}\label{rem:sd}
A variation of the sensitive aggregation function has been studied in \cite{biagini2019unified,biagini2020fairness,biagini2021systemic} which allow for the distribution of capital to be be a random variable, rather than deterministic. Herein, we refer to this special type of aggregation function \textbf{scenario-dependent}. To be precise, let $\bar\Lambda\colon \R^N \to \R \cup \{-\infty\}$ be a single-element aggregation function with $\dom\bar\Lambda=\underline{x}+\R^N_+$ for some $\underline{x}\in\R^N$. We define
\begin{equation}\label{eq:lambda-coop}
\bar\Lambda^{SD}(x) := \begin{cases}\sup_{z \in \R^N}\{\bar\Lambda(z) \; | \; \sum_{i \in [N]} z_i \leq \sum_{i \in [N]} x_i\}& \text{if }x\in \dom\bar\Lambda,\\ -\infty & \text{if } x\in\R^N\setminus\dom\bar\Lambda.\end{cases}
\end{equation}
Note that, when $x\in \dom\bar\Lambda$, the feasible region of the maximization problem in \eqref{eq:lambda-coop} is a nonempty compact set, a maximizer always exists, and we have $\bar\Lambda^{SD}(x)\in\R$. In particular, $\dom\bar\Lambda^{SD}=\dom\bar\Lambda$. Since $\bar\Lambda$ is continuous relative to $\dom\bar{\Lambda}$, it follows from \citet[Theorem~1.17(c)]{rockafellarwets} that $\bar{\Lambda}^{SD}$ is continuous relative to $\dom\bar{\Lambda}^{SD}$. Since $\bar{\Lambda}$ is concave, the concavity of $\bar\Lambda^{SD}$ follows directly by \citet[Proposition~2.22(a)]{rockafellarwets}. Hence, $\bar\Lambda^{SD}$ is a single-element aggregation function. It follows that the sensitive systemic risk measure corresponding to $\bar\Lambda^{SD}$ has the form
\begin{align*}
R^{SD}(X)&\coloneqq \{m\in D(X)\;|\;\bar\Lambda^{SD}(X+m)\in\A\}\\
&=\cb{m \in D(X)\; | \; \exists Y \in L^\infty(\R^N): \, \bar\Lambda(X+Y) \in \A, \, \sum_{i \in [N]} Y_i \leq \sum_{i \in [N]} m_i},
\end{align*}
where $D(X):=\{m\in\R^N\;|\;\mathbb{P}\{X+m\in\dom\bar\Lambda\}=1\}$ for every $X \in L^\infty(\R^N)$. In particular, $R^{SD}(X)$ is the intersection of a halfspace with $D(X)$. Hence, the set-valued functional $R^{SD}$ essentially reduces to the minimum sum scalarization $X \mapsto \inf\{\sum_{i \in [N]} m_i \; | \; m \in R^{SD}(X)\}$ as studied in \cite{biagini2019unified,biagini2020fairness,biagini2021systemic}.
\end{remark}

We conclude our review of systemic risk measures by providing examples of single-element aggregation functions that we will revisit in Section~\ref{sec:nash} below. We note that we also consider a different aggregation function based on financial clearing conditions in Section~\ref{sec:EN}.

\begin{example}\label{ex:se-aggregation}
Here are a few examples of single-element aggregation functions:
\begin{enumerate}
\item\label{util} \textbf{Sum of univariate utilities:} For each $i\in[N]$, let $u_i\colon\R\to\R\cup\{-\infty\}$ be a \emph{univariate utility function}, i.e., an increasing, concave, and upper semicontinuous function such that $\dom u_i\neq \emptyset$ is closed. Let $\bar{\Lambda}^U(x)\coloneqq\sum_{i \in [N]} u_i(x_i)$ for each $x\in\R^N$. As presented in, e.g., \cite{biagini2019unified,biagini2020fairness}, the function $\bar\Lambda^U$ is a single-element aggregation function with $\dom\bar\Lambda^U=\bigtimes_{i \in [N]} \dom u_i\neq\emptyset$.
\item\label{mf} \textbf{Mean-field utility:} Inspired by mean-field games (see, e.g., \citet[Section~3.1]{carmonaMF} and \citet[Section~2]{bensoussanLQ}), we introduce the idea that the system's utility may depend both on each bank's individual utility as well as the average health of the financial system. Mathematically, consider the univariate utility functions $u_i: \R \to \R \cup \{-\infty\}$ for each bank $i \in [N]$ and a systemic externality utility $\bar u: \R \to \R \cup \{-\infty\}$. 
In this way, we augment the sum of univariate utilities by this systemic externality:
\[
\bar\Lambda^{MF}(x) := \sum_{i \in [N]} u_i(x_i) + \bar u\left(\frac{1}{N}\sum_{i \in [N]} x_i\right)
\]
for each $x \in \R^N$. 
Assuming that $\bigtimes_{i \in [N]} \dom u_i \subseteq \{x \in \R^N \; | \; \frac{1}{N}\sum_{i \in [N]} x_i \in \dom\bar u\}$, this single-element aggregation function has domain $\dom\bar\Lambda^{MF} = \bigtimes_{i \in [N]} \dom u_i$.
\end{enumerate}
\end{example}

\section{Nash Allocation Rules}\label{sec:nash}

In this section, we wish to construct a financially meaningful capital allocation rule that accounts for the contribution that each firm makes towards the risk of the full system. We are specifically motivated by the aggregation functions that are sums of univariate utilities (see Example~\ref{ex:se-aggregation}\eqref{util}), i.e.,
$\bar\Lambda(x) = \sum_{i \in [N]} u_i(x_i)$, $x\in\R^N$, for univariate utility functions $u_1,\ldots,u_N\colon \R \to \R \cup \{-\infty\}$ so that each bank's contribution to the system's utility is clear. We aim to generalize this concept for broader classes of aggregation functions. Notably, in contrast to the \emph{set-valued} systemic risk measures introduced in Section~\ref{sec:background-systemic}, which provide the collection of all acceptable capital requirements, we focus on finding specific, i.e., \emph{vector-valued}, acceptable capital allocations instead.

Throughout this section, let $\A$ be the acceptance set of a risk measure $\rho$ and $\Lambda\colon \R^N\times\R^N\to\R\cup\{-\infty\}$ be an aggregation function. We consider the corresponding $(\A,\Lambda)$-systemic risk measure $R$ given by Definition~\ref{defn:systrisk}. We will work under the following assumption for $\A$ and $\rho$:

\begin{assumption}\label{asmp:coherent}
The risk measure $\rho$ is a coherent risk measure that is continuous from above; equivalently, the acceptance set $\A$ is a convex cone that is closed with respect to the weak* topology on $L^\infty(\R)$.
\end{assumption}

As a generalization of aggregation functions that are in the form of a sum of univariate utility functions, we assume that $\Lambda$ can be expressed in the form
\begin{equation}\label{eq:decomposition}
\Lambda(x,m)=\sum_{i\in[N]} \Lambda_i(x,m)
\end{equation}
for each $(x,m)\in \R^N\times\R^N$, where $\Lambda_1,\ldots,\Lambda_N\colon\R^N\times\R^N\to\R\cup\{-\infty\}$ are aggregation functions that satisfy the following assumption:

\begin{assumption}\label{asmp:decomposition}
\begin{enumerate}
\item \textbf{Consistent domain:} For each $i\in[N]$, it holds $\dom\Lambda_i=\dom\Lambda$.
\item \textbf{Self-feasible:} For every $x\in \R^N$ with $D(x) = \dom\Lambda|^{x} \neq \emptyset$, there exists $\bar{r} \in D(x)$ such that $\Lambda_i(x,(\bar{r}_i,m_{-i})) > 0$ for every $m \in D(x)$ and $i \in [N]$.
\item \textbf{Self-infeasible:} For every $x\in \R^N$ with $D(x) = \dom\Lambda|^{x} \neq \emptyset$, there exists $\underline{r}\in D(x)$ such that $-\infty\leq \Lambda_i(x,(\underline{r}_i-\varepsilon,m_{-i}))<0$ for every $m\in D(x)$, $i\in[N]$, and $\varepsilon>0$.
\end{enumerate}
\end{assumption}

In what follows, we refer to $(\Lambda_i)_{i\in[N]}$ as a \textbf{decomposition} of $\Lambda$. We say that $(\Lambda_i)_{i\in[N]}$ is \textbf{self-preferential} if there exists some constant $L \in [0,1)$ such that, for every $i,j\in[N]$ with $i\neq j$, $(x,m)\in\dom\Lambda$, and $\delta\geq 0$, it holds 
\[
\Lambda_i(x,m+L\delta e_i) \geq \Lambda_i(x,m+\delta e_j).
\]

\begin{remark}
The self-preferential property of a decomposition means that increasing the capital to the bank associated with the aggregation function is more efficient (by a fixed factor) than adding it to another bank. In particular, from the perspective of a Nash game, no bank has an incentive to bail out any other bank; instead, banks will retain any capital they are allocated. This idea will be formally introduced in Definition~\ref{defn:nash-allocation} below.
\end{remark}

\begin{remark}\label{rem:select}
Given the aggregation function $\Lambda$, choosing a suitable decomposition for it in the form of \eqref{eq:decomposition} may be a nontrivial task. 
For instance, the decompositions defined by the Aumann-Shapley value (see \cite{denault}) or the ``with-minus-without" allocations considered in \cite{brunnermeier2019measuring} do \emph{not} satisfy the properties of an aggregation function in general. This is despite these constructions satisfying other useful properties in general.

However, in certain cases, each bank $i\in[N]$ can have a natural choice of an aggregation function, say $\lambda_i$, with a clear economic interpretation, e.g., the utility or net repayment realized by bank $i$. If such a natural decomposition is not available a priori, then we may simply take $\lambda_i\equiv 0$. In general, the mismatch between $\Lambda$ and the sum of individual aggregation functions, i.e., the function $\Delta:=\Lambda-\sum_{i\in[N]}\lambda_i$ is yet to be decomposed as $\Delta=\sum_{i\in[N]}\delta_i$ for suitably selected $\delta_1,\ldots,\delta_N$ so that we can set $\Lambda_i:=\lambda_i+\delta_i$ for each $i\in[N]$ to establish the structure in \eqref{eq:decomposition}. In Appendix~\ref{app:nash-decomposition}, we will address the decomposition selection problem for a generic aggregation function $\Delta\colon\R^N\times\R^N\to\R\cup\{-\infty\}$.
\end{remark}

Next, we illustrate the use of decompositions through the motivating example of sum of univariate utilities aggregation. While a general theory for defining meaningful decompositions for generic aggregation functions will be provided in Appendix~\ref{app:nash-decomposition}, decompositions for single-element aggregation functions are defined and studied systematically in Section~\ref{sec:single-element}.

\begin{example}\label{ex:sen-ut}
Recall the single-element aggregation function $\bar\Lambda^U$ in Example~\ref{ex:se-aggregation}\eqref{util} that is in the form of sum of univariate utility functions $u_1,\ldots,u_N$. For this example, consider the corresponding sensitive aggregation function $\Lambda^{S}$ defined by
\[
\Lambda^{S}(x,m)\coloneqq \bar\Lambda^U(x+m)=\sum_{i\in[N]}u_i(x_i+m_i)
\]
for each $x,m\in\R^N$. For each $i \in [N]$, let us define
\[
\Lambda^{S}_i(x,m)\coloneqq u_i(x_i+m_i)
\]
if $x,m\in\R^N$ satisfy $x+m\in \dom\bar\Lambda^U$ and $\Lambda^{S}_i(x,m)\coloneqq -\infty$ otherwise. Clearly, $\Lambda^{S}_1,\ldots,\Lambda^{S}_N$ are aggregation functions, $(\Lambda^{S}_i)_{i\in[N]}$ satisfies the consistent domain property, and $\sum_{i\in[N]}\Lambda^{S}_i = \Lambda^{S}$. Let us further assume that $\{x_i \in \dom u_i \; | \; u_i(x_i) \geq 0\} \not\in \{\emptyset,\R\}$ for each $i\in[N]$, which guarantees self-feasibility and self-infeasibility for $(\Lambda^{S}_i)_{i\in[N]}$. Hence, $(\Lambda^{S}_i)_{i\in[N]}$ is a decomposition of $\Lambda^{S}$. Moreover, this decomposition is self-preferential with $L=0$ due to the independence of the utility functions.
\end{example}

Using a decomposition of $\Lambda$, we are ready to define a Nash allocation rule, the central object of the paper. 

\begin{definition}\label{defn:nash-allocation}
Let $(\Lambda_i)_{i\in[N]}$ be a decomposition of $\Lambda$. A vector-valued functional $r\colon L_\Lambda^\infty(\R^N) \to \R^{N}$ is called a $(\A,(\Lambda_i)_{i\in[N]})$\textbf{-Nash allocation rule} if, for every $X\in L_\Lambda^\infty(\R^N)$, it holds
\begin{equation}\label{eq:NE}
r_i(X) = \inf\{m_i \in \R \; | \; \Lambda_i(X,(m_i,r_{-i}(X))) \in \A\}
\end{equation}
for each $i\in[N]$.\footnote{In the infimum, we implicitly consider $m_i\in\R$ such that $(m_i,r_{-i}(X))\in D(X)$, which ensures that $\Lambda_i(X,(m_i,r_{-i}(X)))\in L^\infty(\R)$ by Lemma~\ref{asmp:domain}.} In this case, $r$ is called \textbf{acceptable} if $r(X)\in R(X)$ for every $X\in L_\Lambda^\infty(\R^N)$.
\end{definition}

In Appendix~\ref{sec:rescue}, we propose a generalization of a Nash allocation rule that permits banks to use their capital to directly bail out or rescue other institutions. However, for all examples considered herein, all Nash allocations with rescue funds satisfy the above definition (see Proposition~\ref{prop:self-preferential}); hence, we focus on this simpler setting in this paper. We leave a more formal study of this generalization for future work.

Within this section, we consider general conditions for the acceptability of a Nash allocation rule (Section~\ref{sec:nash-acceptable}). With conditions so that acceptability is guaranteed, we provide a general theorem for the existence and uniqueness of Nash allocation rules (Section~\ref{sec:nash-exist}). With these results for general systemic risk measures, we provide further insights for those systemic risk measures generated from a single-element aggregation function, i.e., insensitive (Section~\ref{sec:nash-insensitive}) and sensitive (Section~\ref{sec:nash-sensitive}) systemic risk measures.

\subsection{Acceptable Allocation Rules}\label{sec:nash-acceptable}

Before focusing on Nash allocation rules specifically, we want to consider the acceptability of a given capital allocation vector. Specifically, when we decompose an aggregation function into attributed pieces for each institution, it is not guaranteed that the acceptability of all banks will lead to systemic acceptability; we refer the interested reader to Example~\ref{ex:non-acceptable} for a simple example \emph{without} systemic acceptability. Proposition~\ref{prop:acceptable} provides a simple sufficient condition for acceptability. Recall that Assumption~\ref{asmp:coherent} is imposed herein and for the remainder of this work.

\begin{proposition}\label{prop:acceptable}
Let $(\Lambda_i)_{i\in[N]}$ be a decomposition of $\Lambda$. Consider a random shock $X \in L_\Lambda^\infty(\R^N)$ and a capital vector $m\in D(X)$. If $\Lambda_i(X,m) \in \A$ for every bank $i\in[N]$, then $\Lambda(X,m) \in \A$. 
\end{proposition}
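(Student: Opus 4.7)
The plan is to combine the three structural pieces we have available: (i) $\A$ is a convex cone (Assumption~\ref{asmp:coherent}), (ii) the decomposition is dominated by $\Lambda$ in the pointwise sense, and (iii) the acceptance set of any risk measure is monotone in the sense that $Y^1 \geq Y^2$ with $Y^2 \in \A$ implies $Y^1 \in \A$ (this follows immediately from the monotonicity axiom in Definition~\ref{defn:riskmsr}). Under these ingredients the proof becomes a short chain of set-membership steps.

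First I would check that everything lives in $L^\infty(\R)$ so that membership in $\A$ is meaningful. Since $m \in D(X)$ and $\dom \Lambda_i = \dom \Lambda$ by the consistent-domain property, Lemma~\ref{asmp:domain}(ii) applied to $\Lambda$ and to each $\Lambda_i$ yields $\Lambda(X,m), \Lambda_1(X,m),\ldots,\Lambda_N(X,m) \in L^\infty(\R)$. Next I would use that $\A$ is a convex cone to conclude that it is closed under finite sums: if $Y^1,\ldots,Y^N \in \A$, then by convexity $\frac{1}{N}\sum_{i\in[N]} Y^i \in \A$, and then by the cone property (multiplying by $N \geq 0$) the sum $\sum_{i\in[N]} Y^i \in \A$. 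Applied to the hypothesis $\Lambda_i(X,m) \in \A$ for all $i$, this gives $\sum_{i\in[N]} \Lambda_i(X,m) \in \A$. Finally, the dominated-by-$\Lambda$ property of the decomposition gives the pointwise inequality $\sum_{i\in[N]} \Lambda_i(x,m) \leq \Lambda(x,m)$ for all $x$, hence almost surely
\[
\sum_{i\in[N]} \Lambda_i(X,m) \leq \Lambda(X,m),
\]
and the monotonicity of $\A$ upgrades the smaller random variable being acceptable to the larger one, yielding $\Lambda(X,m) \in \A$.

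There is essentially no hard step here; the result is a direct consequence of the axioms on $\A$ combined with the domination property in Definition~\ref{defn:decomposition}. The only point that requires care is the bookkeeping that $\Lambda(X,m)$ and each $\Lambda_i(X,m)$ are genuine essentially-bounded random variables rather than $\{-\infty\}$-valued, which is handled by invoking $m \in D(X)$ together with the consistent-domain property and Lemma~\ref{asmp:domain}(ii). Note that coherence is used only through the convex-cone structure; a convex (but not positively homogeneous) risk measure would not allow summing acceptable positions, which foreshadows why Assumption~\ref{asmp:coherent} is imposed globally in this section.
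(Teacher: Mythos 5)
Your proposal is correct and follows essentially the same route as the paper's proof: establish finiteness via $m\in D(X)$ and Lemma~\ref{asmp:domain}, use the convex-cone structure of $\A$ to sum the acceptable positions $\Lambda_i(X,m)$, and then invoke the domination property together with the monotonicity of $\rho$ to conclude $\Lambda(X,m)\in\A$. The only difference is that you spell out the convexity-plus-scaling step and the $L^\infty$ bookkeeping for each $\Lambda_i(X,m)$, which the paper leaves implicit.
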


\begin{proof}
Since $m\in D(X)$, we have $\Lambda(X,m)\in L^\infty(\R)$ by Lemma~\ref{asmp:domain}. Suppose that $\Lambda_i(X,m) \in \A$ for every $i\in[N]$. Since the acceptance set $\A$ is a convex cone, we have $\sum_{i\in[N]}\Lambda_i(X,m)\in\A$. Finally, since the underlying risk measure $\rho$ is monotone and $\sum_{i\in[N]}\Lambda_i(X,m)\leq \Lambda(X,m)$, we obtain $\Lambda(X,m)\in\A$.
\end{proof}

\begin{example}\label{ex:non-acceptable}
Proposition~\ref{prop:acceptable} guarantees systemic acceptability by imposing a coherence assumption on the acceptance set $\A$. In this example, we consider a convex but \emph{not} coherent risk measure to demonstrate a case where $\Lambda_i(X,m) \in \A$ for each bank $i\in [N]$ but $\Lambda(X,m) \not\in \A$.
Specifically, we take the entropic risk measure with unit risk aversion and the summation aggregation function over $N = 2$ banks, i.e., $\A = \{Y \in L^\infty(\R) \; | \; \E[1 - e^{-Y}] \geq 0\}$ and $\Lambda(x,m) = \sum_{i \in [N]} (x_i + m_i)$ for each $(x,m) \in \R^N \times \R^N$.
Herein, we consider the natural decomposition of the aggregation function given by $\Lambda_i(x,m) = x_i + m_i$ for each $i\in[N]$ and $(x,m) \in \R^N \times \R^N$.
Finally, we set $X = (X_1,X_2)^\T \in L^\infty(\R^N)$ with $X_1 = X_2$ and $\mathbb{P}\{X_1 = -\frac12\} = \mathbb{P}\{X_1=2\}=\frac12$. Taking $\bar{m} = (0,0)^\T$, it is clear that $\E[1 - e^{-\Lambda_1(X,\bar{m})}] = \E[1 - e^{-\Lambda_2(X,\bar{m})}] \approx 0.108 \geq 0$ whereas $\E[1 - e^{-\Lambda(X,\bar{m})}] \approx -0.368 < 0$.
\end{example}

\begin{corollary}\label{cor:acceptable}
Let $(\Lambda_i)_{i\in[N]}$ be a decomposition of $\Lambda$. Then, every $(\A,(\Lambda_i)_{i\in[N]})$-Nash allocation rule is acceptable.
\end{corollary}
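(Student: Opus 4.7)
The plan is to reduce this corollary to Proposition~\ref{prop:acceptable}. Fix an arbitrary $X \in L_\Lambda^\infty(\R^N)$ and a Nash allocation rule $r$. By the definition of a Nash allocation rule, $r(X) \in \R^N$ and $r_i(X) = \inf\{m_i \in \R \;|\; \Lambda_i(X,(m_i,r_{-i}(X))) \in \A\}$ for every $i \in [N]$. If we can show that this infimum is attained, that is, $\Lambda_i(X,r(X)) \in \A$ for every $i$, then Proposition~\ref{prop:acceptable} immediately delivers $\Lambda(X,r(X)) \in \A$; combined with $r(X) \in D(X)$, this gives $r(X) \in R(X)$, which is exactly acceptability.

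So the one step to justify is that the infimum defining $r_i(X)$ is attained. First I would observe that the acceptance set $\A$ is upward-closed with respect to a.s. inequality: monotonicity of $\rho$ implies that if $Y \in \A$ and $Y' \geq Y$ a.s., then $\rho(Y') \leq \rho(Y) \leq 0$. Combining this with monotonicity of $\Lambda_i$ in its second argument, for every $\varepsilon > 0$ there exists $\tilde m_i^\varepsilon \in [r_i(X), r_i(X)+\varepsilon)$ with $\Lambda_i(X,(\tilde m_i^\varepsilon, r_{-i}(X))) \in \A$, and consequently $\Lambda_i(X,(r_i(X)+\varepsilon, r_{-i}(X))) \in \A$ as well. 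Setting $m_i^n := r_i(X) + 1/n$ then produces a sequence in $\A$.

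Next I would pass to the limit. For each $\omega$, the sequence $n \mapsto \Lambda_i(X(\omega),(m_i^n, r_{-i}(X)))$ is decreasing by monotonicity of $\Lambda_i$ and, by its upper semicontinuity together with monotonicity, converges pointwise a.s. to $\Lambda_i(X,(r_i(X), r_{-i}(X))) = \Lambda_i(X,r(X))$ (the one-sided sandwich between monotonicity and upper semicontinuity pinning down the limit). Since $(r_i(X), r_{-i}(X)) \in D(X)$ — which follows because $D(X)$ is a rectangular upper set (Remark~\ref{rem:upperset}) whose $i$-th slice is closed and contains the sequence $m_i^n$ — Lemma~\ref{asmp:domain} ensures the limit is in $L^\infty(\R)$. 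Now we invoke Assumption~\ref{asmp:coherent}: $\rho$ is continuous from above, so the nonpositivity $\rho(\Lambda_i(X,(m_i^n,r_{-i}(X)))) \leq 0$ passes to $\rho(\Lambda_i(X,r(X))) \leq 0$, yielding $\Lambda_i(X,r(X)) \in \A$.

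The only mildly delicate point is the last one: verifying that the defining infimum is actually attained, which rests squarely on the weak*-closedness of $\A$ (equivalently, continuity from above of $\rho$) combined with the upper semicontinuity and monotonicity of $\Lambda_i$. Everything else is a direct appeal to the previously proved Proposition~\ref{prop:acceptable}.
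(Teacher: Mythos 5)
Your proof is correct and follows essentially the same route as the paper's: take a decreasing sequence $m_i^n \downarrow r_i(X)$ with $\Lambda_i(X,(m_i^n,r_{-i}(X)))\in\A$, use monotonicity and upper semicontinuity of $\Lambda_i$ together with continuity from above and monotonicity of $\rho$ to conclude $\Lambda_i(X,r(X))\in\A$, and then invoke Proposition~\ref{prop:acceptable}. Your explicit verification that $r(X)\in D(X)$ via the closed rectangular slices is a detail the paper leaves implicit, but it does not change the argument.
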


\begin{proof}
Let $r$ be an $(\A,(\Lambda_i)_{i\in[N]})$-Nash allocation rule. Let $X\in L^\infty_\Lambda(\R^N)$ and $i\in[N]$. Then, there exists a decreasing sequence $(m_i^n)_{n\in\N}$ in $\R$ that converges to $r_i(X)$ such that
\[
\rho(\Lambda_i(X,(m^n_i,r_{-i}(X)))) \leq 0
\]
for each $n\in\N$. 
By the monotonicity and continuity of $\Lambda_i$ on $\dom\Lambda_i$, we have
\[
\lim_{n\rightarrow\infty} \Lambda_i(X,(m^n,r_{-i}(X)))
\leq \Lambda_i(X,r(X))
\]
Then, since $\Lambda_i(X,(m^n,r_{-i}(X)))\downarrow \lim_{n\rightarrow\infty} \Lambda_i(X,(m^n,r_{-i}(X)))$, we have
\[
\rho(\Lambda_i(X,r(X)))\leq \rho\of{\lim_{n\rightarrow\infty} \Lambda_i(X,(m^n,r_{-i}(X)))}=\lim_{n\rightarrow\infty}\rho( \Lambda_i(X,(m^n,r_{-i}(X))))\leq 0,
\]
where we use the fact that $\rho$ is continuous from above (see Assumption~\ref{asmp:coherent}) and monotone. Therefore, $\Lambda_i(X,r(X))\in\A$. Following Proposition~\ref{prop:acceptable}, we obtain $\Lambda(X,r(X)) \in \A$ so that $r$ is acceptable.
\end{proof}

We wish to augment Proposition~\ref{prop:acceptable} to consider allocations when the joint distribution of the stress scenario $X$ may be unknown. Notably, in practice, estimating marginal distributions is a much simpler task than the full joint distribution, especially as the number of banks $N$ grows. While distribution estimation is not the focus of this work, the practicality of computing meaningful acceptable systemic allocations would require such considerations. The following result provides a method to find an acceptable allocation using only the marginal distributions when $\A$ is the acceptance set of a coherent optimized certainty equivalent risk measure; this result is adapted from \citet[Lemma D.12]{banerjee2022pricing}.

\begin{corollary}\label{cor:comonotonic}
Suppose that $\A$ is the acceptance set of an optimized certainty equivalent risk measure $\rho$, i.e., 
\[
\rho(Y) := \inf_{m \in \R} (m - \E[u(Y+m)])
\]
for $Y \in L^\infty(\R)$, where $u\colon \R \to \R$ is defined by
\[
u(t) = \begin{cases} \gamma_2 t &\text{if } t \leq 0, \\ \gamma_1 t &\text{if } t > 0,\end{cases}
\]
and $\gamma_2 \geq 1 \geq \gamma_1 \geq 0$. Moreover, suppose that, for every $m\in \R^N$, the function $x \mapsto \Lambda(x,m)$ is submodular on $\R^N$, i.e.,
\[
\Lambda(x\vee y,m)+\Lambda(x\wedge y,m)\leq \Lambda(x,m)+\Lambda(y,m)
\]
for every $x,y\in\R^N$, where $\vee$ and $\wedge$ denote coordinate-wise maximum and minimum operations, respectively. Let $(\Lambda_i)_{i\in[N]}$ be a decomposition of $\Lambda$. Let $X \in L^\infty(\R^N)$ be a random shock and denote by $Z$ a comonotonic copula of $X$, i.e.,
\[
Z \coloneqq \of{q_{X_1}(U),\ldots,q_{X_N}(U)}^{\T}
\]
for some random variable $U$ that is uniformly distributed on $[0,1]$, where $q_{X_1},\ldots,q_{X_N}$ are the right-continuous quantile functions of $X_1,\ldots,X_N$, respectively. Let $m\in D(Z)$.
If $\Lambda_i(Z,m) \in \A$ for every bank $i\in[N]$, then $\Lambda(X,m) \in \A$.
\end{corollary}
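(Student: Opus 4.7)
The plan is to apply Proposition~\ref{prop:acceptable} to the comonotonic copula $Z$ to obtain $\Lambda(Z,m)\in\A$, and then deduce $\Lambda(X,m)\in\A$ by showing $\rho(\Lambda(X,m))\leq \rho(\Lambda(Z,m))$. The latter will come from combining SSD-consistency of the OCE with a classical Lorentz--Tchen inequality for submodular functions of random vectors with fixed marginals.

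First I would verify the hypotheses so Proposition~\ref{prop:acceptable} can be applied to $Z$. Since $X$ and $Z$ share the same componentwise marginals, $\essinf X = \essinf Z$, and so Lemma~\ref{asmp:domain}(i) gives $D(X) = D(\essinf X) = D(\essinf Z) = D(Z)$; in particular $m\in D(Z)$. Because $\Lambda_i(Z,m)\in\A$ for every $i\in[N]$, Proposition~\ref{prop:acceptable} yields $\Lambda(Z,m)\in\A$, equivalently $\rho(\Lambda(Z,m))\leq 0$.

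Next I would establish that $\rho$ is second-order stochastic dominance (SSD) consistent, i.e., that $\E[\phi(Y^1)]\leq \E[\phi(Y^2)]$ for every increasing concave $\phi\colon\R\to\R$ implies $\rho(Y^1)\geq \rho(Y^2)$. This is a direct consequence of the OCE representation: since $u$ is itself increasing and concave (slopes $\gamma_2\geq \gamma_1$), for each $c\in\R$ the map $t\mapsto u(t+c)$ is increasing and concave, so $\E[u(Y^1+c)]\leq \E[u(Y^2+c)]$ and hence $c - \E[u(Y^1+c)]\geq c-\E[u(Y^2+c)]$; taking the infimum over $c$ preserves the inequality. I would then argue $\Lambda(Z,m)\leq_{\mathrm{icv}}\Lambda(X,m)$. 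Let $\phi\colon\R\to\R$ be any increasing concave function and set $f(x):=\Lambda(x,m)$, which is increasing by monotonicity of $\Lambda$ and submodular by hypothesis. I would show that $\phi\circ f$ is increasing and submodular. Monotonicity is immediate; for submodularity, given $x,y\in\R^N$ let $a:=f(x\wedge y)$, $d:=f(x\vee y)$, $b:=f(x)$, $c:=f(y)$, so that $a\leq b,c\leq d$ and $a+d\leq b+c$. Setting $d':=b+c-a\geq d$ and observing that $(a,d')$ majorizes $(b,c)$, Schur-concavity of $(s,t)\mapsto \phi(s)+\phi(t)$ combined with monotonicity of $\phi$ gives $\phi(a)+\phi(d)\leq \phi(a)+\phi(d')\leq \phi(b)+\phi(c)$, i.e., $\phi\circ f$ is submodular.

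The Lorentz--Tchen inequality then yields $\E[g(Z)]\leq \E[g(X)]$ for every increasing submodular $g\colon\R^N\to\R$ whenever $Z$ is a comonotonic copula of $X$ (both having the same marginals). Applying this with $g=\phi\circ f$ gives $\E[\phi(\Lambda(Z,m))]\leq \E[\phi(\Lambda(X,m))]$; since $\phi$ was an arbitrary increasing concave function, $\Lambda(Z,m)\leq_{\mathrm{icv}} \Lambda(X,m)$. Combining the SSD-consistency of $\rho$ with $\rho(\Lambda(Z,m))\leq 0$ produces $\rho(\Lambda(X,m))\leq \rho(\Lambda(Z,m))\leq 0$, i.e., $\Lambda(X,m)\in\A$. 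The main obstacle is the composition lemma asserting that $\phi\circ f$ inherits submodularity from $f$ when $\phi$ is increasing and concave; everything else is either a direct consequence of Proposition~\ref{prop:acceptable}, the OCE representation, or the standard Lorentz--Tchen/Tchen rearrangement inequality, which is what the result in~\citet[Lemma~D.12]{banerjee2022pricing} is based on.
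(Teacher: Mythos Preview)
Your argument is correct and follows the same route as the paper: apply Proposition~\ref{prop:acceptable} to $Z$, then deduce $\rho(\Lambda(X,m))\leq\rho(\Lambda(Z,m))$ via the Tchen-type inequality $\E[\psi(Z)]\leq\E[\psi(X)]$ for submodular $\psi$, using that an increasing concave function composed with an increasing submodular function is again submodular (the paper asserts this composition fact and applies it directly to $\psi=u(\Lambda(\cdot,m)+y)$ before taking the infimum over $y$, whereas you factor the same computation through SSD-consistency of the OCE). One minor omission: you do not verify that this piecewise-linear OCE satisfies Assumption~\ref{asmp:coherent} (coherence, continuity from above), which is what licenses Proposition~\ref{prop:acceptable}; the paper checks this first via the positive homogeneity of $u$.
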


\begin{proof}
The function $u$ is a utility function in the sense of \citet[Definition~2.1]{ben2007old} since it is concave, continuous, increasing, finitely valued with $u(0)=0$, and $1$ is a subgradient of $u$ at zero. Hence, $\rho$ is a convex risk measure by \citet[Theorem~2.1]{ben2007old}. Moreover, the positive homogeneity of $u$ implies that $\rho$ is also coherent. Suppose that $\Lambda_i(Z,m)\in\A$ for every $i\in[N]$. First, we utilize Proposition~\ref{prop:acceptable} to conclude that $\Lambda(Z,m) \in \A$. We claim that $\Lambda(X,m)\in\A$ as well. By \citet[Theorem 9.A.21]{shaked2007stochastic}, we have $\E[\phi(X)] \geq \E[\phi(Z)]$ for every submodular function $\phi\colon\R^N \to \R$. 
In fact $x \mapsto u(\Lambda(x,m)+y)$ is submodular as it is the composition of a concave and increasing function ($u(\cdot+y)$) and a submodular and increasing function ($\Lambda(\cdot,m)$) for every $m \in D(X)$ and $y \in \R$. 
Therefore, for every $y \in \R$, we have $y - \E[u(\Lambda(X,m)+y)] \leq y - \E[u(\Lambda(Z,m)+y)]$. 
In particular, by the construction of the optimized certainty equivalent, it immediately follows that $\rho(\Lambda(X,m)) \leq \rho(\Lambda(Z,m)) \leq 0$, i.e., $\Lambda(X,m) \in \A$.
\end{proof}

\subsection{Existence and Uniqueness}\label{sec:nash-exist}

In the above section, we take the existence of a Nash allocation rule for granted to prove that, for coherent risk measures that are continuous from above, any Nash allocation leads to an acceptable capital requirement. In the next two theorems, we consider the existence and uniqueness of Nash allocation rules based on the properties of the decomposition of the aggregation function.

\begin{theorem}[Existence]\label{thm:exist-unique}
Let $(\Lambda_i)_{i \in [N]}$ be a decomposition of $\Lambda$. Then, there exists a $(\A,(\Lambda_i)_{i \in [N]})$-Nash allocation rule $r\colon L_\Lambda^\infty(\R^N) \to \R^N$.
\end{theorem}

\begin{proof} 
Let $X \in L_{\Lambda}^\infty(\R^N)$ be a random shock. For each $m\in\R^N$ and $i\in[N]$, we define
\begin{equation}\label{eq:phi}
\phi^X_i(m) \coloneqq \inf\{r_i \in \R \; | \; \Lambda_i(X,(r_i,m_{-i})) \in \A\}.
\end{equation}
Clearly, a vector $r(X)\in\R^N$ satisfies the Nash allocation property in \eqref{eq:NE} if and only if it is a fixed point of $\phi^X$, i.e., $\phi^X(r(X))=r(X)$. We will show the existence of such a fixed point in several steps below. We will first construct a compact hyperrectangle and show that $\phi^X$ maps this hyperrectangle into itself. Then, we will show that all fixed points of $\phi^X$ belong to this hyperrectangle.

\begin{itemize}
\item \textbf{Construction of the hyperrectangle parameters:} Recall $D(X)=D(\essinf X)\neq\emptyset$ by Lemma~\ref{asmp:domain}(i). By self-feasibility, we can find $\bar{r}\in D(\essinf X)$ such that
\begin{equation}\label{eq:feas}
\Lambda_i(\essinf X,(\bar{r}_i,m_{-i})) > 0
\end{equation}
for every $m\in D(\essinf X)$ and $i\in[N]$. 
On the other hand, since $\bar{r}\in D(\essinf X)\subseteq D(\esssup X)$, we also have $D(\esssup X)\neq\emptyset$. Then, by self-infeasibility, we can find $\underline{r}\in D(\esssup X)$ such that
\begin{equation}\label{eq:infeas}
\Lambda_i(\esssup X,(\underline{r}_i-\varepsilon,m_{-i}))<0
\end{equation}
for every $m\in D(\esssup X)$, $i\in[N]$, and $\varepsilon>0$.

\item \textbf{Construction of the hyperrectangle:} Let us fix $m\in D(X)=D(\essinf X)$ and $i\in[N]$. Then, by monotonicity and \eqref{eq:feas},
\begin{align*}
\phi^X_i(m)&\leq \inf\{r_i\in\R\; | \; \Lambda_i(X,(r_i,m_{-i}))\geq 0\}\\
&\leq \inf\{r_i\in\R\; | \; \Lambda_i(\essinf X,(r_i,m_{-i}))\geq 0\}\leq \bar{r}_i.
\end{align*}
Let $\varepsilon>0$. By monotonicity and \eqref{eq:infeas},
\[
\Lambda_i(X,(\underline{r}_i-\varepsilon,m_{-i}))\leq \Lambda_i(\esssup X,(\underline{r}_i-\varepsilon,m_{-i}))<0 \text{ a.s.}
\]
Then, by the properties of $\rho$ as a coherent risk measure, we have
\[
\rho(\Lambda_i(X,(\underline{r}_{i}-\varepsilon,m_{-i})))\geq \rho(\Lambda_i(\esssup X,(\underline{r}_{i}-\varepsilon,m_{-i})))=-\Lambda_i(\esssup X,(\underline{r}_{i}-\varepsilon,m_{-i}))>0.
\]
Hence, $\Lambda_i(X,(\underline{r}_i-\varepsilon,m_{-i}))\notin \A$ and, by monotonicity, we have $\phi^X_i(m)\geq \underline{r}_i-\varepsilon$. Since $\varepsilon>0$ is arbitrary, we conclude that $\phi^X_i(m)\geq \underline{r}_i$. Therefore, $\phi^X(m)\in [\underline{r},\bar{r}]$ for every $m\in D(X)$. In particular, since $\bar{r}\in D(\essinf X)=D(X)$, we have $\phi^X(\bar{r})\in [\underline{r},\bar{r}]$. On the other hand, since $\phi^X$ is decreasing by construction, we have $m\leq \bar{r}$ implies $\phi^X(\bar{r})\leq \phi^X(m)$. We will work with the hyperrectangle $[\phi^X(\bar{r}),\bar{r}]$ to complete the proof of existence.

\item \textbf{Self-mapping property of $\phi^X$ on the hyperrectangle:} We claim that $\phi^X(\bar{r})\in D(X)$ so that $[\phi^X(\bar{r}),\bar{r}]\subseteq D(X)$. The rectangular structure of $D(X)=D(\essinf X)$ ensures that $D(X)=\bigtimes_{i=1}^N A_i$, where, for each $i\in[N]$, we have $A_i=\R$ or $A_i=[\bar{m}_i,+\infty)$ for some $\bar{m}_i\in\R$. Let us fix $i\in [N]$. We may write
\[
\phi^X_i(\bar{r})=\inf\{r_i\in A_i\mid \Lambda_i(X,(r_i,\bar{r}_{-i}))\in\A\}.
\]
Since $\phi^X_i(\bar{r})\geq \underline{r}_i>-\infty$ and $\rho$ is continuous from above, the above infimum is indeed a minimum so that $\phi^X_i(\bar{r})\in A_i$ as well. Hence, $\phi^X(\bar{r})\in D(\essinf X)=D(X)$. This finishes the proof of the claim.
Therefore, $\phi^X$ maps the compact hyperrectangle $[\phi^X(\bar{r}),\bar{r}]$ of $D(X)$ into itself.

\item \textbf{Inclusion of fixed points in the hyperrectangle:} Let $m\in\R^N$ be an arbitrary fixed point of $\phi^X$. Then, as argued for $\phi^X(\bar{r})$ above, the definition of $\phi^X(m)$ and the finiteness of $m$ guarantee that $\phi^X(m)\in D(X)$. Hence, $m=\phi^X(m)\in [\underline{r},\bar{r}]$. Then, since $\phi^X$ is decreasing, we have $\phi^X(\bar{r})\leq \phi^X(m)=m$. Hence, $m\in [\phi^X(\bar{r}),\bar{r}]$. This shows that every fixed point of $\phi^X$ belongs to $[\phi^X(\bar{r}),\bar{r}]$.

\item \textbf{Continuity of $\phi^X$ relative to the hyperrectangle:} Next, we prove that the restricted function $\phi^X\colon [\phi^X(\bar{r}),\bar{r}]\to [\phi^X(\bar{r}),\bar{r}]$ is continuous (relative to $[\phi^X(\bar{r}),\bar{r}]$). Let $i\in[N]$. Then, for each $m\in [\phi^X(\bar{r}),\bar{r}]$, we may write $\phi^X_i(m)=\inf\Phi^X_i(m)$, where
\[
\Phi^X_i(m):=\{r_i\in [\phi^X_i(\bar{r}),\bar{r}_i] \;|\; f_i^X(r_i,m_{-i})\leq 0\},\quad f_i^X(m):=\rho(\Lambda_i(X,m)).
\]
We claim that $f^X_i$ is a continuous convex function relative to $[\phi^X(\bar{r}),\bar{r}]$. By the convexity of $\rho$ and the concavity of $\Lambda_i$, the convexity of $f^X_i$ is immediate. Repeating the arguments above, it is easy to see that $f^X_i$ is also real-valued. Let $m\in [\phi^X(\bar{r}),\bar{r}]$ and take a sequence $(m^n)_{n\in\N}$ in $[\phi^X(\bar{r}),\bar{r}]$ that converges to $m$. Then, by monotonicity and continuity, $(\Lambda_i(X,m^n))_{n\in\N}$ is a bounded sequence in $L^\infty(\R)$ that converges to $\Lambda_i(X,m)$ a.s. By \citet[Theorem~4.33]{fs:sf}, Assumption~\ref{asmp:coherent} guarantees that $\rho$ has the Fatou property so that
\[
f^X_i(m)=\rho(\Lambda_i(X,m))\leq \liminf_{n\rightarrow\infty}\rho(\Lambda_i(X,m^n))=\liminf_{n\rightarrow\infty}f^X_i(m^n).
\]
Hence, $f^X_i$ is lower semicontinuous relative to $[\phi^X(\bar{r}),\bar{r}]$ at $m$. As $[\phi^X(\bar{r}),\bar{r}]$ is a polyhedral set, \citet[Theorem~10.2]{rockafellar} implies that $f^X_i$ is continuous relative to $[\phi^X(\bar{r}),\bar{r}]$, as claimed. In particular, $\Phi^X_i(m)$ is a nonempty compact set and $\phi^X_i(m)=\min\Phi^X_i(m)$ for each $m\in [\phi^X(\bar{r}),\bar{r}]$. Once we prove that the set-valued function $\Phi^X_i$ is upper and lower hemicontinuous relative to $[\phi^X(\bar{r}),\bar{r}]$, Berge maximum theorem (see \citet[Theorem~17.31]{aliprantis-border}) guarantees the continuity of $\phi^X_i$ relative to $[\phi^X(\bar{r}),\bar{r}]$.

Let us fix a capital level $m\in [\phi^X(\bar{r}),\bar{r}]$.
\begin{itemize}
\item \textbf{Upper hemicontinuity of $\phi^X$:} To prove relative upper hemicontinuity at $m$, let $(m^n)_{n\in\N}$ be a sequence in $[\phi^X(\bar{r}),\bar{r}]$ that converges to $m$. For each $n\in\N$, let $r_i^n\in \Phi^X_i(m^n)$. Since $(r_i^n)_{n\in\N}$ is a bounded sequence, it has a convergent subsequence $(r_i^{n_k})_{k\in\N}$. Let $r_i\in [\phi^X_i(\bar{r}),\bar{r}_i]$ be its limit. Then, by the continuity of $f^X_i$ established above, 
\[
f^X_i(r_i,m_{-i})= \lim_{k\rightarrow\infty} f^X_i(r_i^{n_k},m^{n_k}_{-i})\leq 0. 
\]
Hence, $r_i\in\Phi^X_i(m)$, which proves that $\phi^X_i$ is upper hemicontinuous relative to $[\phi^X(\bar{r}),\bar{r}]$ at $m$ by \citet[Theorem~17.20]{aliprantis-border}.

\item \textbf{Lower hemicontinuity of $\phi^X$:} To prove relative lower hemicontinuity at $m$, first note that, by monotonicity and \eqref{eq:feas},
\begin{align*}
f^X_i(\bar{r}_i,m_{-i})&=\rho(\Lambda_i(X, (\bar{r}_i,m_{-i})))\\
&\leq \rho(\Lambda_i(\essinf X, (\bar{r}_i,m_{-i})))=-\Lambda_i(\essinf X, (\bar{r}_i,m_{-i}))<0.
\end{align*}
Since $f^X_i(\cdot,m_{-i})$ is a continuous function relative to $[\phi^X_i(\bar{r}),\bar{r}_i]$, there exists $r^m_i\in (\phi^X_i(\bar{r}),\bar{r}_i)$ such that $f^X_i(r^m_i,m_{-i})<0$. Note that the extension of the function $f^X_i(\cdot,m_{-i})$ to $\R$ by setting its value as $+\infty$ outside $[\phi^X_i(\bar{r}),\bar{r}_i]$ is a convex function. Then, by \citet[Proposition~2.34]{rockafellarwets}, we obtain
\[
\interior \Phi^X_i(m)=\{r_i\in (\phi^X_i(\bar{r}),\bar{r}_i)\mid f^X_i(r_i,m_{-i})<0\}.
\]
Hence, $r_i^{m}\in \interior \Phi^X_i(m)$ so that $\interior \Phi^X_i(m)\neq\emptyset$.

Let us fix an arbitrary point $r_i\in \interior \Phi^X_i(m)$. In particular, $f^X_i(r_i,m_{-i})<0$. By the continuity of $f^X_i$ relative to $[\phi^X(\bar{r}),\bar{r}]$, there exist relatively open sets $U\subseteq [\phi^X_i(\bar{r}),\bar{r}_i]$ and $V\subseteq [\phi^X(\bar{r}),\bar{r}]$ such that $(r_i,m)\in U\times V$ and
\begin{equation}\label{eq:intpt}
f^X_i(\hat{r}_i,\hat{m}_{-i})<0
\end{equation}
for every $(\hat{r}_i,\hat{m})\in U\times V$. In particular, $V=\tilde{V}\cap [\phi^X(\bar{r}),\bar{r}]$ for some open neighborhood $\tilde{V}\subseteq \R^N$ of $m$. On the other hand, since $r_i\in \interior \Phi^X_i(m)\subseteq (\phi^X_i(\bar{r}),\bar{r}_i)$, we have $U\subseteq (\phi^X_i(\bar{r}),\bar{r}_i)$ so that $U$ is open in $\R$. Hence, $\tilde{V}\times U$ is an open neighborhood of $(m,r_i)$ and, by \eqref{eq:intpt}, we have $(\tilde{V}\cap [\phi^X(\bar{r}),\bar{r}]) \times U= V\times U\subseteq \gr \Phi^X_i$, where $\gr \Phi^X_i:=\{(\hat{m},\hat{r}_i)\mid \hat{r}_i\in\Phi^X_i(\hat{m})\}$. Since $\Phi^X_i$ has convex (interval) values, by \citet[Theorem~5.9(a)]{rockafellarwets}, we conclude that $\Phi^X_i$ is lower hemicontinuous relative to $[\phi^X(\bar{r}),\bar{r}]$ at $m$.
\end{itemize}
Hence, the proof of the continuity of $\phi^X_i$ (hence, of $\phi^X$) relative to $[\phi^X(\bar{r}),\bar{r}]$ is complete.

\item \textbf{Existence of a fixed point:} Since $[\phi^X(\bar{r}),\bar{r}]$ is a nonempty convex compact set in $\R^N$, by Brouwer fixed point theorem (see \citet[Theorem~17.56]{aliprantis-border}), $\phi^X$ has a fixed point $r(X)\in [\phi^X(\bar{r}),\bar{r}]$.
\end{itemize}
Hence, the proof of the existence of a Nash allocation rule $r\colon L^\infty_\Lambda(\R^N)\to\R^N$ is complete.
\end{proof}

To obtain the uniqueness of a Nash allocation rule in the next theorem, we will assume that the decomposition of the aggregation function is self-preferential.

\begin{theorem}[Uniqueness]\label{thm:exist-unique:2} 
Let $(\Lambda_i)_{i \in [N]}$ be a decomposition of $\Lambda$. Suppose that $(\Lambda_i)_{i \in [N]}$ is self-preferential with constant $L < \frac{1}{N-1}$. Then, there exists a unique $(\A,(\Lambda_i)_{i \in [N]})$-Nash allocation rule.
\end{theorem}

\begin{proof}
Let $X\in L^\infty(\R^N)$ be a random shock and let $\phi^X$ be the function defined by \eqref{eq:phi} in the proof of Theorem~\ref{thm:exist-unique}. We proceed in two steps.
\begin{itemize}
\item \textbf{Lipschitz continuity of $\phi^X$:} We first prove that $\phi^X$ is Lipschitz continuous on $D(X)$ with constant $L(N-1)$ with respect to the $\ell^\infty$-norm on $\R^N$ by assuming that $(\Lambda_i)_{i \in [N]}$ is self-preferential with constant $L\in [0,1)$. Let us fix two capital levels $m,m^\prime \in D(X)$ and a bank $i \in [N]$. Without loss of generality, we assume that $\phi^X_i(m) \leq \phi^X_i(m^\prime)$. We claim that
\begin{equation}\label{eq:lipschitz}
\phi^{X}_i(m^\prime) \leq \phi^{X}_i(m) + L\sum_{j \in [N]\setminus\{i\}} |m_j - m^\prime_j|.
\end{equation}
To that end, let us fix $r_i\in\R$ such that $\Lambda_i(X,(r_i,m_{-i}))\in\A$ and show that
\[
\phi^{X}_i(m^\prime) \leq r_i + L\sum_{j \in [N]\setminus\{i\}} |m_j - m^\prime_j|.
\]
By the definition of $\phi^X_i(m^\prime)$, this inequality follows once we establish that
\begin{equation}\label{eq:lipschitz2}
\Lambda_i\of{X,\of{r_i+L\sum_{j\in[N]\setminus\{i\}} |m_j-m^\prime_j| , m^\prime_{-i}}} \in \A.
\end{equation}
Let us define $m^{\prime\prime}\in\R^N$ by $m^{\prime\prime}_j:=m^\prime_j+|m_j-m^\prime_j|$ for each $j\in[N]$. Using the monotonicity and self-preferential property (for each $j\in[N]\setminus\{i\}$) of $\Lambda_i$, we get
\[
\Lambda_i(X,(r_i,m_{-i})) \leq \Lambda_i(X,(r_i,m^{\prime\prime}_{-i}))
    \leq \Lambda_i\of{X,\of{r_i+L\sum_{j \in[N]\setminus\{i\}} |m_j-m^\prime_j|,m^\prime_{-i}}}.
\]
Since we have $\Lambda_i(X,(r_i,m_{-i}))\in\A$ by assumption, we get \eqref{eq:lipschitz2} and hence \eqref{eq:lipschitz}.
As the latter holds for every $i\in[N]$, it follows that
\begin{align*}
\max_{i\in[N]} |\phi^{X}_i(m)-\phi^{X}_i(m^\prime)| &\leq L \max_{i\in[N]} \sum_{j\in[N]\setminus\{i\}} |m_j-m^\prime_j|\\ 
\qquad &\leq (N-1) L \max_{i\in[N]} \max_{j \in[N]\setminus\{i\}} |m_j-m^\prime_j| = (N-1) L \max_{i \in [N]} |m_i - m^\prime_i|
\end{align*}
for every $m,m^\prime\in D(X)$. Hence, $\phi^X$ is Lipschitz continuous on $D(X)$ with constant $L(N-1)$. 

\item \textbf{Uniqueness of a fixed point:} To finish the uniqueness proof, let us suppose that $L<\frac{1}{N-1}$. Then, $\phi^X$ is Lipschitz continuous on $D(X)$ with constant $L(N-1)<1$ so that it is a strict contraction. Hence, it has a unique fixed point $r(X)$ in the complete metric space $[\phi^X(\bar{r}),\bar{r}]\subseteq D(X)$ by Banach fixed point theorem, where $[\phi^X(\bar{r}),\bar{r}]$ is the compact hyperrectangle constructed in the proof of Theorem~\ref{thm:exist-unique}; see \eqref{eq:feas}. As argued in that proof, every fixed point of $\phi^X$ belongs to this hyperrectangle. Hence, $\phi^X$ has a unique fixed point.
\end{itemize}
Thus, the uniqueness of a Nash allocation rule follows.
\end{proof}

\begin{remark}\label{rem:N=2}
Following Theorem~\ref{thm:exist-unique:2}, in the setting of $N = 2$ banks, there always exists a unique $(\A,(\Lambda_1,\Lambda_2))$-Nash allocation rule for a self-preferential decomposition $(\Lambda_1,\Lambda_2)$ of $\Lambda$. However, as the number of banks $N$ in the system grow, the uniqueness condition of Theorem~\ref{thm:exist-unique:2} becomes more restrictive.
\end{remark}

\subsection{Application to Single-Element Aggregation Functions}\label{sec:single-element}

In this section, we consider applications of Theorems~\ref{thm:exist-unique},~\ref{thm:exist-unique:2} to the specific systemic risk measure structures of Example~\ref{ex:syst}. In so doing, we propose decomposition structures that can be used for different systemic risk measure structures based on single-element aggregation functions, e.g., the ones provided in Example~\ref{ex:se-aggregation}.

Throughout this section, we fix a single-element aggregation function $\bar\Lambda\colon \R^N\to \R\cup\{-\infty\}$. Similar to \eqref{eq:decomposition}, we assume that $\bar\Lambda$ has the form
\begin{equation}\label{eq:singledec}
\bar\Lambda(x)=\sum_{i\in[N]}\bar\Lambda_i(x)
\end{equation}
for every $x\in \R^N$, where $\bar\Lambda_1,\ldots,\bar\Lambda_N\colon \R^N\to \R\cup \{-\infty\}$ are single-element aggregations functions with common domain, i.e., $\dom\bar\Lambda_i=\dom\bar\Lambda$ for each $i\in[N]$. We refer to $(\bar\Lambda_i)_{i\in[N]}$ as a \textbf{single-element decomposition} of $\bar\Lambda$.

\begin{example}\label{ex:decomposition}
Let us revisit the single-element aggregation functions in Example~\ref{ex:se-aggregation} and propose single-element decompositions for them.
\begin{enumerate}
\item\label{util2} \textbf{Sum of univariate utilities:} For the single-element aggregation function $\bar\Lambda^U$ in Example~\ref{ex:se-aggregation}\eqref{util} and Example~\ref{ex:sen-ut}, for each $i \in [N]$ and $x\in\R^N$, let $\bar\Lambda_i^U(x)\coloneqq u_i(x_i)$ if $x\in \dom\bar\Lambda^U$ and $\bar\Lambda_i^U(x)\coloneqq -\infty$ otherwise. Clearly, $\bar\Lambda_1^U,\ldots,\bar\Lambda_N^U$ are single-element aggregation functions with common domain $\dom\bar\Lambda^U$. Moreover, we have $\sum_{i\in [N]}\bar\Lambda^U_i(x)=\bar\Lambda^U(x)$ for every $x\in\R^N$. Hence, $(\bar\Lambda^U_i)_{i\in[N]}$ is a single-element decomposition of $\bar\Lambda^U$.
\item\label{mf2} \textbf{Mean-field utility:} Consider now the single-element aggregation function $\bar\Lambda^{MF}$ in Example~\ref{ex:se-aggregation}\eqref{mf}. Herein, we assume that $(u_i)_{i \in [N]},\bar u$ are twice continuously differentiable univariate utility functions with domain $\R_+$ for simplicity.
Following the decomposition approach in Remark~\ref{rem:select} and Appendix~\ref{app:nash-decomposition}, we set $\bar\lambda_i(x) := u_i(x_i)$ for every $x \in \R^N$ as with the sum of univariate utilities. However, in contrast to that simpler aggregation function, with the systemic externalities, we retain a remainder term $\bar\Delta(x) := \bar u(\frac{1}{N} \sum_{i \in [N]} x_i)$ for this aggregation function. As proven in Appendix~\ref{app:mf-decomp}, the minimal decomposition, i.e., the one with the minimal self-preferential constant is $\bar\delta_i(x) := \frac{1}{N}\bar u(\frac{1}{N} \sum_{j \in [N]} x_j)$ for all banks $i$. In this way, we can consider the single-element decomposition $\bar\Lambda_i^{MF}(x) := u_i(x_i) + \frac{1}{N}\bar u(\frac{1}{N} \sum_{j \in [N]} x_j)$ for every bank $i$ for the purposes of this section.
\end{enumerate}
\end{example}

\subsubsection{Insensitive Systemic Risk Measures}\label{sec:nash-insensitive}
Throughout this section, we consider the insensitive systemic risk measures discussed in Example~\ref{ex:syst}. To that end, corresponding to the single-element aggregation function $\bar\Lambda\colon\R^N\to\R\cup\{-\infty\}$, we define the insensitive aggregation function $\Lambda^I\colon\R^N\times\R^N\to\R\cup\{-\infty\}$ by \eqref{eq:lambda-ins}:
\[
\Lambda^I(x,m)\coloneqq \bar\Lambda(x)+\sum_{i\in [N]}m_i
\]
for each $x,m\in\R^N$. Recall that we have $L^\infty_{\Lambda^I}(\R^N)=L^\infty(\dom\bar\Lambda)$ in this setting. 
The following corollary characterizes the \emph{unique} Nash allocation rule that exists for any insensitive systemic risk measure.

\begin{corollary}\label{cor:insensitive}
Let $(\bar\Lambda_i)_{i\in[N]}$ be a single-element decomposition of $\bar\Lambda$.
For each $i\in[N]$, define a function $\Lambda^I_i\colon \R^N\times\R^N\to\R\cup\{-\infty\}$ by
\[
\Lambda^I_i(x,m)\coloneqq \bar\Lambda_i(x)+m_i
\]
for every $x,m\in \R^N$. Then, $(\Lambda_i^I)_{i\in[N]}$ is a self-preferential decomposition of $\Lambda^I$ with $L = 0$.
Moreover, the unique $(\A,(\Lambda_i^I)_{i\in [N]})$-Nash allocation rule $r^I\colon L^\infty(\dom\bar\Lambda)\to\R^N$ is given by
\[
r^I(X):=\of{\rho(\bar\Lambda_1(X)),\ldots,\rho(\bar\Lambda_N(X))}^\T
\]
for every $X\in L^\infty(\dom\bar\Lambda)$.
\end{corollary}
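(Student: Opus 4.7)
The plan is to check that $(\Lambda_i^I)_{i\in[N]}$ is a self-preferential decomposition of $\Lambda^I$ with $L=0$, invoke Theorem~\ref{thm:exist-unique}\eqref{thm:exist-unique:2} for existence and uniqueness, and then exploit the fact that the Nash best-response equations decouple across banks in order to read off the explicit formula for $r^I$ from Proposition~\ref{prop:riskmsr-primal}.

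First I would verify that each $\Lambda_i^I$ is itself an aggregation function: it is increasing (sum of an increasing function of $x$ and an increasing function of $m$), upper semicontinuous (sum of u.s.c.\ functions), and has $\dom\Lambda_i^I=\dom\bar\Lambda\times\R^N$, whose $x$-sections equal $\dom\bar\Lambda\in\mathcal{R}(\R^N;\R^N_+)$ (by Remark~\ref{rem:single}) and whose $m$-sections equal $\R^N\in\mathcal{R}(\R^N;\R^N_+)$. This simultaneously gives the consistent-domain property $\dom\Lambda_i^I=\dom\Lambda^I$. The domination property reduces to $\sum_{i\in[N]}\bar\Lambda_i(x)\leq\bar\Lambda(x)$, which holds by Definition~\ref{defn:decsingle}.

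For self-feasibility and self-infeasibility, observe that for any $\bar x$ with $D(\bar x)\neq\emptyset$ we have $\bar x\in\dom\bar\Lambda$, so $\bar\Lambda_i(\bar x)\in\R$ for each $i\in[N]$ and $D(\bar x)=\R^N$. Taking $\bar r_i:=-\bar\Lambda_i(\bar x)$ gives $\Lambda_i^I(\bar x,(\bar r_i,m_{-i}))=0\geq 0$ for every $m\in\R^N$ and $i\in[N]$; taking $\underline r_i:=-\bar\Lambda_i(\underline x)$ gives $\Lambda_i^I(\underline x,(\underline r_i-\varepsilon,m_{-i}))=-\varepsilon<0$ for every $\varepsilon>0$. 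For the self-preferential property with $L=0$, note that $\Lambda_i^I(x,m)=\bar\Lambda_i(x)+m_i$ depends on $m$ only through the $i$-th coordinate, so for $j\neq i$ and $\delta\geq 0$ we have
\[
\Lambda_i^I(x,m+\delta e_j)=\bar\Lambda_i(x)+m_i=\Lambda_i^I(x,m+0\cdot\delta\, e_i),
\]
and the defining inequality holds with equality.

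Since $0<\frac{1}{N-1}$ for all $N\geq 2$ (and the claim is trivial for $N=1$), Theorem~\ref{thm:exist-unique}\eqref{thm:exist-unique:2} delivers the unique $(\A,(\Lambda_i^I)_{i\in[N]})$-Nash allocation rule $r^I$. To identify it, I would observe that the fixed-point condition of Definition~\ref{defn:nash-allocation} becomes
\[
r^I_i(X)=\inf\{m_i\in\R\; | \; \bar\Lambda_i(X)+m_i\in\A\},
\]
which does not depend on $r^I_{-i}(X)$; hence the $N$ equations decouple. For $X\in L^\infty(\dom\bar\Lambda)$, Lemma~\ref{asmp:domain}(ii) applied to the aggregation function $(x,m)\mapsto\bar\Lambda_i(x)$ ensures $\bar\Lambda_i(X)\in L^\infty(\R)$, and Proposition~\ref{prop:riskmsr-primal} then identifies the right-hand side as $\rho(\bar\Lambda_i(X))$, yielding the claimed formula. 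The only mild subtlety, and the step I would be most careful about, is keeping the sign conventions straight in the self-feasibility/self-infeasibility witnesses; beyond that, the proof is essentially a bookkeeping exercise because the additive separability of $\Lambda^I$ in $m$ makes the Nash game trivial.
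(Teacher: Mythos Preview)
Your proposal is correct and follows essentially the same approach as the paper: verify that $(\Lambda_i^I)_{i\in[N]}$ is a self-preferential decomposition with $L=0$, apply Theorem~\ref{thm:exist-unique}\eqref{thm:exist-unique:2} for existence and uniqueness, and then read off the explicit formula via Proposition~\ref{prop:riskmsr-primal} using the separability in $m$. The paper simply declares the decomposition properties ``easy to verify,'' whereas you spell out the self-feasibility, self-infeasibility, and self-preferential witnesses explicitly; the remaining computation of $r^I_i(X)=\rho(\bar\Lambda_i(X))$ is identical.
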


\begin{proof}
It is easy to verify that $(\Lambda_i^I)_{i\in[N]}$ is a self-preferential decomposition of $\Lambda^I$ with $L = 0$. Then, by Theorem~\ref{thm:exist-unique:2}, there exists a unique Nash allocation rule $r^I\colon L^\infty(\dom\bar\Lambda) \to \R^N$.
It remains to show that $r^I$ has the form provided in the statement of the corollary.
Let $X\in L^\infty(\dom\bar\Lambda)$. Then, a $(\A,(\Lambda_i^I)_{i\in [N]})$-Nash allocation rule $r^I\colon L^\infty(\dom\bar\Lambda)\to\R^N$ must satisfy
\begin{align*}
r^I_i(X)&=\inf\cb{m_i\in\R\; | \; \Lambda^I_i(X,(m_i,r^I_{-i}(X)))\in\A}\\
&=\inf\cb{m_i\in\R\; | \; \bar\Lambda_i(X)+m_i\in\A}=\rho(\bar\Lambda_i(X)),
\end{align*}
which follows trivially from the primal representation of the risk measure $\rho$ (Proposition~\ref{prop:riskmsr-primal}) and the definition of $\Lambda_i^I$.
\end{proof}

\subsubsection{Sensitive Systemic Risk Measures}\label{sec:nash-sensitive}

We now turn our attention to the more complex interconnections within the sensitive systemic risk measures, that is, we consider the sensitive aggregation function $\Lambda^S\colon\R^N\times\R^N\to\R\cup\{-\infty\}$ defined by \eqref{eq:lambda-sen}:
\[
\Lambda^S(x,m)\coloneqq \bar\Lambda(x+m)
\]
for each $x,m\in \R^N$. Recall that we have $L^\infty_{\Lambda^S}(\R^N)=L^\infty(\R^N)$ in this setting.

\begin{corollary}\label{cor:sensitive}
Let $(\bar\Lambda_i)_{i \in [N]}$ be a single-element decomposition of $\bar\Lambda$ that satisfies the following properties:
\begin{enumerate}
\item \textbf{Self-feasible:} There exists $\bar{x} \in \dom\bar\Lambda$ such that $\bar\Lambda_i(\bar{x}_i,x_{-i}) > 0$ for every $x \in \dom\bar\Lambda$ and $i \in [N]$.
\item \textbf{Self-infeasible:} There exists $\underline{x} \in \dom\bar\Lambda$ such that $-\infty \leq \bar\Lambda_i(\underline{x}_i-\varepsilon,x_{-i}) < 0$ for every $x \in \dom\bar\Lambda$, $i \in [N]$, and $\varepsilon > 0$.
\end{enumerate}
For each $i\in [N]$, define a function $\Lambda^S_i\colon \R^N\times\R^N\to \R\cup\{-\infty\}$ by
\begin{equation}\label{eq:sen-dec}
\Lambda^S_i(x,m)\coloneqq \bar\Lambda_i(x+m)
\end{equation}
for every $x,m\in\R^N$. Then, there exists a Nash allocation rule $r^S\colon L^\infty(\R^N)\to\R^N$.
Moreover, if there exists a constant $L\in [0,1)$ such that, for every $i,j\in[N]$ with $i \neq j$, $x\in\dom\bar\Lambda$, and $\alpha\geq 0$, it holds $\bar\Lambda_i(x+L\alpha e_i) \geq \bar\Lambda_i(x+\alpha e_j)$, then $(\Lambda^S_i)_{i \in [N]}$ is a self-preferential decomposition of $\Lambda^S$ and admits a unique Nash allocation rule $r^S$ if $L < \frac{1}{N-1}$.
\end{corollary}
\begin{proof}
Immediately, we observe that $\Lambda^S_1,\ldots,\Lambda^S_N$ are aggregation functions, and $(\Lambda^S_i)_{i\in[N]}$ satisfies \eqref{eq:decomposition} and the consistent domain property. For existence, following Theorem~\ref{thm:exist-unique}, it remains to show that self-feasibility and self-infeasibility hold.
\begin{itemize}
\item \textbf{Self-feasible:} Let $x \in \R^N$ with $D(x) \neq \emptyset$ and define $\bar{r} := \bar{x} - x$. We have $\bar{r}\in D(x)$ since, by construction, $x + \bar{r}= \bar{x} \in \dom\bar\Lambda$. Let $m \in D(x)$, i.e., $x + m \in \dom\bar\Lambda$. Then, for every $i\in[N]$, we have
\[
\Lambda_i^S(x,(\bar r_i , m_{-i})) = \bar\Lambda_i(x+(\bar r_i,m_{-i})) = \bar\Lambda_i(\bar x_i , x_{-i} + m_{-i}) > 0.
\]
\item \textbf{Self-infeasible:} Let $x \in \R^N$ with $D(x) \neq \emptyset$ and define $\underline{r}:= \underline{r} - x$. We have $\underline{r} \in D(x)$ since, by construction, $x + \underline{r} = \underline{x} \in \dom\bar\Lambda$. Let $m \in D(x)$, i.e., $x + m \in \dom\bar\Lambda$. Then, for every $i\in[N]$ and $\varepsilon > 0$, we have
\[
\Lambda_i^S(x,(\underline r_i-\varepsilon,m_{-i})) = \bar\Lambda_i(x+(\underline r_i-\varepsilon,m_{-i})) = \bar\Lambda_i(\underline x_i-\varepsilon , x_{-i} + m_{-i}) < 0.
\]
\end{itemize}
Then, by Theorem~\ref{thm:exist-unique}, the existence of a Nash allocation rule follows. Furthermore, it is easy to verify that the proposed condition $\bar\Lambda_i(x+L\alpha e_i) \geq \bar\Lambda_i(x+\alpha e_j)$ for every $i,j \in [N]$ with $i \neq j$, $x \in \dom\bar\Lambda$, and $\alpha \geq 0$ implies $(\Lambda_i^S)_{i \in [N]}$ is self-preferential with constant $L$. Then, by Theorem~\ref{thm:exist-unique:2}, the uniqueness of a Nash allocation rule follows.
\end{proof}

As opposed to the insensitive systemic risk measures (Section~\ref{sec:nash-insensitive}), the Nash allocations for sensitive systemic risk measures strongly depend on the aggregation function and its chosen decomposition. To demonstrate how the Nash allocations fit within the scheme of sensitive systemic risk measures, we will revisit the single-element aggregation functions and decompositions from Example~\ref{ex:decomposition}.

\begin{example}\label{ex:nash}
We consider the single-element aggregation functions and their associated decompositions in Example~\ref{ex:decomposition}. 
\begin{enumerate}
\item\label{ex:nash-utility} \textbf{Sum of univariate utilities:} Let us take $\bar\Lambda=\bar\Lambda^U$ and $\bar\Lambda_i=\bar\Lambda^U_i$ for each $i\in[N]$ as defined in Example~\ref{ex:decomposition}\eqref{util2}. Under the assumption that $\{x_i \in \dom u_i \; | \; u_i(x_i) \geq 0\} \not\in \{\emptyset,\R\}$ for each $i \in [N]$, it is easy to verify the properties in Corollary~\ref{cor:sensitive}. Hence, the corresponding collection $(\Lambda^{S}_i)_{i\in[N]}$ defined by \eqref{eq:sen-dec} is a self-preferential decomposition of the sensitive aggregation function $\Lambda^S$, as was also argued directly in Example~\ref{ex:sen-ut}. 
As with the insensitive systemic risk measures (Section~\ref{sec:nash-insensitive}), there are no interactions between the capital requirements of the banks under this decomposition. Therefore, similar to the proof of Corollary~\ref{cor:insensitive}, it follows that there exists a \emph{unique} $(\A,(\Lambda_i^S)_{i \in [N]})$-Nash allocation rule $r^S\colon L^\infty(\R^N)\to\R^N$ given by 
\[
r_i^S(X) = \inf\{m_i \in \R \; | \; u_i(X_i+m_i) \in \A\}
\]
for each $X\in L^\infty(\R^N)$ and $i\in[N]$.
\item\label{nash-mf} \textbf{Mean-field utility:} Let us take $\bar\Lambda = \bar\Lambda^{MF}$ and $\bar\Lambda_i = \bar\Lambda_i^{MF}$ for each $i \in [N]$ as defined in Example~\ref{ex:decomposition}\eqref{mf2}. Recall that, for simplicity, herein we consider $\dom\bar\Lambda = \R^N_+$. It is easy to verify that existence of a Nash allocation rule follows from Corollary~\ref{cor:sensitive} applies if there exists some $\bar x \in \R^N_+$ such that $u_i(\bar x_i) \geq -\frac{1}{N}\bar u(0)$, e.g., if $u_i$ is unbounded, for every $i \in [N]$. 
Because of the possible interactions between the firms in this case, it is no longer guaranteed that there exists a unique Nash allocation rule. However, despite the lack of a general theoretical guarantee, we find a unique Nash allocation for the following practical setting with $N = 2$. Let us take $u_1 = u_2 := \log(\varepsilon + \cdot)$ and $\bar u := \lambda\log(\varepsilon + \cdot)$ with domains $\R_+$ for some $\varepsilon \in (0,1)$ and $\lambda > 0$. Then, every $(\A,(\Lambda_i^S)_{i \in [N]})$-Nash allocation rule $r^S$ is uniquely evaluated at $X = 0$ a.s.\ as $r_1^S(0) = r_2^S(0) = 1-\varepsilon$.
\end{enumerate}
\end{example}

Despite the prior examples admitting unique Nash allocation rules, it is possible to construct self-preferential decompositions with multiple Nash allocation rules. In the following example we find a continuum of Nash allocations, which highlights the importance of choosing the decomposition appropriately. We note that though this counterexample satisfies the definition of a decomposition, it does not follow the construction of the minimal decomposition as presented in Appendix~\ref{app:nash-decomposition}.

\begin{example}\label{ex:nonunique}
Consider a system with $N = 3$ banks and a weighted summation aggregation function $\bar\Lambda\colon \R^N \to \R \cup \{-\infty\}$ defined by $\bar\Lambda(x) \coloneqq 1.5 x_1 + 1.5 x_2 + 2.25 x_3 - 3$ if $x \in \R^N_+$ and $\bar\Lambda(x)\coloneqq -\infty$ otherwise. This can be considered as a special case of sum of univariate utilities decomposed as $\bar\Lambda^U_1(x)=u_1(x_1)=1.5x_1-1$, $\bar\Lambda^U_2(x)=u_2(x_2)=1.5x_2-1$, $\bar\Lambda^U_3(x)=u_3(x_3)=2.25x_3-1$ for $x\in \R^N_+$ and $\bar\Lambda^U_1(x)=\bar\Lambda^U_2(x)=\bar\Lambda^U_3(x)=-\infty$ otherwise. This decomposition is associated with a unique associated Nash allocation rule as presented in Example~\ref{ex:nash}\eqref{ex:nash-utility}. Let us consider instead the alternative decomposition $\bar\Lambda_i(x) = a_i^\T x - 1$ for every $i\in[N]$ with $a_1 \coloneqq (1 , 0 , 0.75)^\T$, $a_2\coloneqq (0 , 1 , 0.75)^\T$, $a_3 \coloneqq (0.5 , 0.5 , 0.75)^\T$. In this case, for each $t\in [0,1]$, there exists a $(\A,(\Lambda_i^{S})_{i \in [N]})$-Nash allocation rule $r^S$ such that $r^S(0)=(1-t,1-t,4t/3)^\T$. Notably, in comparison to Example~\ref{ex:nash}\eqref{ex:nash-utility} above, there are infinitely many Nash allocation rules with this decomposition. As a consequence, it is clear that the choice of the decomposition can greatly influence the set of Nash allocation rules in practice.
\end{example}

\section{Case Study: Eisenberg-Noe Aggregation}\label{sec:EN}

For the remainder of the paper, we will consider an aggregation function derived from the Eisenberg-Noe clearing payments~\citep{eisenberg2001systemic}. In contrast to the aggregate utility functions which motivated our proposed structure, the Eisenberg-Noe aggregation function has complex dependencies between the banks within its natural decomposition. As far as the authors are aware, the Eisenberg-Noe aggregation function was first proposed in~\cite{feinstein2017measures}, though it is a special case of the optimization aggregation function in \citet[Example~4]{chen2013axiomatic} and is closely related to the contagion aggregation function in \citet[Example~7]{chen2013axiomatic}.

\subsection{Eisenberg-Noe Aggregation and Decomposition}\label{sec:EN-background}

Consider a system of $N$ banks that are interconnected via payment obligations. For each $i,j\in[N]$, we let $\bar p_{ij} \geq 0$ denote the obligations that bank $i$ owes to bank $j$; we set $\bar p_{ii} = 0$ to avoid self-dealing. In addition, each bank $i$ also has an obligation $\bar p_{i0} > 0$ external to the financial system; we often refer to these obligations as those due to ``society''. The Eisenberg-Noe clearing system~\citep{eisenberg2001systemic} assumes that these obligations are repaid proportionally. Therefore, from this construction, we set $\pi_{ij} = \bar p_{ij} / \sum_{k = 0}^N \bar p_{ik}$ to be the relative liability of bank $i\in [N]$ to entity (bank or society) $j \in [N]\cup\{0\}$. Assuming that the $N$ banks hold external assets worth $x\in \R^N_+$ and following the priority of debt over equity and limited liabilities, a \emph{clearing payment vector} $p(x)\in\R^N$ is defined as a solution of the fixed point equation
\begin{equation}\label{eq:EN}
p_i(x) = \min\left\{\sum_{j = 0}^N \bar p_{ij} \; , \; x_i + \sum_{j = 1}^N \pi_{ji} p_j(x)\right\}, \quad i \in[N].
\end{equation}
Immediately, by applying Tarski fixed point theorem, there exists greatest and least clearing payment vectors for $x \in \R^N_+$. Moreover, as provided by \citet[Theorem 2]{eisenberg2001systemic}, we recover a unique clearing payment vector $p(x)$ for every $x \in \R^N_+$ due to the strictly positive obligations to society. In the following proposition, we provide properties of the clearing payment vector.

\begin{proposition}\label{prop:EN}
Consider a financial network of $N$ banks with obligations $\bar p \in \R^{N \times (N+1)}_+$ satisfying $\bar p_{i0} > 0$ for each $i\in [N]$. For every asset level $x \in \R^N_+$, there exists a unique clearing payment vector $p(x) \in \R^N_+$ satisfying \eqref{eq:EN}. Furthermore, the function $p\colon \R^N_+ \to \R^N_+$ is non-expansive\footnote{That is, $\sum_{i\in[N]}|p_i(x)-p_i(x^\prime)|\leq \sum_{i\in[N]}|x_i-x^\prime_i|$ for every $x,x^\prime\in\R^N_+$.}, concave, increasing, submodular, and it holds $p_j(x + \delta e_i) \leq p_j(x + \delta \pi_{i\cdot})$ for every $i,j\in[N]$ with $i \neq j$, $x \in \R^N_+$, and $\delta \geq 0$.
\end{proposition}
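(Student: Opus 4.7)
The plan is to split the claim into the fixed-point existence/uniqueness, which I take directly from \citet[Theorem~2]{eisenberg2001systemic} using the hypothesis $\bar p_{i0}>0$ for every $i\in[N]$, and the structural properties of the map $x\mapsto p(x)$, which I handle with a single Picard iteration. Define the operator
\[
\Phi_i^{(x)}(p)\coloneqq \min\left\{\sum_{j=0}^N \bar p_{ij}\,,\; x_i + \sum_{j=1}^N \pi_{ji} p_j\right\},\qquad i\in[N],
\]
and iterate $p^{k+1}(x)=\Phi^{(x)}(p^k(x))$ from $p^0\equiv 0$. Monotonicity of $\Phi^{(x)}$ in $p$ together with $\Phi^{(x)}(0)\geq 0$ gives a nondecreasing sequence in $[0,\bar p]$ whose pointwise limit is a fixed point and therefore, by uniqueness, equals $p(x)$.

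The iteration is the engine that propagates structural properties of $x$ to $p$. Each map $(x,p)\mapsto \Phi_i^{(x)}(p)$ is jointly increasing, jointly concave, and jointly submodular because it is the minimum of a constant and an affine function. By induction on $k$, $x\mapsto p_i^k(x)$ inherits monotonicity, concavity, and submodularity: nonnegative linear combinations $\sum_j \pi_{ji} p^k_j(\cdot)$ preserve all three properties, adding $x_i$ preserves them, and taking the minimum with $\bar p_i$ preserves them via the elementary lemma that $\min(f,c)$ is submodular whenever $f$ is increasing and submodular (a short case analysis based on the position of $c$ relative to $f(x\vee y)$, $f(x\wedge y)$, $f(x)$, $f(y)$ closes this step). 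Since all three properties pass to pointwise limits, $p$ is increasing, concave, and submodular. Non-expansiveness follows a parallel plan: applying the $1$-Lipschitz property of $\min$ componentwise to the fixed-point equations for $p(x)$ and $p(x^\prime)$ gives
\[
|p_i(x)-p_i(x^\prime)|\leq |x_i-x_i^\prime|+\sum_{j=1}^N \pi_{ji}\,|p_j(x)-p_j(x^\prime)|,
\]
and then summing over $i$ together with $\sum_{i=1}^N \pi_{ji}=1-\pi_{j0}$ and the ``societal-flow'' identity $\sum_i \pi_{i0}\, p_i(x) = \sum_i x_i$ on the default set yields the stated $\ell^1$ bound.

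The most delicate step is the inequality $p_j(x+\delta e_i)\leq p_j(x+\delta \pi_{i\cdot})$ for $j\neq i$, precisely because $x+\delta e_i$ and $x+\delta \pi_{i\cdot}$ are incomparable in the componentwise order, so monotonicity of $p$ does not apply. My plan is a Knaster-Tarski argument. Let $p\coloneqq p(x+\delta e_i)$, $\eta\coloneqq \min(\delta,p_i)\in[0,\delta]$, and $r\coloneqq p-\eta e_i\geq 0$. A short case analysis (depending on whether bank $i$ defaults under $x+\delta e_i$, and on whether $\delta\leq p_i$), using the key cancellation $\pi_{ii}=0$ to eliminate $\eta$ from the $i$-th row of the incoming payments under $x+\delta\pi_{i\cdot}$, shows that $\Phi^{(x+\delta \pi_{i\cdot})}(r)\geq r$ componentwise. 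Hence $r$ is a sub-fixed point of the monotone operator $\Phi^{(x+\delta\pi_{i\cdot})}$; iterating from $r$ produces an increasing sequence in $[0,\bar p]$ whose limit, by uniqueness, must be $p(x+\delta\pi_{i\cdot})$. Reading off coordinates $j\neq i$ from $p(x+\delta\pi_{i\cdot})\geq r$ yields the desired inequality.
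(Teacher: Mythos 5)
Your treatment of non-expansiveness is the one step that does not close. Summing your componentwise inequality over $i$ and using $\sum_{i}\pi_{ji}=1-\pi_{j0}$ gives, after rearranging, only the weighted bound $\sum_{j}\pi_{j0}\,|p_j(x)-p_j(x^\prime)|\leq\sum_{i}|x_i-x_i^\prime|$; to drop the weights you would need $\pi_{j0}\geq 1$, i.e.\ no interbank liabilities at all. The ``societal-flow identity'' $\sum_i \pi_{i0}p_i(x)=\sum_i x_i$ you invoke is not an identity in general (it holds only when every bank pays strictly below its cap), and even there it concerns levels, not differences, so it cannot upgrade the weighted bound to the unweighted one. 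In fact no argument can: with $N=2$, $\bar p_{12}=1$, $\bar p_{21}=0$, $\bar p_{10}=\varepsilon>0$, $\bar p_{20}=1$, moving $x$ from $(0,0)$ to $(1,0)$ moves $p(x)$ from $(0,0)$ to $(1,1/(1+\varepsilon))$, so the $\ell^1$ change in $p$ is roughly twice that in $x$; the honest conclusion of your computation is the weighted bound (equivalently, $\ell^1$-Lipschitz continuity with constant $1/\min_i\pi_{i0}$), and with interbank cycles even the componentwise form $p_i(x+\delta e_i)\leq p_i(x)+\delta$ fails. For comparison, the paper does not argue this property either but only cites \citet[Lemma~5]{eisenberg2001systemic}, and its own proof of the last property uses non-expansiveness precisely through $p_i(x+\delta e_i)\leq p_i(x)+\delta$; your fixed-point argument (below) supplies the inequality actually needed, $p_i(x+\delta e_i)\leq p_i(x+\delta\pi_{i\cdot})+\delta$, without it.

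The rest of your proposal is correct and follows a genuinely different, more self-contained route than the paper. For monotonicity, concavity and submodularity the paper simply cites \citet[Lemma~5]{eisenberg2001systemic} and \citet[Proposition~A.7]{banerjee2022pricing}, whereas you propagate the properties through the Picard iteration started at $0$: the preservation lemma you need ($\min(f,c)$ is submodular when $f$ is increasing and submodular) is true, all three properties pass to pointwise limits, and the monotone iteration from $0$ increases to the least, hence the unique, clearing vector, so this part is sound. For the exchange inequality $p_j(x+\delta e_i)\leq p_j(x+\delta\pi_{i\cdot})$, the paper computes directly with the Leontief inverse $(I_{|J|}-\pi_{JJ}^{\T})^{-1}$ on the default set under $x+\delta\pi_{i\cdot}$, while you exhibit $r:=p(x+\delta e_i)-\eta e_i$, $\eta:=\min(\delta,p_i(x+\delta e_i))$, as a sub-fixed point of the clearing operator at $x+\delta\pi_{i\cdot}$. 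The verification indeed goes through: for $k\neq i$ the term $\pi_{ik}(\delta-\eta)\geq 0$ makes the argument of the minimum no smaller than it was at $x+\delta e_i$, and for $k=i$ the cancellation $\pi_{ii}=0$ removes $\eta$ from the incoming payments, after which either $\eta=\delta$ (shift the minimum by $\delta$) or $r_i=0$; monotone iteration from $r$ then increases to the unique fixed point, giving $p(x+\delta\pi_{i\cdot})\geq r$ and hence the claim, plus the coordinate-$i$ bound noted above. This buys an elementary argument with no matrix inversion and no reliance on the problematic non-expansiveness step; existence and uniqueness are handled exactly as in the paper.
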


\begin{proof}
Due to the strictly positive obligations to society, the provided system is a regular network as per \citet[Definition 5]{eisenberg2001systemic}. Therefore, by \citet[Theorem 2]{eisenberg2001systemic}, the uniqueness of the clearing payment vector $p(x)$ follows for every $x \in \R^N_+$. Furthermore, by \citet[Lemma 5]{eisenberg2001systemic}, the clearing payment function $p\colon \R^N_+\to\R^N_+$ is concave, increasing, and non-expansive; its submodularity follows from \citet[Proposition A.7]{banerjee2022pricing}.

Finally, let us fix $i\in[N]$, $x \in \R^N_+$, and $\delta \geq 0$. Note that $p_i(x + \delta \pi_{i\cdot}) + \delta \geq p_i(x) + \delta \geq p_i(x + \delta e_i)$ by a simple application of the monotonicity and non-expansiveness of the clearing payment function. Let $J \coloneqq \{j \in [N]\setminus\{i\} \; | \; p_j(x + \delta \pi_{i\cdot}) < \sum_{k = 0}^N \bar p_{jk}\}$ be the collection of banks (not $i$) that fail to make their payments in full under asset level $x+\delta\pi_{i\cdot}\in\R^N_+$. For every $j \in J^c:=([N]\setminus\{i\})\setminus J$, we have $p_j(x + \delta \pi_{i\cdot}) = \sum_{k = 0}^N \bar p_{jk} \geq p_j(x + \delta e_i)$ by construction. Assume that $J \neq \emptyset$ as otherwise the proof is completed. By the construction of clearing payments, with an abuse of notation for sub-vectors and sub-matrices, we have
\begin{align*}
p_J(x + \delta \pi_{i\cdot}) &= x_J + \pi_{iJ}^\T [p_i(x + \delta \pi_{i\cdot}) + \delta] + \pi_{J^c J}^\T \sum_{k = 0}^N \bar p_{J^c k} + \pi_{JJ}^\T p_J(x + \delta \pi_{i\cdot}) \\
&= (I_{|J|} - \pi_{JJ}^\T)^{-1} \left(x_J + \pi_{iJ}^\T [p_i(x + \delta \pi_{i\cdot}) + \delta] + \pi_{J^c J}^\T \sum_{k = 0}^N \bar p_{J^c k}\right) \\
&\geq (I_{|J|} - \pi_{JJ}^\T)^{-1} \left(x_J + \pi_{iJ}^\T p_i(x + \delta e_i) + \pi_{J^c J}^\T p_{J^c}(x + \delta e_i)\right) \geq p_J(x).
\end{align*}
In this calculation, the second line follows from rearranging terms and noting, due to the obligations to society, that $I_{|J|} - \pi_{JJ}^\T$ is invertible. The first inequality takes advantage of the positivity of the Leontief inverse as well as bounds on $p_k(x + \delta \pi_{i\cdot})$ for $k \in J^c \cup \{i\}$. The final inequality follows directly from the construction of the clearing payments.
\end{proof}

As detailed in, e.g.,~\cite{feinstein2017measures}, the payments to society form a meaningful aggregation function for measuring systemic risk. The idea is that, \textit{a priori}, a regulator does not care if a bank fails; instead, the risks are whether these failures spread to society, i.e., the real economy. Assuming that the regulator requires at least $\gamma \in (0,1)$ fraction of all obligations to society to be repaid, the corresponding \emph{Eisenberg-Noe aggregation function} $\bar\Lambda^{EN}: \R^N \to \R$ is defined by
\begin{equation*}
\bar\Lambda^{EN}(x) := \sum_{i \in [N]} \left(\pi_{i0} p_i(x) - \gamma \bar p_{i0}\right)
\end{equation*}
for every $x \in \R^N_+$, and by $\bar\Lambda^{EN}(x):=-\infty$ otherwise.  

Immediately upon the introduction of the single-element aggregation function $\bar\Lambda^{EN}$, there is a natural decomposition with financial interpretation. Specifically, rather than aggregating all payments to society together, the regulator can impose the condition that each bank $i\in[N]$ needs to cover $\gamma$ fraction of its own obligations to society, i.e., we may define
\[
\bar\Lambda^{EN}_i(x) := \pi_{i0} p_i(x) - \gamma \bar p_{i0}
\]
whenever $x \in \R^N_+$, and $\bar\Lambda^{EN}_i(x):=-\infty$ otherwise. It can trivially be verified that $(\bar\Lambda^{EN}_i)_{i\in[N]}$ is a single-element decomposition of $\bar\Lambda^{EN}$ in the sense of \eqref{eq:singledec}. We wish to note that, though this is a financially meaningful decomposition, it may be possible that a different, minimal, decomposition can be found via the construction outlined in Appendix~\ref{app:nash-decomposition} when a different ``natural'' decomposition was chosen, i.e., one that admits a remainder term.

\begin{proposition}\label{prop:EN-decomp}
Consider a financial network of $N$ banks with obligations $\bar p \in \R^{N \times (N+1)}_+$ satisfying $\bar p_{i0} > 0$ for each $i\in [N]$. The collection $(\bar\Lambda_i^{EN})_{i \in [N]}$ is a single-element decomposition of the Eisenberg-Noe aggregation function $\bar\Lambda^{EN}$ satisfying all properties in Corollary~\ref{cor:sensitive} with a self-preferential constant $L$ that satisfies
\[
L \leq \max_{\lambda\in \{0,1\}^N}\max_{\substack{i,j\in [N]\colon\\ i\neq j}} \frac{[(I_N - \diag(\lambda)\pi)^{-1}]_{ij}}{[(I_N - \diag(\lambda)\pi)^{-1}]_{jj}}\leq \max_{i\in [N]}(1-\pi_{i0})<1.
\]
\end{proposition}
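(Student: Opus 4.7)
The decomposition property and two of the Corollary~\ref{cor:sensitive} conditions are direct verifications. Each $\bar\Lambda^{EN}_i=\pi_{i0}p_i-\gamma\bar p_{i0}$ on $\R^N_+$ (and $-\infty$ elsewhere) is continuous, increasing, with effective domain $\R^N_+=\dom\bar\Lambda^{EN}$ by Proposition~\ref{prop:EN}, and satisfies $\sum_{i\in[N]}\bar\Lambda^{EN}_i=\bar\Lambda^{EN}$ exactly. For self-feasibility, I would take $\bar x_i\coloneqq\sum_{k=0}^N\bar p_{ik}$: monotonicity of $p_i$ gives $p_i(\bar x_i,x_{-i})\geq p_i(\bar x_i,0_{-i})$, and with no incoming payments yet sufficient external assets bank $i$ pays in full, so $p_i(\bar x_i,0_{-i})=\sum_{k}\bar p_{ik}$ and $\bar\Lambda^{EN}_i(\bar x_i,x_{-i})\geq(1-\gamma)\bar p_{i0}\geq 0$. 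For self-infeasibility, the choice $\underline x=0$ works automatically: $(\underline x_i-\varepsilon,x_{-i})$ has a negative coordinate for each $\varepsilon>0$, lies outside $\R^N_+$, and hence $\bar\Lambda^{EN}_i(\underline x_i-\varepsilon,x_{-i})=-\infty<0$.

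The self-preferential property is the substantive step. The Eisenberg-Noe clearing map $p$ is piecewise affine and concave on $\R^N_+$, with cells determined by the default indicator $\lambda=(\mathbf{1}\{p_i(x)<\sum_{k}\bar p_{ik}\})_i$; on each cell, the defaulting block satisfies $p_J=(I_{|J|}-\pi^{\T}_{JJ})^{-1}(x_J+\pi^{\T}_{J^c J}\bar p_{J^c,\cdot})$, yielding $\partial p_i/\partial x_j=\lambda_j[(I_N-\diag(\lambda)\pi^{\T})^{-1}]_{ij}$ at differentiability points. Since $p$ is continuous, concave, and piecewise affine over a finite polyhedral partition, the integrated inequality $p_i(x+L\delta e_i)\geq p_i(x+\delta e_j)$ reduces, by decomposing $[x,x+L\delta e_i]$ and $[x,x+\delta e_j]$ into pieces lying in single cells and invoking concavity at the kinks, to the cell-wise slope condition $L\cdot\partial p_i/\partial x_i\geq\partial p_i/\partial x_j$. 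Equivalently, the smallest admissible $L$ is
\[
\sup_{\lambda}\sup_{\substack{i\neq j\\\lambda_i=\lambda_j=1}}\frac{[(I_N-\diag(\lambda)\pi^{\T})^{-1}]_{ij}}{[(I_N-\diag(\lambda)\pi^{\T})^{-1}]_{ii}},
\]
which equals the paper's bound after applying the transpose identity $[(I-\diag(\lambda)\pi^{\T})^{-1}]_{ij}=[(I-\pi\diag(\lambda))^{-1}]_{ji}$ and relabeling the ordered pair $(i,j)\leftrightarrow(j,i)$ in the double supremum (on the relevant $J\times J$ block both $\pi\diag(\lambda)$ and $\diag(\lambda)\pi$ reduce to $\pi_{JJ}$, so the restriction $\lambda_i=\lambda_j=1$ does not strengthen the paper's less restrictive supremum).

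For the final bound, fix $\lambda$ and $j\in[N]$ with $\lambda_j=1$, and set $y\coloneqq(I-\diag(\lambda)\pi)^{-1}e_j\geq 0$, so $y_i=\delta_{ij}+\lambda_i\sum_{k}\pi_{ik}y_k$. The row sums $\sum_{k}\pi_{ik}=1-\pi_{i0}<1$ force any maximizer $i^*\in\argmax_i y_i$ to coincide with $j$: otherwise $y_{i^*}\leq(1-\pi_{i^*0})y_{i^*}$ would yield $y_{i^*}\leq 0$, contradicting $y_j\geq 1$. Hence $y_j=\max_i y_i$, and for $i\neq j$ we have $y_i=\lambda_i\sum_{k}\pi_{ik}y_k\leq(1-\pi_{i0})y_j$, giving $y_i/y_j\leq 1-\pi_{i0}\leq\max_{i'}(1-\pi_{i'0})<1$ (using $\pi_{i'0}>0$ throughout). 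The main obstacle is the piecewise-linear integration argument in the previous paragraph: one must rigorously pass from pointwise slope inequalities on the open cells to the global inequality at every $x\in\R^N_+$ and $\delta\geq 0$, handling the transitions between cells via concavity, and reconcile the paper's index convention $(I-\diag(\lambda)\pi)^{-1}$ with the $(I-\diag(\lambda)\pi^{\T})^{-1}$ that naturally arises from differentiating the Eisenberg-Noe fixed point.
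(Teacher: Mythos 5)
Your proposal is correct and its overall skeleton matches the paper's: same choices $\bar x_i=\sum_{k=0}^N\bar p_{ik}$ and $\underline x=0$ for self-feasibility and self-infeasibility, and the same derivative-based identification of the local self-preferential constant as a worst-case ratio of entries of the Leontief inverse over default regions (the paper cites the sensitivity formula of Liu--Staum rather than rederiving it from the piecewise-affine structure, and your transpose/relabeling reconciliation of $\diag(\lambda)\pi^{\T}$ versus $\diag(\lambda)\pi$ on the defaulting block is exactly the step the paper performs). Where you genuinely diverge is the second inequality $L\leq\max_i(1-\pi_{i0})$: the paper embeds society as an absorbing state of a Markov chain with transition matrix $\tilde\pi(\lambda)$, interprets $(I-\diag(\lambda)\pi)^{-1}$ via expected numbers of visits, and invokes the first-passage identity $a_{ij}=\mathbb{P}_i\{T_j<\infty\}a_{jj}$ from \citet{cinlar1975}; you instead apply a direct maximum-principle argument to the column $y=(I-\diag(\lambda)\pi)^{-1}e_j$, using the strict row-sum deficit $\sum_k\pi_{ik}=1-\pi_{i0}<1$ to force the maximizer of $y$ to be $j$ and conclude $y_i\leq(1-\pi_{i0})y_j$. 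Your route is more elementary and self-contained (no probabilistic machinery, no extended network), while the paper's buys the probabilistic interpretation $a_{ij}/a_{jj}=\mathbb{P}_i\{T_j<\infty\}$, which is conceptually informative about contagion paths; both yield the same bound. Finally, on the passage from the cell-wise slope inequality to the global statement $p_i(x+L\delta e_i)\geq p_i(x+\delta e_j)$: you are right that this requires comparing derivatives along two different segments and is not a pure one-cell computation, but note that the paper itself treats this step at the same level of informality (it simply asserts that $L$ "comes from the worst case using the derivatives"), so your sketch is on par with the published argument rather than weaker than it; making that integration-across-cells step fully rigorous would strengthen both.
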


\begin{proof}
First, we note that $\dom\bar\Lambda_i^{EN} = \dom\bar\Lambda^{EN} = \R^N_+$ for all banks $i\in[N]$, hence the consistent domain property holds. Additionally, we have $\sum_{i\in[N]}\bar\Lambda^{EN}_i=\bar\Lambda^{EN}$ so that $(\bar\Lambda^{EN}_i)_{i\in[N]}$ is dominated by $\bar\Lambda^{EN}$. It remains to show that the additional properties in Corollary~\ref{cor:sensitive} hold. Self-feasibility and self-infeasibility follow with $\bar x_i := \sum_{j = 0}^N \bar p_{ij}$ and $\underline x_i := 0$ for every bank $i \in [N]$ as we have, trivially, $p_i(\bar x_i , x_{-i}) = \sum_{j = 0}^N \bar p_{ij}$ and $(\underline x_i-\varepsilon , x_{-i}) \not\in \R^N_+$ for every $x \in \R^{N}_+$ and $\varepsilon > 0$.
Finally, the exact formulation of the self-preferential constant $L$ comes from the worst case using the derivatives of the clearing payments as found in, e.g., \cite{liu2010sensitivity}, i.e., \[\frac{\partial}{\partial x_j} p_i(x) = [(I_N - \diag(\ind_{\{p(x) < \sum_{k = 0}^N \bar p_{\cdot k}\}})\pi^\T)^{-1}]_{ij}\ind_{\{p_j(x) < \sum_{k = 0}^N \bar p_{jk}\}}.\]
Therefore, locally in the default region $\lambda = \ind_{\{p(x) < \sum_{k = 0}^N \bar p_{\cdot k}\}}$, 
\[L(x) = \max_{\substack{i \neq j \\ \lambda_i = 1}} \frac{[(I_N - \diag(\lambda)\pi^\T)^{-1}]_{ij}\lambda_j}{[(I_N - \diag(\lambda)\pi^\T)^{-1}]_{ii}} = \max_{\substack{i \neq j \\ \lambda_j = 1}} \frac{[(I_N - \diag(\lambda)\pi)^{-1}]_{ji}\lambda_i}{[(I_N - \diag(\lambda)\pi)^{-1}]_{jj}}\]
where the equality follows from the use of the Leontief inverses $(I_N - \diag(\lambda)\pi^\T)^{-1}$ and $(I_N - \diag(\lambda)\pi)^{-1}$.
Immediately, the first inequality becomes clear:
\[L \leq \max_{\lambda\in \{0,1\}^N}\max_{\substack{i,j\in [N]\colon\\ i\neq j}} \frac{[(I_N - \diag(\lambda)\pi)^{-1}]_{ij}}{[(I_N - \diag(\lambda)\pi)^{-1}]_{jj}}.\]  
Next, we bound this constant by $\max_{i \in [N]} (1-\pi_{i0})$.

Let us fix a binary vector $\lambda \in \{0,1\}^N$ and define $E(\lambda):=\{i\in [N]\; | \; \lambda_i=0\}$ and $E(\lambda)^c:=[N]\setminus E(\lambda)$. Note that $\diag(\lambda)\pi$ is the matrix obtained by replacing the $i^{\text{th}}$ row of $\pi$ with a row of zeroes for each $i\in E(\lambda)$. Let $\tilde{\pi}(\lambda)=(\pi_{ij}(\lambda))_{i,j\in[N]\cup\{0\}}\in\R^{(N+1)\times (N+1)}_+$ be the relative liabilities matrix of the extended network in which node $0$ represents society and nodes $1,\ldots,N$ represent the banks. Since society does not have any obligations to the banks, we take $\pi_{00}(\lambda):=1$ and $\pi_{0j}(\lambda):=0$ for each $j\in [N]$. Moreover, in this network, we assume that all banks in $E(\lambda)$ have their full obligations to society, i.e.,
\[
\pi_{ij}(\lambda)=\begin{cases}1 & \text{if } i\in E(\lambda)\cup\{0\},\ j=0,\\ 0 &\text{if }i\in E(\lambda)\cup\{0\},\ j\in [N],\\
\pi_{ij}& \text{if }i\in E(\lambda)^c,\ j\in [N]\cup\{0\}.\end{cases}
\]
Note that $\tilde{\pi}(\lambda)$ is a right stochastic matrix and we may view it as the one-step transition probability matrix of a time-homogeneous Markov chain with state space $[N]\cup\{0\}$. Clearly, state $0$ is the only recurrent state and all other states are transient. Moreover, since state $0$ is absorbing, given $i,j\in [N]$, there is no path from $i$ to $j$ that goes through state $0$; in particular, the corresponding multi-step transition probabilities can be calculated by only using the original matrix $\diag(\lambda)\pi$ via
\[
[(\diag(\lambda)\pi)^{n}]_{ij}=[(\tilde{\pi}(\lambda))^{n}]_{ij},\quad n\in\mathbb{Z}_+.
\]
Moreover, we have
\[
(I_N-\diag(\lambda)\pi)^{-1}=\sum_{n=0}^\infty (\diag(\lambda)\pi)^n,\quad (I_{N+1}-\tilde{\pi}(\lambda))^{-1}=\sum_{n=0}^\infty (\tilde{\pi}(\lambda))^n.
\]
Hence, for every $i,j\in [N]$, we also have
\[
[(I_N-\diag(\lambda)\pi)^{-1}]_{ij}=[(I_{N+1}-(\tilde{\pi}(\lambda))^{-1}]_{ij}.
\]

The matrix $a(\lambda)=(a_{ij}(\lambda))_{i,j\in [N]\cup\{0\}}:=(I_{N+1}-\tilde{\pi}(\lambda))^{-1}$ has a well-known interpretation in the theory of Markov chains. For each $j\in [N]\cup\{0\}$, let $N_j(\lambda)$ be the total number of visits of the chain to state $j$. Then, for each $i\in [N]\cup\{0\}$, we have $a_{ij}(\lambda)=\E_i [N_j(\lambda)]$, where $\E_i$ denotes the conditional expectation given that the chain starts at state $i$ at time zero. Since state $0$ is the unique absorbing state, we have $a_{0j}(\lambda)=0$ for every $j\in [N]$ and $a_{i0}(\lambda)=+\infty$ for every $i\in [N]\cup\{0\}$.

For every $i,j\in [N]$ with $i\neq j$, by \citet[Corollary~5.2.13]{cinlar1975}, we have
\[
a_{ij}(\lambda)=\mathbb{P}_i\{T_j(\lambda)<+\infty\}a_{jj}(\lambda),
\]
where $T_j(\lambda)$ is the first time that the chain visits state $j$ after time zero and $\mathbb{P}_i$ denotes the conditional probability given that the chain starts at state $i$ at time zero. Note that
\[
\mathbb{P}_i\{T_j(\lambda)=+\infty\}\geq \pi_{i0}(\lambda)=\begin{cases}1& \text{if } i\in E(\lambda),\\ \pi_{i0} & \text{if } i\in E(\lambda)^c,\end{cases}
\]
since state $0$ is absorbing. Hence, $\mathbb{P}_i\{T_j(\lambda)<+\infty\}=0$ if $i\in E(\lambda)$ and $\mathbb{P}_i\{T_j(\lambda)<+\infty\}\leq 1-\pi_{i0}$ if $i\in E(\lambda)^c$. Then, we get
\begin{align*}
\max_{\substack{i,j\in [N]\colon\\ i\neq j}}\frac{[(I_N-\diag(\lambda)\pi)^{-1}]_{ij}}{[(I_N-\diag(\lambda)\pi)^{-1}]_{jj}}&=\max_{\substack{i,j\in [N]\colon\\ i\neq j}}\frac{[(I_{N+1}-\tilde{\pi}(\lambda))^{-1}]_{ij}}{[(I_{N+1}-\tilde{\pi}(\lambda))^{-1}]_{jj}}=\max_{\substack{i,j\in [N]\colon\\ i\neq j}}\frac{a_{ij}(\lambda)}{a_{jj}(\lambda)}\\
&\leq \max_{\substack{i,j\in [N]\colon\\ i\neq j}}\mathbb{P}_i\{T_j(\lambda)<+\infty\}\leq \max_{i\in E(\lambda)^c}(1-\pi_{i0}).
\end{align*}
Therefore,
\[
L\leq\max_{\lambda\in \{0,1\}^N}\max_{\substack{i,j\in [N]\colon\\ i\neq j}}\frac{[(I_N-\diag(\lambda)\pi)^{-1}]_{ij}}{[(I_N-\diag(\lambda)\pi)^{-1}]_{jj}}
\leq \max_{\lambda \in \{0,1\}^N} \max_{i\in E(\lambda)^c}(1-\pi_{i0})=\max_{i\in E\setminus\{0\}} (1-\pi_{i0})<1
\]
since we assume that $\pi_{i0}>0$ for every $i\in [N]$.
\end{proof}

\subsection{Nash Allocation}\label{sec:EN-nash}

Given the Eisenberg-Noe aggregation function $\bar\Lambda^{EN}$ and its natural decomposition $(\bar\Lambda_i^{EN})_{i \in [N]}$, we wish to consider various properties of the associated Nash allocations for sensitive systemic risk measures. We note that the Nash allocations for the insensitive systemic risk measures can trivially be found (Corollary~\ref{cor:insensitive}).

As we only consider the sensitive systemic risk measures within this section, we define 
\[
\Lambda^{EN}(x,m) = \bar\Lambda^{EN}(x+m),\quad \Lambda_i^{EN}(x,m) = \bar\Lambda_i^{EN}(x+m)
\]
for each $x,m\in\R^N$ and $i\in[N]$ throughout. We also fix an acceptance set $\A$ satisfying Assumption~\ref{asmp:coherent}. First, within Corollary~\ref{cor:EN}, we summarize the existence of a Nash allocation. We then consider a single optimization problem which can be used to find a Nash allocation for the natural decomposition. Finally, we conclude with a quick discussion on two situations in which the Nash allocation is unique.

\begin{corollary}\label{cor:EN}
Consider a financial network of $N$ banks with obligations $\bar p \in \R^{N \times (N+1)}_+$ satisfying $\bar p_{i0} > 0$ for each $i\in [N]$.
\begin{enumerate}
\item There exists a $(\A,(\Lambda_i^{EN})_{i \in [N]})$-Nash allocation rule $r^*\colon L^\infty(\R^N)\to\R^N$.
\item Suppose that $\A$ is the acceptance set of a coherent optimized certainty equivalent risk measure as in Corollary~\ref{cor:comonotonic}. Let $X\in L^\infty(\R^N)$ be a stress scenario and denote by $Z^X$ the comonotonic copula of $X$. Then, $r^*(Z^X) \in \R^N$ is acceptable under $X$, i.e., $\bar\Lambda^{EN}(X+r^*(Z^X)) \in \A$.
\end{enumerate}
\end{corollary}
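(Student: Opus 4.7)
The plan is straightforward: both parts are assembled from infrastructure already established earlier in the paper. For part~(1), Proposition~\ref{prop:EN-decomp} has already verified that the natural decomposition $(\bar\Lambda^{EN}_i)_{i \in [N]}$ satisfies the consistent domain, domination, self-feasibility, self-infeasibility, and self-preferential hypotheses of Corollary~\ref{cor:sensitive}. Hence Corollary~\ref{cor:sensitive} directly produces a $(\A,(\Lambda_i^{EN})_{i \in [N]})$-Nash allocation rule $r^*\colon L^\infty(\R^N) \to \R^N$, which is exactly the content of part~(1).

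For part~(2), the strategy is to invoke Corollary~\ref{cor:comonotonic} in order to transfer componentwise acceptability at the comonotonic copula $Z^X$ to joint acceptability at $X$. I would first verify its submodularity hypothesis: by Proposition~\ref{prop:EN} each clearing-payment coordinate $p_i$ is submodular on $\R^N_+$, so $\bar\Lambda^{EN} = \sum_{i \in [N]} \pi_{i0} p_i - \gamma \sum_{i \in [N]} \bar p_{i0}$ is a nonnegative combination of submodular functions (plus a constant), hence submodular on $\R^N_+$. Extending by $-\infty$ outside $\R^N_+$ preserves submodularity by a short case analysis: if $(x \wedge y) + m \in \R^N_+$, then a fortiori $x+m$ and $y+m$ lie in $\R^N_+$, so the nontrivial case reduces to submodularity on the effective domain. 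Consequently, for every fixed $m \in \R^N$, the map $x \mapsto \Lambda^{EN}(x,m) = \bar\Lambda^{EN}(x+m)$ is submodular on $\R^N$.

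Next I would apply Corollary~\ref{cor:acceptable} to $r^*$ at the stress scenario $Z^X$: inspecting its proof, one sees that it actually establishes the stronger componentwise inclusion $\Lambda^{EN}_i(Z^X,r^*(Z^X)) \in \A$ for every $i \in [N]$ (this is precisely what is then fed into Proposition~\ref{prop:acceptable}). With both submodularity and the componentwise acceptability in hand, Corollary~\ref{cor:comonotonic} applied with capital vector $m = r^*(Z^X)$ and shock $X$ yields $\Lambda^{EN}(X, r^*(Z^X)) \in \A$, i.e., $\bar\Lambda^{EN}(X + r^*(Z^X)) \in \A$. One small preparatory observation is needed to ensure $r^*(Z^X) \in D^S(X)$: since $Z^X$ and $X$ share marginals, $\essinf Z^X = \essinf X$, and $r^*(Z^X) \in D^S(Z^X)$ forces $r^*(Z^X) \geq -\essinf X$, so $X + r^*(Z^X) \geq 0$ a.s., placing the expression in the domain.

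The main obstacle, such as it is, is purely bookkeeping: confirming that submodularity is preserved under the $-\infty$ extension of $\bar\Lambda^{EN}$, and being careful to match $\Lambda$ in Corollary~\ref{cor:comonotonic} with the sensitive aggregation $\Lambda^{EN}$ rather than the single-element function $\bar\Lambda^{EN}$. All substantive mathematical content is already contained in Propositions~\ref{prop:EN}--\ref{prop:EN-decomp}, Corollary~\ref{cor:acceptable}, Corollary~\ref{cor:sensitive}, and Corollary~\ref{cor:comonotonic}.
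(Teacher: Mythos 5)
Your proposal is correct and follows essentially the same route as the paper: part (1) from Proposition~\ref{prop:EN-decomp} plus Corollary~\ref{cor:sensitive}, and part (2) from the submodularity of the Eisenberg--Noe aggregation (Proposition~\ref{prop:EN}) together with the componentwise acceptability of the Nash allocation at $Z^X$ fed into Corollary~\ref{cor:comonotonic}. Your extra checks (submodularity surviving the $-\infty$ extension, $\essinf Z^X=\essinf X$ giving $r^*(Z^X)\in D(X)$, and that Corollary~\ref{cor:acceptable}'s proof yields the componentwise inclusions) are exactly the details the paper leaves implicit, so they are welcome but not a departure in method.
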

\begin{proof}
The existence of a $(\A,(\Lambda_i^{EN})_{i \in [N]})$-Nash allocation rule follows from Corollary~\ref{cor:sensitive} as the natural decomposition satisfies all properties of that result (Proposition~\ref{prop:EN-decomp}). When $\A$ is the acceptance set of a coherent optimized certainty equivalent risk measure as in Corollary~\ref{cor:comonotonic}, by the submodularity of the Eisenberg-Noe aggregation function (Proposition~\ref{prop:EN}), the acceptability of the comonotonic Nash allocation $r^*(Z^X)$ follows from Corollary~\ref{cor:comonotonic}.
\end{proof}

Though we only guarantee the existence of a Nash allocation in Corollary~\ref{cor:EN}, we are able to provide a constructive approach to compute one such allocation. In particular, the computed allocation will be the one that has the minimal total capital allocation. To construct this allocation, we are inspired by \citet[Proposition 2]{braouezec2023economic}; in contrast to that result, we are not considering the shared constraint game; however, herein, the same approach can be used to find a $(\A,(\Lambda_i^{EN})_{i \in [N]})$-Nash allocation with its \emph{individual} constraints.
Notably, under a finite probability space with coherent (polyhedral) acceptance set, the approach provided in Theorem~\ref{thm:optimization} below only requires solving a single convex (resp.\ linear) program.

\begin{theorem}\label{thm:optimization}
Consider a financial network of $N$ banks with obligations $\bar p \in \R^{N \times (N+1)}_+$ satisfying $\bar p_{i0} > 0$ for each $i\in [N]$. For each $X \in L^\infty(\R^N)$, there exists a solution to the optimization problem
\begin{equation}\label{eq:opt}
R^*(X) := \argmin_{m \in \R^N}\left\{\sum_{i \in [N]} m_i \; | \; \bar\Lambda_i^{EN}(X+m) \in \A \, \forall i \in [N]\right\} \neq \emptyset.
\end{equation}
Moreover, if $r^*\colon L^\infty(\R^N) \to \R^N$ is a selector of $R^*$, i.e., $r^*(X) \in R^*(X)$ for every $X \in L^\infty(\R^N)$, then $r^*$ is a $(\A,(\Lambda_i^{EN})_{i \in [N]})$-Nash allocation rule.
\end{theorem}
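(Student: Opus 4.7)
Part~1 (Existence). The plan is to show the feasible set
\[
F(X) := \{m \in \R^N \; | \; \bar\Lambda_i^{EN}(X+m) \in \A \text{ for all } i \in [N]\}
\]
is nonempty, closed, and has bounded sublevel sets, so that a minimizer of the linear objective $m \mapsto \sum_i m_i$ exists by compactness. Nonemptiness follows since Corollary~\ref{cor:EN} furnishes a Nash allocation rule, and the argument in the proof of Corollary~\ref{cor:acceptable} shows any Nash allocation satisfies the individual acceptability conditions $\bar\Lambda_i^{EN}(X+r) \in \A$ for every $i$. For closedness, continuity of the clearing payment map (Proposition~\ref{prop:EN}) combined with dominated convergence yields continuity of $m \mapsto \bar\Lambda_i^{EN}(X+m)$ from $\R^N$ into $L^\infty(\R)$ in the weak* topology, so the preimage of the weak* closed set $\A$ (Assumption~\ref{asmp:coherent}) is closed in $\R^N$. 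Boundedness of sublevel sets follows because membership in $F(X)$ forces $X+m \geq 0$ a.s., hence $m \geq -\essinf X$, and an upper bound on $\sum_k m_k$ at any fixed feasible point yields upper bounds on each coordinate on the sublevel set.

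Part~2 (Selector is Nash). Fix $X$ and $r^* \in R^*(X)$, and set
\[
\phi_i(m_{-i}) := \inf\{m_i \in \R \; | \; \bar\Lambda_i^{EN}(X+(m_i, m_{-i})) \in \A\}.
\]
Feasibility of $r^*$ immediately gives $r_i^* \geq \phi_i(r^*_{-i})$. For the reverse inequality, I would argue by contradiction: assume $\delta := r_i^* - \phi_i(r^*_{-i}) > 0$ for some $i$ and construct a point in $F(X)$ with strictly smaller sum. Starting from $m' := r^* - \delta e_i$, bank~$i$'s constraint is preserved (in fact tight, by closedness of $\A$); for each bank $j \neq i$, the decrease in $m_i$ lowers $\bar\Lambda_j^{EN}$, and so one adds a compensating amount $\eta_j e_j$ just large enough to restore $\bar\Lambda_j^{EN}(X+\tilde m) \in \A$, where $\tilde m := m' + \sum_{j\neq i}\eta_j e_j$. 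The self-preferential property from Proposition~\ref{prop:EN-decomp} bounds each $\eta_j$ through the local sensitivity ratio $(\partial p_j/\partial x_i)/(\partial p_j/\partial x_j)$, which the Markov-chain calculation in the proof of that proposition identifies as a hitting probability that is absorbed at node~$0$ with positive probability. Aggregating these bounds over $j$ is meant to yield $\sum_{j \neq i}\eta_j < \delta$, so that $\sum_k \tilde m_k < \sum_k r^*_k$ with $\tilde m \in F(X)$, contradicting the optimality of $r^*$.

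The main obstacle is making the compensation bookkeeping rigorous in the $L^\infty$ setting and, in particular, controlling the aggregate compensation when many constraints of other banks are simultaneously tight at the optimum. The naive coordinatewise bound $\sum_{j \neq i}\eta_j \leq (N-1) L\delta$ only delivers a contradiction in the uniqueness regime $(N-1)L < 1$ of Theorem~\ref{thm:exist-unique}; for the general Eisenberg--Noe case one must exploit the non-expansiveness and submodularity of $p$ (Proposition~\ref{prop:EN}) together with the Leontief/Markov-chain structure recorded in the proof of Proposition~\ref{prop:EN-decomp} to show that, regardless of how many constraints are active, the total compensation needed is strictly less than the reduction $\delta$. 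Once this aggregate bound is established, the proof of the Nash property, and hence the theorem, is complete.
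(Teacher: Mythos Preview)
Your Part~1 is essentially the paper's argument: nonemptiness via an existing Nash allocation, closedness from upper semicontinuity of $\Lambda_i$ and continuity-from-above of $\rho$, and the lower bound $m\geq -\essinf X$ together with a level-set cut to obtain compactness.

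Part~2, however, has a genuine gap. Your contradiction strategy---reduce $m_i$ by $\delta$ and then compensate each bank $j\neq i$ separately by some $\eta_j e_j$---leads you, as you yourself note, only to $\sum_{j\neq i}\eta_j \leq (N-1)L\delta$, which is useless unless $(N-1)L<1$. You then assert that non-expansiveness, submodularity, and the Markov-chain structure ``must'' yield the required aggregate bound $\sum_{j\neq i}\eta_j < \delta$, but you never carry this out; it is precisely the missing step, and it is not clear that coordinatewise compensations can be summed this tightly.

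The paper sidesteps this bookkeeping entirely by choosing an \emph{explicit} perturbation rather than solving for compensations. The key property you are overlooking is the last clause of Proposition~\ref{prop:EN}: for $j\neq i$,
\[
p_j(x+\delta e_i)\;\leq\; p_j(x+\delta\,\pi_{i\cdot}).
\]
Thus, replacing $r^*$ by $r^*+\delta(\pi_{i\cdot}-e_i)$ (take $\delta$ from bank $i$ and redistribute it according to $i$'s relative-liability row) keeps every other bank's constraint satisfied at once, while bank $i$'s own constraint holds by monotonicity and the choice $\delta\leq r_i^*-\phi_i(r^*_{-i})$. The total capital changes by $\delta\bigl(\sum_{j\in[N]}\pi_{ij}-1\bigr)=-\delta\,\pi_{i0}<0$, giving the contradiction directly. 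No aggregate compensation estimate is needed; the ``leakage to society'' $\pi_{i0}>0$ is exactly what makes the redistribution strictly improving.
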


\begin{proof}
Let $X\in L^\infty(\R^N)$. We first argue that $R^*(X) \neq \emptyset$. By Theorem~\ref{thm:exist-unique}, there exists a Nash allocation rule $r\colon L^\infty(\R^N)\to\R^N$. In particular, $\bar\Lambda^{EN}_i(X+r(X))\in\A$ for each $i\in[N]$. Consequently, the feasible region of the problem in \eqref{eq:opt} is a non-empty set. Moreover, it is also a closed set, which can be checked similar to the proof of Corollary~\ref{cor:acceptable}. Next, for a feasible solution $m\in\R^N$, the acceptance set constraints implicitly imply that $X+m\in \dom\bar\Lambda^{EN}=\R^N_+$, i.e., $m_i\geq -\essinf X_i$ for every $i\in[N]$. Let $v(X)$ denote the optimal value of the problem. Then, we have $v(X)\in\R$ and
\[
v(X)=\negthinspace\inf_{m\in\R^N}\negthinspace\cb{\sum_{i\in [N]}m_i\; | \; \bar\Lambda^{EN}_i(X+m)\in \A\ \forall i\in[N], -\essinf X\leq m, \sum_{i\in[N]}m_i\leq \sum_{i\in[N]}\bar{m}_i(X)},
\]
where $\bar{m}(X)\in\R^N$ is an arbitrarily fixed feasible solution of the original problem. Since the feasible region of the new problem is compact and the objective function is continuous, there exists an optimal solution $r^*(X)$, which is also an optimal solution for the original problem.

Next, we show that every selector $r^*\colon L^\infty(\R^N)\to \R^N$ of $R^*$ is a Nash allocation rule. Note that a vector $m \in \R^N$ satisfies the definition of a Nash allocation rule at $X \in L^\infty(\R^N)$ if and only if, for every $i\in[N]$, we have $\rho(\bar\Lambda^{EN}_i(X+m)) \leq 0$ and the following implication holds:
\[
\rho(\bar\Lambda^{EN}_i(X+m)) < 0 \quad \Rightarrow \quad m_i = -\essinf X_i.
\]
To get a contradiction, suppose that $r^*(X)$ does not satisfy this definition. Then, using the notation of $\phi^X$ from the proof of Theorem~\ref{thm:exist-unique}, there exists some $i\in [N]$ such that $m^*_i(X) := \phi^X_i(r^*(X)) < r_i^*(X)$. Let $\delta \in (0,r_i^*(X) - m^*_i(X)]$.
Therefore, for every $j\in [N]\setminus\{i\}$, we have
\[
\rho(\bar\Lambda_j^{EN}(X + r^*(X) + \delta\pi_{i\cdot} - \delta e_i)) \leq \rho(\bar\Lambda_j^{EN}(X+r^*(X))) \leq 0
\] 
by the final property in Proposition~\ref{prop:EN}.
Additionally, by the monotonicity of the Eisenberg-Noe decomposition, the choice of $\delta$, and the definition of $m^*_i(X)$, we have
\[
\rho(\bar\Lambda_i^{EN}(X + r^*(X) + \delta(\pi_{i\cdot} - e_i))) \leq \rho(\bar\Lambda_i^{EN}(X + r^*(X) - \delta e_i)) \leq \rho(\bar\Lambda_i^{EN}(X + (\bar m^*_i(X),r_{-i}^*(X))) \leq 0.\]
Therefore, $r^*(X) + \delta(\pi_{i\cdot} - e_i)$ is feasible for the problem in \eqref{eq:opt} and we have $\sum_{j\in[N]} (r_j^*(X) + \delta(\pi_{ij} - e_{ij}))=\sum_{j\in[N]} r_j^*(X) - \pi_{i0} < \sum_{j\in[N]} r_j^*(X)$ since $\pi_{i0}>0$, contradicting the optimality of $r^*(X)$.
\end{proof}

Thus far, we have considered the existence of Nash allocation rules (Corollary~\ref{cor:EN}) and formulated an optimization problem to compute one such rule (Theorem~\ref{thm:optimization}). We conclude our general discussion on the Nash allocation rules for the sensitive Eisenberg-Noe systemic risk measure by providing two cases in which uniqueness of the Nash allocation can be guaranteed.

\begin{corollary}\label{cor:EN-unique}
Consider a financial network of $N$ banks with obligations $\bar p \in \R^{N \times (N+1)}_+$ satisfying $\bar p_{i0} > 0$ for each $i\in [N]$.
\begin{enumerate}
\item If $\pi_{i0} > \frac{N-2}{N-1}$ for every bank $i\in[N]$, then there exists a unique $(\A,(\Lambda_i^{EN})_{i \in [N]})$-Nash allocation rule.
\item Consider a deterministic shock $X = x$, where $x \in \R^N$. Then, every $(\A,(\Lambda_i^{EN})_{i \in [N]})$-Nash allocation rule takes the same value at $X$.
\end{enumerate}
\end{corollary}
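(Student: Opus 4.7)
For part~(1), the plan is to invoke the uniqueness criterion of Theorem~\ref{thm:exist-unique}\eqref{thm:exist-unique:2}. By Proposition~\ref{prop:EN-decomp}, the Eisenberg--Noe decomposition is self-preferential with constant $L\leq\max_{i\in[N]}(1-\pi_{i0})$. Under the hypothesis $\pi_{i0}>(N-2)/(N-1)$ for every $i\in[N]$, we have $1-\pi_{i0}<1/(N-1)$ for every $i$, hence $L<1/(N-1)$, which triggers the Banach-contraction conclusion of Theorem~\ref{thm:exist-unique}\eqref{thm:exist-unique:2} and yields the unique $(\A,(\Lambda_i^{EN})_{i\in[N]})$-Nash allocation rule.

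For part~(2), the plan begins from the observation that coherence of $\rho$ gives $\rho(c)=-c$ for every constant $c\in\R$. Hence, for a deterministic shock $X=x$, the Nash condition for bank $i$ simplifies to the scalar inequality $p_i(x+(m_i,r_{-i}))\geq\gamma P_i$ with $P_i:=\sum_{k=0}^N\bar p_{ik}$. Writing $y:=x+r$, every value $r=r(x)$ of a Nash allocation rule must satisfy the nonlinear complementarity problem
\[
y\geq 0,\qquad p(y)\geq\gamma P,\qquad y_i\bigl(p_i(y)-\gamma P_i\bigr)=0\quad\text{for every } i\in[N].
\]
The conclusion of part~(2) is therefore equivalent to showing that this NCP admits a unique solution at every $x\in\R^N$.

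To establish NCP uniqueness, I would argue by contradiction. Suppose $y\neq y'$ are two solutions and set $S:=\{i:y_i>y'_i\}$, which is nonempty without loss of generality. Complementarity at $y$ forces $p_i(y)=\gamma P_i$ for $i\in S$ (since $y_i>y'_i\geq 0$), while feasibility at $y'$ gives $p_i(y')\geq\gamma P_i=p_i(y)$; a symmetric statement holds on $T:=\{i:y_i<y'_i\}$. Letting $J:=\{i:p_i(y)<P_i\}\cup\{i:p_i(y')<P_i\}$, on this common default region the closed-form representation $p_J(\cdot)=(I-\pi_{JJ}^\T)^{-1}(\,\cdot\,_J+\pi_{J^cJ}^\T P_{J^c})$ applies at both $y$ and $y'$, and the Markov-chain bound from the proof of Proposition~\ref{prop:EN-decomp} yields strict own-asset dominance $[(I-\pi_{JJ}^\T)^{-1}]_{ii}>[(I-\pi_{JJ}^\T)^{-1}]_{ji}$ whenever $\pi_{i0}>0$. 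This dominance is incompatible with the simultaneous equalities $p_i(y)=p_i(y')$ for $i\in S\cup T$ combined with $y_S>y'_S$ and $y_T<y'_T$, producing the required contradiction and hence $y=y'$.

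The main obstacle will be the combinatorial mismatch of default sets at $y$ versus $y'$, which is handled by enlarging to the union $J$ above and treating the paying-in-full banks in $J^c$ as fixed constants $P_{J^c}$ in the clearing system; the strict positivity $\pi_{i0}>0$ for every $i$ is precisely what keeps the Leontief inverse $(I-\pi_{JJ}^\T)^{-1}$ well defined and preserves the strict diagonal dominance that drives the contradiction.
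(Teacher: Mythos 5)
Part~(1) of your proposal is exactly the paper's argument (the bound $L\leq\max_i(1-\pi_{i0})<\tfrac{1}{N-1}$ under the hypothesis, then Theorem~\ref{thm:exist-unique}\eqref{thm:exist-unique:2}), and it is correct. Your reduction of part~(2) to the complementarity problem $y\geq 0$, $p(y)\geq\gamma P$, $y_i(p_i(y)-\gamma P_i)=0$ is also a legitimate first step: since $\rho$ is coherent, $\rho(c)=-c$ on constants, the constraint set in the Nash infimum is closed by continuity of $p$, and the infimum is attained either at $y_i=0$ or where $p_i(y)=\gamma P_i$.

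The gap is in your uniqueness argument for that complementarity problem, precisely in the mixed case where both $S=\{i:y_i>y'_i\}$ and $T=\{i:y_i<y'_i\}$ are nonempty (the case $T=\emptyset$ does follow easily from the defaulting clearing equation at both points). First, the ``simultaneous equalities $p_i(y)=p_i(y')$ on $S\cup T$'' are not established: on $S$ you only get $p_i(y)=\gamma P_i\leq p_i(y')$, since complementarity at $y'$ gives $p_i(y')=\gamma P_i$ only if $y'_i>0$, which need not hold; symmetrically on $T$. Second, the closed-form $p_J(\cdot)=(I-\pi_{JJ}^\T)^{-1}(\cdot_J+\pi_{J^cJ}^\T P_{J^c})$ with $J$ the \emph{union} of the two default sets is not valid at both $y$ and $y'$: a bank that belongs to $J$ only because it defaults at $y'$ pays in full at $y$, so the linear equation fails for it at $y$ (the min binds at $\sum_k\bar p_{ik}$), and consequently $p_J(y)-p_J(y')$ is not $(I-\pi_{JJ}^\T)^{-1}(y_J-y'_J)$. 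Third, even in the clean sub-case where the default sets coincide and all coordinates of $S\cup T$ are strictly positive at both points, the asserted ``incompatibility'' of own-asset dominance with $d_S\leq 0\leq d_T$ and $h_S>0>h_T$ is not proved; it requires an actual linear-algebra argument with the Leontief inverse (e.g., a determinant or weighting argument), and in general dimensions this is exactly the nontrivial step. The paper avoids all of this by a different route: it constructs the allocation by a finite iterative algorithm that identifies the minimal set $I^*$ of banks that can hold zero post-capital assets while remaining individually acceptable (with the other banks paying exactly the fraction $\gamma$), shows the algorithm terminates in at most $N$ steps, and then argues that the zero-set $I=\{i: r_i(x)\leq -x_i\}$ of any Nash allocation must contain, and in fact equal, $I^*$, forcing $r(x)=m(x)$. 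If you want to keep your NCP formulation, you would still need to supply an argument of comparable strength for the mixed case, including the possibility that the two solutions have different default sets and zero coordinates.
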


\begin{proof}
The first claim is a trivial consequence of Theorem~\ref{thm:exist-unique:2} using the bound on the self-preferential constant $L \leq \max_{i\in[N]} (1 - \pi_{i0})$ from Proposition~\ref{prop:EN-decomp}.

To prove the second statement, let $X=x$ with $x\in\R^N$. Let $i\in[N]$. By construction, $\rho(\bar\Lambda_i^{EN}(x+m)) = -\bar\Lambda_i^{EN}(x+m)$ for every $m \geq -x$. Therefore, the acceptability constraint for bank $i$ is equivalent to $p_i(x+m) \geq \gamma \sum_{j = 0}^N \bar p_{ij}$. We take advantage of this structure in order to provide the following algorithm to compute a Nash allocation $m(x)\in\R^N$:
\begin{enumerate}
\item Initialize $k \leftarrow 0$ and set $I^0 \coloneqq \emptyset$.
\item\label{step2} Define $p_i^k \coloneqq \gamma \sum_{j = 0}^N \bar p_{ij}$ if $i \not\in I^k$. Define $(p_i^k)_{i \in I^k}$ as the unique solution of the Eisenberg-Noe clearing problem
    \[
    p_i^k = \min\left\{\sum_{j = 0}^N \bar p_{ij} \; , \; \gamma \sum_{j \not\in I^k} \bar p_{ji} + \sum_{j \in I^k} \pi_{ji} p_j^k\right\} \quad \forall i \in I^k.
    \]
\item Define $m^k$ by
    \[
    m_i^k \coloneqq \begin{cases} -x_i + \gamma \sum_{j = 0}^N \bar p_{ij} - \sum_{j = 1}^N \pi_{ji} p_j^k &\text{if } i \not\in I^k \\ -x_i &\text{if } i \in I^k. \end{cases}
    \]
\item Define $I^{k+1} \coloneqq \{i \in [N] \; | \; m_i^k \leq -x_i\}$ and iterate $k \leftarrow k+1$.
\item If $I^k = I^{k-1}$, then terminate the algorithm with $m(x) \coloneqq m^{k-1}$ and $I^* \coloneqq I^{k-1}$. Otherwise, return to step~\eqref{step2}.
\end{enumerate}
Note that this algorithm converges in at most $N$ iterations as $I^k \supseteq I^{k-1}$ for every $k \geq 1$.
Furthermore, by construction, it is clear that $m(x)$ is a Nash allocation, i.e., $m_i(x) = \inf\{m_i\in\R\mid \bar\Lambda^{EN}_i(x+(m_i,m_{-i}(x)))\in\A\}$ for every $i\in[N]$, as no bank can reduce their capital requirement without violating the acceptability constraint.

Finally, let $r\colon L^\infty(\R^N)\to\R^N$ be a Nash allocation rule. As in the above algorithm, $r(x)$ can be associated with the set of banks with no capital requirement $I := \{i \in [N] \; | \; r_i(x) \leq -x_i\}$. By the construction of $I^*$ to be the smallest possible set of banks who can passively maintain acceptability, it follows that $I \supseteq I^*$. Assume $I \supsetneq I^*$ and fix some $i \in I \backslash I^*$, i.e., $r_i(x) = -x_i < m_i(x)$. However, by simple inspection, $p_i(x+r) < \gamma \sum_{j = 0}^N \bar p_{ij}$ (or else $i \in I^*$ must hold). Therefore $I = I^*$ and, as a consequence, $r(x) = m(x)$, completing the proof.
\end{proof}

\begin{remark}
Though Corollary~\ref{cor:EN-unique} provides sufficient conditions for the uniqueness of a Nash allocation, the authors have not been able to construct any examples with multiple Nash allocations for the Eisenberg-Noe aggregation function. This hints that a stronger result may be possible, but we leave that for future work.
\end{remark}

\subsection{Numerical Examples}\label{sec:EN-numeric}

We conclude our discussion of the Eisenberg-Noe aggregation function by presenting two detailed case studies where we compute explicitly the Nash allocations under given stress scenarios. In both examples, we choose settings from Corollary~\ref{cor:EN-unique} that guarantee the uniqueness of the Nash allocations: first, in Section~\ref{sec:EN-numeric-2}, we consider the $N = 2$-bank setting; second, in Section~\ref{sec:EN-numeric-EBA}, we consider a deterministic setting calibrated to data from the European Banking Authority (EBA). Throughout, we compare the Nash allocations with the minimal possible systemically acceptable capital requirements. In all cases, though the Nash allocations may require strictly larger total capital, the Nash allocations provide financially justifiable selections from the set of all acceptable capitals.

\subsubsection{Two-Bank Example}\label{sec:EN-numeric-2}

Consider the $N = 2$-bank system with obligations $\bar p_{10}=\bar p_{20} = \bar p_{12} = 1$, $\bar p_{21} = \frac12$, and $\bar p_{11}=\bar p_{22}=0$. Furthermore, we set acceptability such that $\gamma = 95\%$ of obligations to society are recovered.
Recall from Corollary~\ref{cor:EN-unique} that this network immediately admits a unique Nash allocation for any stress scenario and coherent risk measure.
For the numerics, we set $X_i \sim U_i \sum_{j = 0}^N \bar p_{ij}$ for each bank $i\in[N]$, where $U=(U_1,\ldots,U_N)^{\T}$ is a vector of standard uniform random variables generated from a Gaussian copula with correlations $\frac12$.\footnote{The stress scenario is implemented with $1000$ sample values.}
Finally, we will consider two risk measures within this example:
\begin{enumerate}
\item \emph{Expectation risk measure}: $\A^\E := \{Y \in L^\infty(\R) \; | \; \E[Y] \geq 0\}$; and 
\item \emph{Average Value-at-Risk}: 
$\A^{AVaR} := \{Y \in L^\infty(\R) \; | \; \frac{1}{\alpha}\int_0^\alpha VaR_\gamma(Y) d\gamma \leq 0\}$ at $\alpha = 5\%$, where $VaR_{\gamma}(Z) := \inf\{m \in \R \; | \; \Pr\{X+m < 0\} \leq \gamma\}$.
\end{enumerate}

\begin{table}[ht]
\begin{center}
\begin{tabular}{|r|c|c|c||c|c|c|}
\multicolumn{1}{r}{} & \multicolumn{6}{c}{\textbf{Insensitive Systemic Risk Measure}} \\ \cline{2-7}
\multicolumn{1}{r|}{} &\multicolumn{3}{c||}{\textbf{Expectation Risk Measure}} & \multicolumn{3}{c|}{\textbf{Average Value-at-Risk}} \\ \cline{2-7}
\multicolumn{1}{r|}{} & \textbf{Minimal} & \textbf{Nash} & \textbf{Comonotonic} & \textbf{Minimal} & \textbf{Nash} & \textbf{Comonotonic} \\ \hline
\textbf{Bank 1} & \cellcolor{lightgray} & 0.2772 & 0.2865 & \cellcolor{lightgray} & 0.8723 & 0.9088 \\ \hline
\textbf{Bank 2} & \cellcolor{lightgray} & 0.1657 & 0.1996 & \cellcolor{lightgray} & 0.8299 & 0.8946 \\ \hline\hline
\textbf{Total}  & 0.4429 & 0.4429 & 0.4861 & 1.6867 & 1.7022 & 1.8034 \\ \hline
\multicolumn{7}{c}{} \\
\multicolumn{1}{r}{} & \multicolumn{6}{c}{\textbf{Sensitive Systemic Risk Measure}} \\ \cline{2-7}
\multicolumn{1}{r|}{} &\multicolumn{3}{c||}{\textbf{Expectation Risk Measure}} & \multicolumn{3}{c|}{\textbf{Average Value-at-Risk}} \\ \cline{2-7}
\multicolumn{1}{r|}{} & \textbf{Minimal} & \textbf{Nash} & \textbf{Comonotonic} & \textbf{Minimal} & \textbf{Nash} & \textbf{Comonotonic} \\ \hline
\textbf{Bank 1} & 0.8994 & 0.9201 & 0.9403 & 1.3311 & 1.3569 & 1.3703 \\ \hline
\textbf{Bank 2} & 0.1697 & 0.1502 & 0.2192 & 0.3963 & 0.4024 & 0.4331 \\ \hline\hline
\textbf{Total}  & 1.0691 & 1.0703 & 1.1595 & 1.7275 & 1.7593 & 1.8034 \\ \hline
\end{tabular}
\end{center}
\caption{Section~\ref{sec:EN-numeric-2}: Summary table of capital requirements under different systemic risk measures and utilizing (i) minimal capital requirements; (ii) the Nash allocation rule; and (iii) the Nash allocation rule under the comonotonic copula.}
\label{table:2bank}
\end{table}

Within Table~\ref{table:2bank}, we can see how the choice of aggregation and acceptance set alter the capital requirements assessed. Throughout, we take the minimal capital requirements as the minimizer of $\{m_1 + m_2 \; | \; m \in R(X)\}$ for the appropriate choice of systemic risk measure $R$. Notably, for the insensitive case, because this systemic risk measure is a (truncated) half-space, we do not record the individual minimal requirements but only the total capital requirement assessed. In addition, we benchmark these results by displaying the Nash allocations when only the marginal distributions for each bank are known (i.e., under the comonotonic copula). In every setting, we notice that the total minimal requirements are lower than the aggregate Nash allocations, which in turn are lower than the comonotonic copula case. In fact, the Nash allocations for each bank are lower than their counterpart under the comonotonic copula; this, however, does not hold between the minimal capital requirements and the Nash allocations. The sensitive systemic risk measures and capital allocations are plotted in Figure~\ref{fig:2bank}.

\begin{figure}[ht]
\centering
\begin{subfigure}[t]{0.45\textwidth}
\includegraphics[width=\textwidth]{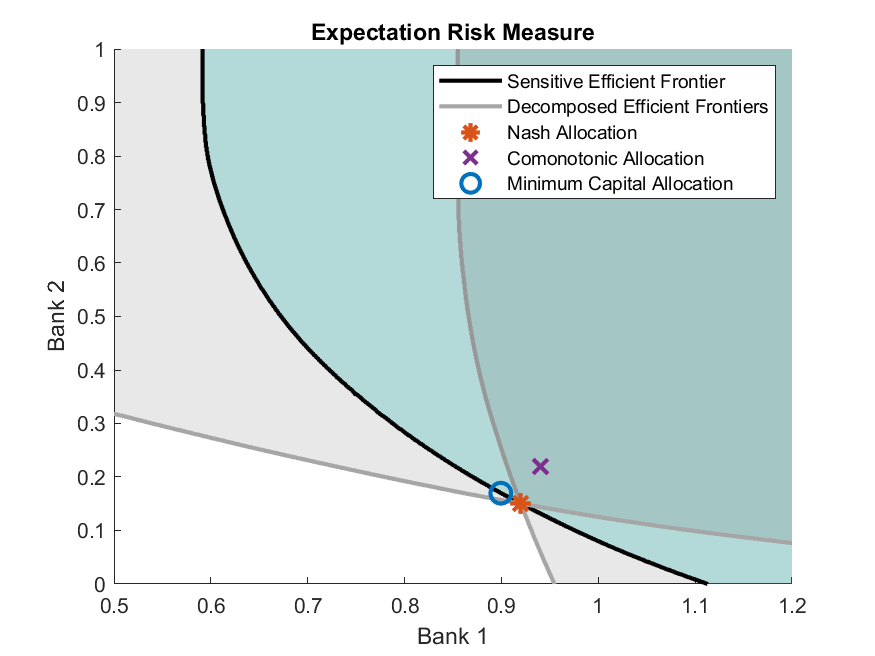}
\caption{Expectation risk measure}
\label{fig:2bank-expectation}
\end{subfigure}
~
\begin{subfigure}[t]{0.45\textwidth}
\includegraphics[width=\textwidth]{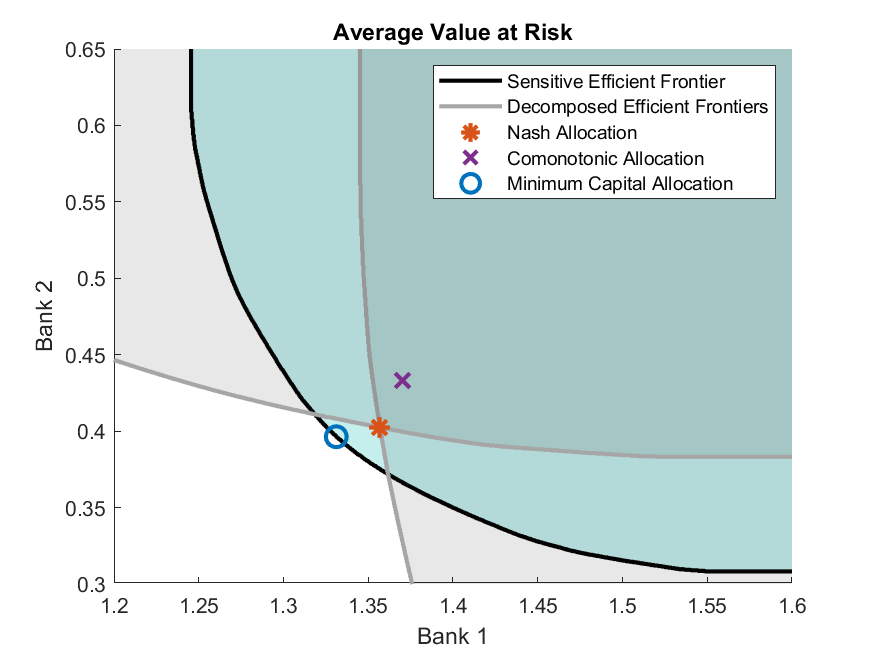}
\caption{Average Value-at-Risk}
\label{fig:2bank-avar}
\end{subfigure}
\caption{Section~\ref{sec:EN-numeric-2}: Visualizations of the sensitive systemic risk measures
and different allocation rules under different risk measures.}
\label{fig:2bank}
\end{figure}

\subsubsection{European Banking Data}\label{sec:EN-numeric-EBA}
Herein, we consider a larger system with $N = 87$ banks. We calibrate this system to the 2011 EBA stress testing dataset as also undertaken in, e.g.,~\cite{gandy2017bayesian,chen2016optimization,feinstein2018sensitivity}.\footnote{The exact calibration methodology is provided in Appendix~\ref{sec:calibration}.} In comparison to Section~\ref{sec:EN-numeric-2} above, in this example, we will consider a deterministic shock in which the initial external assets $X_i$ of bank $i\in[N]$ are as reported in the EBA dataset less 10\% of its total exposures.\footnote{Note that, as there is no randomness, the comonotonic copula of $X$ is itself.} 
Due to the deterministic shock, every coherent risk measure provides the same results; therefore, we can completely characterize the set of acceptable allocations by the threshold $\gamma = 97.5\%$ for the total repayment to society. For the purposes of this example, we will solely consider the sensitive systemic risk measure setting. Recall from Corollary~\ref{cor:EN-unique} that this system admits a unique Nash allocation. 

Figure~\ref{fig:EBA} displays the capital requirements assigned to each bank in the EBA dataset under the sensitive systemic risk measure setting. First, it is notable how the minimal capital requirements and the Nash allocation closely coincide for many banks in the system (Figure~\ref{fig:EBA-comp-zoom}). However, as observed in Figure~\ref{fig:EBA-comp}, under the minimal capital requirements, there are a few institutions that ``freeload'' by having large negative capital requirements which need to be offset by other banks. As observed in Figure~\ref{fig:EBA-comp}, some of the minimal capital allocations are of a larger order of magnitude than the Nash allocations. In comparison, the Nash allocations appear to be fairer in the sense that the requirements for all banks are closely aligned (but do differ due to bank-specific attributes).
When we look at the total capital requirements, we find that the Nash allocations are less than \euro4.7 Million larger than the minimal acceptable capital allocation; the minimal total capital required is \euro728,316.4122 Million whereas the total Nash allocation is \euro728,321.1119 Million. Notably, this \euro4.7 Million is distributed throughout the $87$ institutions as only 4 banks have a lower capital requirement under the minimal capital allocation than under Nash.

\begin{figure}[ht]
\centering
\begin{subfigure}[t]{0.45\textwidth}
\includegraphics[width=\textwidth]{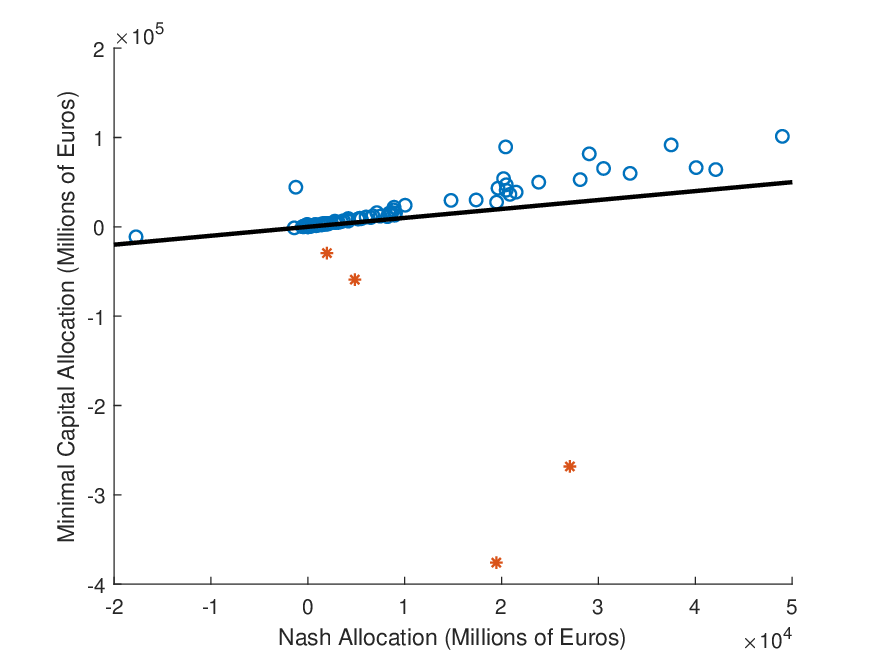}
\caption{All capital allocations.}
\label{fig:EBA-comp}
\end{subfigure}
~
\begin{subfigure}[t]{0.45\textwidth}
\includegraphics[width=\textwidth]{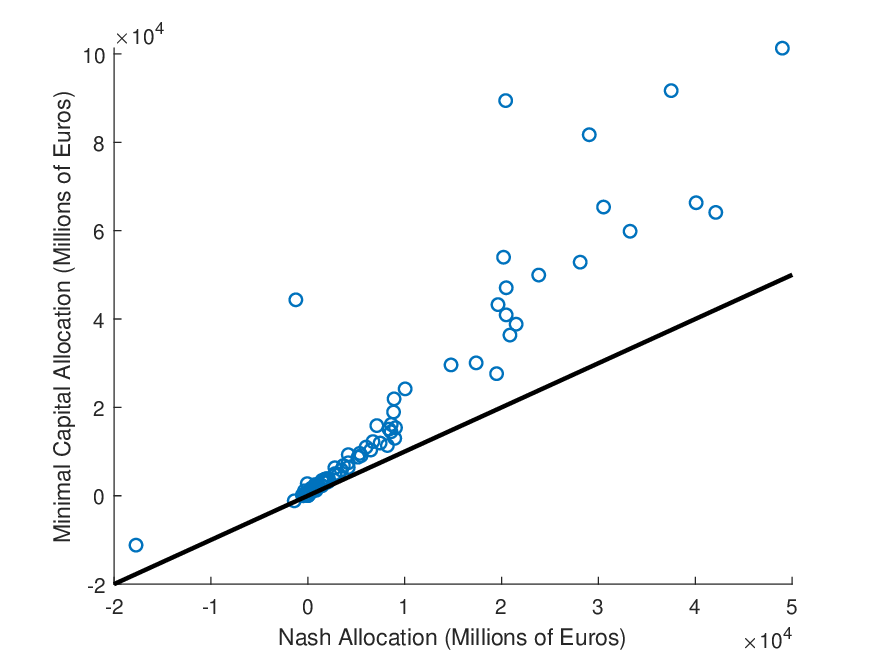}
\caption{Only banks with smaller capital requirements under the Nash allocation.}
\label{fig:EBA-comp-zoom}
\end{subfigure}
\caption{Section~\ref{sec:EN-numeric-EBA}: Visualization of the Nash (x-axis) and minimal capital allocations (y-axis). The solid black line indicates the level where the Nash and minimal capital allocations coincide.  Banks with larger capital requirements under the minimal capital allocations are blue circles; banks with larger capital requirements under the Nash allocations are red stars.}
\label{fig:EBA}
\end{figure}

\section{Conclusion}\label{sec:conclusion}
Within this work, we propose a novel and financially meaningful allocation mechanism for capital requirements. This approach, based on game theory in which institutions compete for lower capital requirements, allows regulators to assign capital requirements to banks while enforcing macro-prudential requirements. We find that the insensitive risk measures of \cite{chen2013axiomatic,kromer2016systemic} generally permit a unique Nash allocation rule, whereas the sensitive risk measures of \cite{feinstein2017measures} may require additional regularity conditions for uniqueness. We investigate the Eisenberg-Noe clearing system~\citep{eisenberg2001systemic} for systemic risk measures and numerically illustrate the associated Nash allocations for the Eisenberg-Noe aggregation function.

In Remark~\ref{rem:sd}, we argue that the scenario-dependent systemic risk measures studied in \cite{biagini2019unified,biagini2020fairness,biagini2021systemic} can be seen as a special case of sensitive risk measures with a single-element aggregation function calculated by an optimization problem. Besides this connection, we leave considerations about decomposing the scenario-dependent aggregation function $\bar\Lambda^{SD}$ for future work. We conjecture that such decompositions are best studied in light of cooperative game theory due to the inherent pooling of capital implied by the aggregation structure.

While the Eisenberg-Noe clearing system has a natural decomposition based on the clearing payments of banks, with a clear financial interpretation, choosing a suitable decomposition for the aggregation function can be a separate task in general. To address this, in Appendix~\ref{app:nash-decomposition}, we formulate an infinite-dimensional optimization problem that searches for a decomposition with the least possible self-preferential constant. We guarantee the existence of an optimal solution for this problem and illustrate a simple example where this solution can be calculated explicitly. We leave it for future research to provide a general numerical methodology to approximately solve this problem; in Remark~\ref{rem:neuralnet}, we present some first thoughts to this effect using neural network approximations.

While we focus on the existence and uniqueness of Nash allocation rules within this work, other mathematical and financial properties of these rules need to be investigated. For instance, we wish to consider if any of (lower semi-)continuity, monotonicity, translativity, or convexity can be recovered for a Nash allocation rule. We highlight that, as per~\cite{ararat2024separability}, at least one of these desirable properties cannot hold generally as interactions between firms is guaranteed by the construction.

\section*{Acknowledgments}
The first author is partially supported by T\"{U}B{\.I}TAK (Scientific \& Technological Research Council of Turkey), Project No. 123F357.

\bibliographystyle{plainnat}
\bibliography{biblio.bib}

@article{bensoussanLQ,
    title = {Linear-quadratic mean field games},
    author = {Bensoussan, Alain and Sung, K. C. Joseph and Yam, S. C. Phillip and Yung, S. P},
    journal = {Journal of Optimization Theory and Applications},
    year = {2016},
    volume = {169},
    number = {2},
    pages = {496--529},
}

@article{carmonaMF,
    title = {Control of {M}c{K}ean–{V}lasov dynamics versus mean field games},
    author = {Carmona, Ren\'{e} and Delarue, François and Lachapelle, Aim\'{e}},
    journal = {Mathematics and Financial Economics},
    year = {2013},
    volume = {7},
    number = {2},
    pages = {131--166}
}

@article{meimanjan2023,
    title = {Computation of systemic risk measures: a mixed-integer programming approach},
    author = {Ararat, \c{C}a\u{g}{\i}n and Meimanjan, Nurtai},
    journal = {Operations Research},
    volume = {71},
    number = {6},
    pages = {2130--2145},
    year = {2023}
}

@book{bourbaki,
    title = {General Topology: Chapters 5--10},
    author = {Bourbaki, Nicolas},
    year = {1989},
    publisher = {Springer}
}

@book{cinlar1975,
    title = {Introduction to Stochastic Processes},
    author = {\c{C}{\i}nlar, Erhan},
    year = {1975},
    publisher = {Prentice-Hall}
}

@book{fs:sf,
    title = {Stochastic Finance: An Introduction in Discrete-Time},
    author = {Föllmer, Hans and Schied, Alexander},
    edition = {4th revised and extended},
    year = {2016},
    publisher = {De Gruyter}
}

@book{rockafellar,
    title = {Convex Analysis},
    author = {Rockafellar, R. Tyrrell},
    publisher = {Princeton University Press},
    year = {1970}
}

@book{rockafellarwets,
    title = {Variational Analysis},
    author = {Rockafellar, R. Tyrrell and Wets, Roger J-B},
    publisher = {Springer},
    year = {2009}
}

@article{banerjee2022pricing,
  title={Pricing of debt and equity in a financial network with comonotonic endowments},
  author={Banerjee, Tathagata and Feinstein, Zachary},
  journal={Operations Research},
  volume={70},
  number={4},
  pages={2085--2100},
  year={2022},
  publisher={INFORMS}
}

@article{feinstein2017measures,
  title={Measures of systemic risk},
  author={Feinstein, Zachary and Rudloff, Birgit and Weber, Stefan},
  journal={SIAM Journal on Financial Mathematics},
  volume={8},
  number={1},
  pages={672--708},
  year={2017},
  publisher={SIAM}
}

@article{biagini2019unified,
  title={A unified approach to systemic risk measures via acceptance sets},
  author={Biagini, Francesca and Fouque, Jean-Pierre and Frittelli, Marco and Meyer-Brandis, Thilo},
  journal={Mathematical Finance},
  volume={29},
  number={1},
  pages={329--367},
  year={2019},
  publisher={Wiley Online Library}
}

@article{chen2013axiomatic,
  title={An axiomatic approach to systemic risk},
  author={Chen, Chen and Iyengar, Garud and Moallemi, Ciamac C},
  journal={Management Science},
  volume={59},
  number={6},
  pages={1373--1388},
  year={2013},
  publisher={INFORMS}
}

@article{kromer2016systemic,
  title={Systemic risk measures on general measurable spaces},
  author={Kromer, Eduard and Overbeck, Ludger and Zilch, Katrin},
  journal={Mathematical Methods of Operations Research},
  volume={84},
  pages={323--357},
  year={2016},
  publisher={Springer}
}

@article{biagini2020fairness,
  title={On fairness of systemic risk measures},
  author={Biagini, Francesca and Fouque, Jean-Pierre and Frittelli, Marco and Meyer-Brandis, Thilo},
  journal={Finance and Stochastics},
  volume={24},
  number={2},
  pages={513--564},
  year={2020},
  publisher={Springer}
}

@article{biagini2021systemic,
  title={Systemic optimal risk transfer equilibrium},
  author={Biagini, Francesca and Doldi, Alessandro and Fouque, Jean-Pierre and Frittelli, Marco and Meyer-Brandis, Thilo},
  journal={Mathematics and Financial Economics},
  volume={15},
  pages={233--274},
  year={2021},
  publisher={Springer}
}

@article{doldi2022multivariate,
  title={Multivariate systemic optimal risk transfer equilibrium},
  author={Doldi, Alessandro and Frittelli, Marco},
  volume = {336},
  number = {1-2},
  journal={Annals of Operations Research},
  pages={435--480},
  year={2024},
  publisher={Springer}
}

@article{brunnermeier2019measuring,
  title={Measuring and allocating systemic risk},
  author={Brunnermeier, Markus K and Cheridito, Patrick},
  journal={Risks},
  volume={7},
  number={2},
  pages={46},
  year={2019},
  publisher={MDPI}
}

@article{armenti2018multivariate,
  title={Multivariate shortfall risk allocation and systemic risk},
  author={Armenti, Yannick and Cr{\'e}pey, St{\'e}phane and Drapeau, Samuel and Papapantoleon, Antonis},
  journal={SIAM Journal on Financial Mathematics},
  volume={9},
  number={1},
  pages={90--126},
  year={2018},
  publisher={SIAM}
}

@article{ararat2020dual,
  title={Dual representations for systemic risk measures},
  author={Ararat, {\c{C}}a{\u{g}}{\i}n and Rudloff, Birgit},
  journal={Mathematics and Financial Economics},
  volume={14},
  number={1},
  pages={139--174},
  year={2020},
  publisher={Springer}
}

@book{aliprantis-border,
    title = {Infinite Dimensional Analysis A Hitchiker's Guide},
    author = {Aliprantis, Charalambos D. and Border, Kim C.},
    publisher = {Springer},
    edition = {4th},
    year = {2006}
}

@book{shaked2007stochastic,
  title={Stochastic Orders},
  author={Shaked, Moshe and Shanthikumar, J George},
  year={2007},
  publisher={Springer}
}

@article{eisenberg2001systemic,
  title={Systemic risk in financial systems},
  author={Eisenberg, Larry and Noe, Thomas H},
  journal={Management Science},
  volume={47},
  number={2},
  pages={236--249},
  year={2001},
  publisher={INFORMS}
}

@article{ararat2024separability,
  title={Short communication: on the separability of vector-valued risk measures},
  author={Ararat, {\c{C}}a{\u{g}}{\i}n and Feinstein, Zachary},
  journal={SIAM Journal on Financial Mathematics},
  volume = {15},
  number = {4},
  pages = {SC68--SC79},
  year={2024}
}

@article{liu2010sensitivity,
  title={Sensitivity analysis of the {E}isenberg--{N}oe model of contagion},
  author={Liu, Ming and Staum, Jeremy},
  journal={Operations Research Letters},
  volume={38},
  number={5},
  pages={489--491},
  year={2010},
  publisher={Elsevier}
}

@article{ben2007old,
  title={An old-new concept of convex risk measures: The optimized certainty equivalent},
  author={Ben-Tal, Aharon and Teboulle, Marc},
  journal={Mathematical Finance},
  volume={17},
  number={3},
  pages={449--476},
  year={2007},
  publisher={Wiley Online Library}
}

@article{rogers2013failure,
  title={Failure and rescue in an interbank network},
  author={Rogers, Leonard CG and Veraart, Luitgard AM},
  journal={Management Science},
  volume={59},
  number={4},
  pages={882--898},
  year={2013},
  publisher={INFORMS}
}

@article{gandy2017bayesian,
  title={A {B}ayesian methodology for systemic risk assessment in financial networks},
  author={Gandy, Axel and Veraart, Luitgard AM},
  journal={Management Science},
  volume={63},
  number={12},
  pages={4428--4446},
  year={2017},
  publisher={INFORMS}
}

@article{chen2016optimization,
  title={An optimization view of financial systemic risk modeling: Network effect and market liquidity effect},
  author={Chen, Nan and Liu, Xin and Yao, David D},
  journal={Operations Research},
  volume={64},
  number={5},
  pages={1089--1108},
  year={2016},
  publisher={INFORMS}
}

@article{feinstein2018sensitivity,
  title={Sensitivity of the {E}isenberg--{N}oe Clearing Vector to Individual Interbank Liabilities},
  author={Feinstein, Zachary and Pang, Weijie and Rudloff, Birgit and Schaanning, Eric and Sturm, Stephan and Wildman, Mackenzie},
  journal={SIAM Journal on Financial Mathematics},
  volume={9},
  number={4},
  pages={1286--1325},
  year={2018},
  publisher={SIAM}
}

@article{braouezec2023economic,
  title={Economic foundations of generalized games with shared constraint: Do binding agreements lead to less {N}ash equilibria?},
  author={Braouezec, Yann and Kiani, Keyvan},
  journal={European Journal of Operational Research},
  volume={308},
  number={1},
  pages={467--479},
  year={2023},
  publisher={Elsevier}
}

@techreport{feinstein2023acceptable,
  title={Acceptable Designs of Traffic Networks: Stochastic Cell Transmission Models and Systemic Risk},
  author={Feinstein, Zachary and Kleiber, Marcel and Weber, Stefan},
  year={2023},
  institution={Working Paper}
}

@article{gonon2024computing,
  title={Computing Systemic Risk Measures with Graph Neural Networks},
  author={Gonon, Lukas and Meyer-Brandis, Thilo and Weber, Niklas},
  journal={arXiv preprint arXiv:2410.07222},
  year={2024}
}

@article{doldi2023multivariate,
  title={Multivariate systemic risk measures and computation by deep learning algorithms},
  author={Doldi, Alessandro and Feng, Yichen and Fouque, J-P and Frittelli, Marco},
  journal={Quantitative Finance},
  volume={23},
  number={10},
  pages={1431--1444},
  year={2023},
  publisher={Taylor \& Francis}
}

@article{denault,
  title = {Coherent allocation of risk capital},
  author = {Denault, Michel},
  journal = {Journal of Risk},
  volume = {4},
  number = {1},
  pages = {1--34},
  year = {2001}
}

@article{artzner1999coherent,
  title={Coherent measures of risk},
  author={Artzner, Philippe and Delbaen, Freddy and Eber, Jean-Marc and Heath, David},
  journal={Mathematical Finance},
  volume={9},
  number={3},
  pages={203--228},
  year={1999},
  publisher={Wiley Online Library}
}

@article{follmer2002convex,
  title={Convex measures of risk and trading constraints},
  author={F{\"o}llmer, Hans and Schied, Alexander},
  journal={Finance and Stochastics},
  volume={6},
  pages={429--447},
  year={2002},
  publisher={Springer}
}

@article{fissler2019elicitability,
  title={Elicitability and identifiability of set-valued measures of systemic risk},
  author={Fissler, Tobias and Hlavinov{\'a}, Jana and Rudloff, Birgit},
  journal={Finance and Stochastics},
  volume = {25},
  number = {1},
  pages = {133--165},
  year={2021}
}

@article{wang2023optimal,
  title={Optimal scenario-dependent multivariate shortfall risk measure and its application in risk capital allocation},
  author={Wang, Wei and Xu, Huifu and Ma, Tiejun},
  journal={European Journal of Operational Research},
  volume={306},
  number={1},
  pages={322--347},
  year={2023},
  publisher={Elsevier}
}

@article{adrian2016covar,
  title={{CoVaR}},
  author={Adrian, Tobias and Brunnermeier, Markus K},
  journal={The American Economic Review},
  volume={106},
  number={7},
  pages={1705--1741},
  year={2016},
  publisher={American Economic Association}
}

@article{feinstein2019obligations,
  title={Obligations with physical delivery in a multilayered financial network},
  author={Feinstein, Zachary},
  journal={SIAM Journal on Financial Mathematics},
  volume={10},
  number={4},
  pages={877--906},
  year={2019},
  publisher={SIAM}
}

@article{amini2016uniqueness,
  title={Uniqueness of equilibrium in a payment system with liquidation costs},
  author={Amini, Hamed and Filipovi{\'c}, Damir and Minca, Andreea},
  journal={Operations Research Letters},
  volume={44},
  number={1},
  pages={1--5},
  year={2016},
  publisher={Elsevier}
}

@article{cifuentes2005liquidity,
  title={Liquidity risk and contagion},
  author={Cifuentes, Rodrigo and Ferrucci, Gianluigi and Shin, Hyun Song},
  journal={Journal of the European Economic Association},
  volume={3},
  number={2-3},
  pages={556--566},
  year={2005},
  publisher={Oxford University Press}
}

@article{feinstein2017financial,
  title={Financial contagion and asset liquidation strategies},
  author={Feinstein, Zachary},
  journal={Operations Research Letters},
  volume={45},
  number={2},
  pages={109--114},
  year={2017},
  publisher={Elsevier}
}

@inproceedings{amos2017input,
  title={Input convex neural networks},
  author={Amos, Brandon and Xu, Lei and Kolter, J Zico},
  booktitle={International conference on machine learning},
  pages={146--155},
  year={2017},
  organization={PMLR}
}

\appendix

\section{Nash Allocations with Rescue Funds}\label{sec:rescue}

Within the main body of this text, we define Nash allocations so that each firm holds onto its own capital. However, as shown in, e.g.,~\cite{rogers2013failure}, there are cases in which rescuing another bank may lead to better outcomes for the rescuer as well as the rescued firm. With that in mind, we propose the following variation on Definition~\ref{defn:nash-allocation}.

\begin{definition}\label{defn:nash-rescue}
Let $(\Lambda_i)_{i \in [N]}$ be a decomposition of $\Lambda$. A matrix-valued functional $r\colon L^\infty_\Lambda(\R^N) \to \R^{N \times N}$ is called a \textbf{$(\A,(\Lambda_i)_{i \in [N]})$-Nash allocation rule with rescue funds} if, for every $X \in L^\infty_\Lambda(\R^N)$ and $i\in[N]$, it holds
\[
r_{i\cdot}(X) \in \argmin_{m_i \in \R^N} \left\{\sum_{j \in [N]} m_{ij} \; \left| \; \begin{array}{l} \Lambda_i(X,m_i + \sum_{j \neq i} r_{j\cdot}(X)) \in \A, \\ m_{ii}e_i + \sum_{j \neq i} r_{jj}(X)e_j \in D(X), \; m_{ij} \geq 0 \, \forall j \neq i \end{array}\right.\right\}.
\]
In this case, $r$ is called \textbf{acceptable} if $\sum_{i \in [N]} r_{i\cdot}(X) \in R(X)$ for every $X \in L^\infty_\Lambda(\R^N)$.
\end{definition}

A Nash allocation rule with rescue fund generalizes the Nash allocation rules introduced within Definition~\ref{defn:nash-allocation}. Specifically, this variation allows for firm $i$ to provide a rescue $m_{ij} \geq 0$ to firm $j$. Notably, these rescues must be nonnegative to restrict banks from stealing funds from other firms in the system.\footnote{An additional feasibility condition is imposed which guarantees the finiteness of the aggregation function for each bank when neglecting the cash transfers. This condition is imposed rather than, for bank $i$, $m_i + \sum_{j \neq i} r_{j\cdot}(X) \in D(X)$ to preclude bank $i$ from using the provided transfers $r_{ji}(X)$ to guarantee that the system is feasible, i.e., each bank guarantees that the aggregation is well-defined even if there is a delay in the transfer payments.} Overall, each institution is still attempting to minimize the total capital it is injecting into the system.

We conclude this discussion of the Nash allocation rules with rescue funds by considering when this generalization reduces to Definition~\ref{defn:nash-allocation}. Specifically, we find that the self-preferential property guarantees that any Nash allocation rule with rescue funds will not include any transfer payments.
\begin{proposition}\label{prop:self-preferential}
Let $(\Lambda_i)_{i \in [N]}$ be a \emph{self-preferential} decomposition of $\Lambda$. Let $r\colon L^\infty_\Lambda(\R^N) \to \R^{N\times N}$ define a $(\A,(\Lambda_i)_{i \in [N]})$-Nash allocation rule with rescue funds. Then, for every $X \in L^\infty_\Lambda(\R^N)$ and $i,j\in[N]$ with $i\neq j$, it holds $r_{ij}(X) = 0$. In particular, $(r_{11},\ldots,r_{NN})^{\mathsf{T}}$ is a $(\A,(\Lambda_i)_{i \in [N]})$-Nash allocation rule as defined in Definition~\ref{defn:nash-allocation}.
\end{proposition}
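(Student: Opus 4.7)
The plan is a proof by contradiction. Fix $X \in L^\infty_\Lambda(\R^N)$ and suppose that $r_{ij}(X) = \delta > 0$ for some $i \neq j$. The goal is to exhibit a strictly cheaper feasible alternative $\tilde m_i \in \R^N$ for bank $i$ by ``pulling back'' the rescue into its own position at the discounted rate $L$: set $\tilde m_{ii} := r_{ii}(X) + L\delta$, $\tilde m_{ij} := 0$, and $\tilde m_{ik} := r_{ik}(X)$ for $k \notin \{i,j\}$. Since $L < 1$, the total cost decreases by $(1-L)\delta > 0$, which would contradict the $\argmin$ condition in Definition~\ref{defn:nash-rescue}.

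Non-negativity $\tilde m_{ik} \geq 0$ for $k \neq i$ is immediate, and the diagonal constraint $\tilde m_{ii} e_i + \sum_{k \neq i} r_{kk}(X) e_k \in D(X)$ follows from $\tilde m_{ii} \geq r_{ii}(X)$, bank $i$'s original feasibility, and the fact that $D(X)$ is a rectangular upper set. The main obstacle is to verify the acceptability constraint. Writing $M := \sum_{k \in [N]} r_{k\cdot}(X)$ and $M' := \tilde m_i + \sum_{k \neq i} r_{k\cdot}(X)$, one checks $M' = M + L\delta e_i - \delta e_j$. To invoke the self-preferential inequality with the base $m' := M - \delta e_j$ (so that $m' + \delta e_j = M$ and $m' + L\delta e_i = M'$), we must establish $m' \in D(X)$.

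Using the rectangular decomposition $D(X) = \prod_k A_k$ with each $A_k \in \{\R,\,[a_k,\infty)\}$, only the $j$-th coordinate of $m'$ differs from $M$. For $k \neq j$,
\[
M_k = r_{kk}(X) + \sum_{l \neq k} r_{lk}(X) \geq r_{kk}(X) \geq a_k
\]
holds by non-negativity of the off-diagonal rescue funds and bank $k$'s diagonal feasibility (which implies $r_{kk}(X) \in A_k$). For $k = j$, since $r_{ij}(X) = \delta$ is the only rescue fund subtracted, $m'_j = r_{jj}(X) + \sum_{k \neq i,j} r_{kj}(X) \geq r_{jj}(X) \geq a_j$ by the same reasoning. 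Hence $m' \in D(X)$, the self-preferential inequality yields $\Lambda_i(X,M') \geq \Lambda_i(X,M)$ a.s., and the monotonicity of $\A$ (inherited from that of $\rho$) combined with $\Lambda_i(X,M) \in \A$ gives $\Lambda_i(X,M') \in \A$. This contradicts the optimality of $r_{i\cdot}(X)$, forcing $r_{ij}(X) = 0$ for every $i \neq j$.

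For the final assertion, with all off-diagonal entries vanishing, $m_i + \sum_{k \neq i} r_{k\cdot}(X)$ equals $(m_{ii},\,(r_{kk}(X))_{k \neq i})$ whenever $m_{ij} = 0$ for $j \neq i$. Applying the same displacement argument to any candidate minimizer shows that bank $i$'s $\argmin$ also sets $m_{ij} = 0$ for $j \neq i$, so the problem reduces to a scalar minimization of $m_{ii}$ subject to $\Lambda_i(X,(m_{ii},\,(r_{kk}(X))_{k \neq i})) \in \A$; the closedness of $\A$ together with the continuity from above of $\rho$ (Assumption~\ref{asmp:coherent}) ensures that the infimum in Definition~\ref{defn:nash-allocation} is attained and coincides with $r_{ii}(X)$.
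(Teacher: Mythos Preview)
Your argument is correct and follows the same contradiction strategy as the paper: use the self-preferential inequality to replace a positive rescue $r_{ij}(X)=\delta>0$ by the cheaper self-injection $L\delta$ in coordinate $i$, contradicting optimality. The paper collapses all off-diagonal rescues of bank $i$ at once (replacing $\sum_{j\neq i}r_{ij}(X)$ by $L\sum_{j\neq i}r_{ij}(X)$ in coordinate $i$), whereas you treat a single coordinate $j$; this is an inessential difference. You are more careful than the paper in two places: you verify that the base point $m'=M-\delta e_j$ lies in $D(X)$ before invoking the self-preferential inequality (using the rectangular structure of $D(X)$ and the diagonal feasibility constraint, which yields $r_{kk}(X)\in A_k$), and you spell out why the diagonal $(r_{11},\ldots,r_{NN})^{\mathsf T}$ satisfies Definition~\ref{defn:nash-allocation}. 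The paper's proof omits both of these checks. Your final remark about attainment of the infimum is slightly more than needed: once you know $r_{i\cdot}(X)$ is a minimizer with zero off-diagonal entries, the reduced scalar problem automatically has $r_{ii}(X)$ as its minimum, hence its infimum.
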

\begin{proof}
Let $L < 1$ be the self-preferential constant associated with $(\Lambda_i)_{i \in [N]}$. Let us fix a stress scenario $X \in L^\infty_\Lambda(\R^N)$.
Consider bank $i \in [N]$ and assume that $\sum_{j \neq i} r_{ij}(X) > 0$. 
By the monotonicity and self-preferential property of the decomposition, we have
\[
\Lambda_i\of{X,r_{i\cdot}(X) + \sum_{j \neq i} r_{j\cdot}(X)} \leq \Lambda_i\of{X, \of{r_{ii}(X) + L \sum_{j \neq i} r_{ij}(X)}e_i + \sum_{j \neq i} r_{j\cdot}(X)}.
\]
Notably, $r_{ii}(X) + L \sum_{j \neq i} r_{ij}(X) < \sum_{k \in [N]} r_{ik}(X)$, contradicting the optimality of $r_{i\cdot}(X)$.
\end{proof}

\begin{remark}
Proposition~\ref{prop:self-preferential} can be extended to the $L = 1$ case in the following way:
The best response function $B_i(r_{-i};X)$ for bank $i$ always includes a point $r_{ij} = 0$ for $j \neq i$. As institutions would prefer to keep funds for themselves, all else being equal, we can consider banks to consider the selector $r_{ii} e_i \in B_i(r_{-i};X)$ thus reducing to the Nash allocation rules introduced in Definition~\ref{defn:nash-allocation}.
\end{remark}

\section{Constructing a Decomposition}\label{app:nash-decomposition}

In Section~\ref{app:general-minimal}, we will first address the decomposition selection problem for a generic aggregation function $\Delta\colon \R^N\times\R^N\to \R\cup\{-\infty\}$. Then, as in the setting of Remark~\ref{rem:select}, we will apply our results by assuming that there are aggregation functions $\lambda_1,\ldots,\lambda_N$ such that  $\Delta:=\Lambda-\sum_{i\in[N]}\lambda_i$ is an aggregation function. Finally, in Section~\ref{app:mf-decomp}, we will consider the mean-field aggregation function (see Example~\ref{ex:decomposition}\eqref{mf2}) and find a minimal decomposition for it within a class of symmetric decompositions.

\subsection{Existence of Minimal Decompositions}\label{app:general-minimal}

Let $\Delta\colon \R^N\times\R^N\to \R\cup\{-\infty\}$ be an aggregation function. Let us define the projection
\[
\mathcal{X}(\Delta)\coloneqq \{x\in\R^N\;|\;\exists m\in\R^N\colon (x,m)\in \dom\Delta \}=\{x\in \R^N\; | \; \dom\Delta|^{x}\neq \emptyset\}.
\]
Recall that, whenever $x\in\mathcal{X}(\Delta)$, the set $\dom\Delta|^x$ is a rectangular upper set. We will further assume this set has a (componentwise) minimum point as stated in the next assumption.

\begin{assumption}
There exists a function $\underline{r}\colon\mathcal{X}(\Delta)\to \R^N$ such that
\[
\dom\Delta|^x=\underline{r}(x)+\R^N_+
\]
for every $x\in \mathcal{X}(\Delta)$.
\end{assumption}
Under this assumption, it is clear that $\underline{r}$ is decreasing and lower semicontinuous, i.e., the epigraph $\{(x,m)\in \mathcal{X}(\Delta)\times\R^N \; | \; \underline{r}(x)\leq m\}=\dom\Delta$ is a closed set.

\begin{definition}\label{defn:ref-dec}
Let $\bar{r}\colon\mathcal{X}(\Delta)\to\R^N$ be a function such that $\underline{r}(x)\leq \bar{r}(x)$ for every $x\in\mathcal{X}(\Delta)$. Let $K\colon \mathcal{X}(\Delta)\to \R_+$ be a function. Let $\delta_1,\ldots,\delta_N\colon \R^N\times\R^N\to\R\cup\{-\infty\}$ be aggregation functions. The collection $\vec{\delta}=(\delta_i)_{i\in[N]}$ is called a \textbf{decomposition of $\Delta$ with reference pair $(\bar{r},K)$} if it satisfies the following properties:
\begin{enumerate}
\item \textbf{Decomposition:} $\sum_{i\in[N]}\delta_i(x,m)=\Delta(x,m)$ for every $(x,m)\in\R^N\times\R^N$.
\item \textbf{Consistent domain:} $\dom\delta_i=\dom\Delta$ for each $i\in[N]$.
\item \textbf{Self-feasible with $\bar{r}$:} For every $x\in \mathcal{X}(\Delta)$ and $i\in[N]$, it holds $\delta_i(x,(\bar{r}_i(x),\underline{r}_{-i}(x)))\geq 0$.
\item \textbf{Bounded by $K$ at $\underline{r}$:} For every $x\in\mathcal{X}(\Delta)$, it holds $\sum_{i\in[N]}\abs{\delta_i(x,\underline{r}(x))}\leq K(x)$.
\end{enumerate}
We denote by $\mathscr{D}(\bar{r},K)$ the set of all decompositions of $\Delta$ with reference pair $(\bar{r},K)$.
\end{definition}

Let $\mathscr{C}(\dom\Delta)$ be the space of all continuous functions $f\colon\dom\Delta\to\R$, or interchangeably, all functions $f\colon\R^N\times\R^N\to\R\cup\{-\infty\}$ with $\dom f=\dom\Delta$ such that the restriction of $f$ on $\dom f$ is continuous. We will equip $\mathscr{C}(\dom\Delta)$ with the topology of uniform convergence over compact subsets of $\dom\Delta$. One compatible metric for this topology can be defined by
\[
d(f_1,f_2)=\sum_{k=1}^\infty \frac{1}{2^k}\frac{\|(f_1-f_2)1_{\mathbb{B}_\infty(k)}\|_\infty}{1+\|(f_1-f_2)1_{\mathbb{B}_\infty(k)}\|_\infty}
\]
for every $f_1,f_2\in\mathscr{C}(\dom\Delta)$, where $\mathbb{B}_\infty(k)$ is the $\ell^\infty$ ball in $\R^N\times\R^N$ centered at the origin with radius $k\in \N$ and 
\[
\|f\|_\infty:=\sup_{(x,m)\in\dom \Delta}|f(x,m)|
\]
for every $f\in\mathscr{C}(\dom\Delta)$. Note that $(\mathscr{C}(\dom\Delta),d)$ is a complete separable metric space. We also define the product space $\mathscr{C}^N(\dom\Delta):=(\mathscr{C}(\dom\Delta))^N$ and equip it with the corresponding product topology.

From now on, we fix a reference pair $(\bar{r},K)$ as in Definition~\ref{defn:ref-dec}. From the definition of an aggregation function, it is clear that $\mathscr{D}(\bar{r},K)$ is a subset of $\mathscr{C}^N(\dom\Delta)$. We explore some useful properties of this set in the next few results.

\begin{lemma}\label{lem:D-closed}
$\mathscr{D}(\bar{r},K)$ is a closed subset of $\mathscr{C}^N(\dom\Delta)$.
\end{lemma}

\begin{proof}
Let $(\vec{\delta}^n)_{n\in\N}$ be a sequence in $\mathscr{D}(\bar{r},K)$ that converges to some $\vec{\delta}\in \mathscr{C}^N(\dom\Delta)$. In particular, for each $i\in[N]$, the sequence $(\delta_i^n)_{n\in\N}$ converges to $\delta_i$ pointwise on $\dom\Delta$. As pointwise convergence preserves monotonicity and concavity, the functions $\delta_1,\ldots,\delta_N$ are increasing and concave; hence, they are aggregation functions with common domain $\dom\Delta$. Pointwise convergence also implies that $\sum_{i\in[N]}\delta_i(x,m)=\Delta(x,m)$ for every $(x,m)\in\dom\Delta$.

To check self-feasibility with $\bar{r}$, let $x\in\mathcal{X}(\delta)$ and $i\in[N]$. Then, we have
\[
\delta_i(x,(\bar{r}_i(x),\underline{r}_i(x)))=\lim_{n\rightarrow\infty}\delta_i^n(x,(\bar{r}_i(x),\underline{r}_i(x)))\geq 0.
\]
Similarly, for every $x\in\mathcal{X}(\Delta)$, we have
\[
\sum_{i\in[N]}|\delta_i(x,\underline{r}(x))|=\lim_{n\rightarrow\infty}\sum_{i\in[N]}|\delta_i^n(x,\underline{r}(x))|\leq K(x).
\]
Hence, $\vec{\delta}\in\mathscr{D}(\overline{r},K)$, proving the closedness of $\mathscr{D}(\bar{r},K)$.
\end{proof}

\begin{lemma}\label{lem:D-bounded}
$\mathscr{D}(\bar{r},K)$ is pointwise bounded, i.e., for every $(x,m)\in\dom\Delta$, it holds
\[
\sup_{\vec{\delta}\in\mathscr{D}(\bar{r},K)}\max_{i\in[N]}|\delta_i(x,m)|<+\infty.
\]
\end{lemma}

\begin{proof}
Let $\vec{\delta}(\bar{r},K)$ and $(x,m)\in\dom\Delta$. In particular, $\underline{r}(x)\leq m$ since $m\in \dom\Delta|^x$. Then, by monotonicity, we have $\delta_i(x,\underline{r}(x))\leq \delta_i(x,m)$ for every $i\in[N]$ and
\[
\sum_{i\in[N]}\delta_i(x,\underline{r}(x))=\Delta(x,\underline{r}(x)).
\]
Hence, we have
\[
|\delta_i(x,m)|\leq (\delta_i(x,m)-\delta_i(x,\underline{r}(x)))+|\delta_i(x,\underline{r}(x))|\leq (\Delta(x,m)-\Delta(x,\underline{r}(x)))+\sum_{j\in[N]}|\delta_j(x,\underline{r}(x))|.
\]
By boundedness by $K$ at $\underline{r}$, this implies that
\[
\max_{i\in[N]}|\delta_i(x,m)|\leq (\Delta(x,m)-\Delta(x,\underline{r}(x)))+K(x)<+\infty.
\]
As the upper bound does not depend on the choice of $\vec{\delta}$, the claimed boundedness of $\mathscr{D}(\underline{r},K)$ at $(x,m)$ follows. 
\end{proof}

\begin{lemma}\label{lem:D-equi}
Let $i\in[N]$ and $k\in\N$ such that $\dom\Delta\cap\mathbb{B}_\infty(k)\neq\emptyset$. Then, the set $\{(\delta_i)|_{\mathbb{B}_\infty(k)}\;|\; 
\vec{\delta}\in\mathscr{D}(\bar{r},K)\}$ of restrictions on $\mathbb{B}_\infty(k)$ is uniformly equicontinuous as a subset of $\mathscr{C}(\dom\Delta\cap\mathbb{B}_\infty(k))$.
\end{lemma}

\begin{proof}
Let $\varepsilon>0$. Since $\Delta$ is a continuous function on $\dom\Delta$, its restriction $\Delta|_{\mathbb{B}_\infty(k)}$ on the compact set $\dom\Delta\cap \mathbb{B}_{\infty}(k)$ is uniformly continuous. Hence, there exists $\alpha>0$ such that the implication
\begin{equation}\label{eq:equi}
|x^1-x^2|\vee |m^1-m^2|\leq \alpha\quad\Rightarrow\quad |\Delta(x^1,m^1)-\Delta(x^2,m^2)|\leq\frac{\varepsilon}{2}
\end{equation}
holds for every $(x^1,m^1),(x^2,m^2)\in\dom\Delta\cap \mathbb{B}_{\infty}(k)$.

Let us fix $(x^1,m^1),(x^2,m^2)\in\dom\Delta\cap \mathbb{B}_{\infty}(k)$ such that $|x^1-x^2|\vee |m^1-m^2|\leq \alpha$. We define
\[
\tilde{x}:=x^1\vee x^2,\quad \tilde{m}:=m^1\vee m^2.
\]
By the monotonicity of $\dom\Delta$ and the definition of $\mathbb{B}_\infty(k)$, we have $(\tilde{x},\tilde{m})\in \dom\Delta\cap \mathbb{B}_\infty(k)$. A simple geometric observation yields that
\[
|x^1-\tilde{x}|\leq |x^1-x^2|\leq \alpha.
\]
Similarly, we have $|x^2-\tilde{x}|\leq \alpha$, $|m^1-\tilde{m}|\leq \alpha$, and $|m^2-\tilde{m}|\leq \alpha$. Then, by \eqref{eq:equi}, we have
\[
|\Delta(x^1,m^1)-\Delta(\tilde{x},\tilde{m})|\leq \frac{\varepsilon}{2},\quad |\Delta(x^2,m^2)-\Delta(\tilde{x},\tilde{m})|\leq \frac{\varepsilon}{2}.
\]
Moreover, since $(x^1,m^1)\leq (\tilde{x},\tilde{m})$ and $(x^2,m^2)\leq (\tilde{x},\tilde{m})$, by monotonicity, we have
\[
\delta_j(x^1,m^1)\vee \delta_j(x^2,m^2)\leq \delta_j(\tilde{x},\tilde{m})
\]
for every $j\in\N$ as well as
\[
\Delta(x^1,m^1)\vee \Delta(x^2,m^2)\leq \Delta(\tilde{x},\tilde{m}).
\]
Therefore,
\begin{align*}
|\delta_i(x^1,m^1)-\delta_i(x^2,m^2)|&\leq |\delta_i(x^1,m^1)-\delta_i(\tilde{x},\tilde{m})|+|\delta_i(\tilde{x},\tilde{m})-\delta_i(x^2,m^2)|\\
&=(\delta_i(\tilde{x},\tilde{m})-\delta_i(x^1,m^1))+(\delta_i(\tilde{x},\tilde{m})-\delta_i(x^2,m^2))\\
&\leq (\Delta(\tilde{x},\tilde{m})-\Delta(x^1,m^1))+(\Delta(\tilde{x},\tilde{m})-\Delta(x^2,m^2))\\
&\leq \frac{\varepsilon}{2}+\frac{\varepsilon}{2}=\varepsilon,
\end{align*}
which completes the proof of uniform equicontinuity.
\end{proof}

\begin{proposition}\label{prop:D-compact}
$\mathscr{D}(\bar{r},K)$ is a compact subset of $\mathscr{C}^N(\dom\Delta)$.
\end{proposition}

\begin{proof}
We use the generalization of Arzel\`{a}-Ascoli theorem for uniform convergence on compact sets, see \citet[Theorem~X.2.5]{bourbaki}. Lemma~\ref{lem:D-bounded} and Lemma~\ref{lem:D-equi} justify the hypotheses of this theorem. Hence, applying the theorem yields that $\{\delta_i\;|\; \vec{\delta}\in\mathscr{D}(\bar{r},K)\}$ is a precompact subset of $\mathscr{C}(\dom\Delta)$ for each $i\in[N]$. Then, $\bigtimes_{i\in[N]} \{\delta_i\;|\; \vec{\delta}\in\mathscr{D}(\bar{r},K)\}$ is a precompact subset of $\mathscr{C}^N(\dom\Delta)$. Finally, since $\mathscr{D}(\bar{r},K)$ is a closed subset of $\mathscr{C}^N(\dom\Delta)$ by Lemma~\ref{lem:D-closed} and
\[
\mathscr{D}(\bar{r},K)\subseteq \bigtimes_{i\in[N]} \{\delta_i\;|\; \vec{\delta}\in\mathscr{D}(\bar{r},K)\},
\]
we conclude that $\mathscr{D}(\bar{r},K)$ is a compact subset of $\mathscr{C}^N(\dom\Delta)$.
\end{proof}

\begin{definition}\label{defn:L}
Let $\vec{\delta}\in \mathscr{D}(\bar{r},K)$. For each $x\in \mathcal{X}(\Delta)$, let
\[
L_x(\vec{\delta}):=\inf\cb{L\geq 0\; \left| \; 
\begin{array}{l} \delta_i(x,m+L\alpha e_i)\geq \delta_i(x,m+\alpha e_j)\text{ for every }i,j\in[N]\\ \text{ with }i\neq j,\ m\in \dom\Delta|^x, \ \alpha\geq 0\end{array}
\right.
}
\]
Then, the minimal self-preferential constant of $\vec{\delta}$ is defined as
\[
L(\vec{\delta}):=\sup_{x\in \mathcal{X}(\Delta)}L_x(\vec{\delta}).
\]
\end{definition}

\begin{remark}\label{rem:L-defn}
Given $\vec{\delta}\in \mathscr{D}(\bar{r},K)$, we have $L(\vec{\delta})\in [0,+\infty]$ in general. For each $x\in\mathcal{X}(\Delta)$, the set in the definition of $L_x(\vec{\delta})$ is an interval that is unbounded from above due to monotonicity. Moreover, if this set is nonempty, i.e., $L_x(\vec{\delta})<+\infty$, then the infimum in the definition of $L_x(\vec{\delta})$ is attained by the continuity of aggregation functions.
\end{remark}

\begin{lemma}\label{lem:L-lsc}
The function $L\colon \mathscr{D}(\bar{r},K)\to [0,+\infty]$ is lower semicontinuous.
\end{lemma}

\begin{proof}
Since the pointwise supremum of a family of lower semicontinuous functions is lower semicontinuous, it is enough to show that $L_x\colon \mathscr{D}(\bar{r},K)\to [0,+\infty]$ is lower semicontinuous for an arbitrarily fixed $x\in \mathcal{X}(\Delta)$.

To that end, let $(\vec{\delta}^n)_{n\in\mathbb{N}}$ be a sequence in $\mathscr{D}(\bar{r},K)$ that converges to some $\vec{\delta}\in \mathscr{C}^N(\dom\Delta)$. We show that 
\[
L_x(\vec{\delta})\leq \liminf_{n\rightarrow\infty}L_x(\vec{\delta}^n)=:L_x^\ast.
\]
Since the case $L_x^\ast=+\infty$ is trivial, let us assume that $L_x^\ast<+\infty$. Then, there exists a subsequence $(\vec{\delta}^{n_k})_{k\in\N}$ such that $(L_x(\vec{\delta}^{n_k}))_{k\in\N}$ converges to $L_x^\ast$. Hence, for every $i,j\in[N]$ with $i\neq j$, $m\geq \underline{r}(x)$, and $\alpha\geq 0$, we have
\begin{align*}
\delta_i(x,m+L_x^\ast)&=\lim_{k\rightarrow\infty}\delta_i^{n_k}(x,m+L_x(\vec{\delta}^{n_k})\alpha e_i)\\
&\geq \lim_{k\rightarrow\infty}\delta_i^{n_k}(x,m+\alpha e_i)\\
&=\delta_i(x,m+\alpha e_j).
\end{align*}
In this calculation, the first equality follows since the sequence $(L_x(\vec{\delta}^{n_k}))_{k\in\N}$ is bounded and $(\delta_i^{n_k})_{k\in\N}$ converges to $\delta_i$ uniformly on compact sets, the inequality is by the attainment of each $L_x(\vec{\delta}^{n_k})$ as an infimum (see Remark~\ref{rem:L-defn}), and the second equality follows since uniform convergence on compact sets implies pointwise convergence. Hence, we conclude that $L_x^\ast\geq L_x(\vec{\delta})$ and the lower semicontinuity of $L_x(\cdot)$ follows.
\end{proof}

\begin{theorem}\label{thm:exist-dec}
Assume that the function $L\colon \mathscr{D}(\bar{r},K)\to [0,+\infty]$ is proper, i.e., there exists $\vec{\delta}^0\in \mathscr{D}(\bar{r},K)$ such that $L(\vec{\delta}^0)<+\infty$. Then, the optimization problem
\begin{equation}\label{eq:MD}
\text{minimize}\ L(\vec{\delta})\ \text{subject to }\ \vec{\delta}\in \mathscr{D}(\bar{r},K)\tag{MD}
\end{equation}
has an optimal solution.
\end{theorem}

\begin{proof}
Since $\vec{\delta}^0$ is a feasible solution of \eqref{eq:MD}, the restricted problem
\[
\text{minimize}\ L(\vec{\delta})\ \text{subject to }\ L(\vec{\delta})\leq L(\vec{\delta}^0),\quad \vec{\delta}\in \mathscr{D}(\bar{r},K)
\]
has the same optimal value and the same set of optimal solutions as the original one. Moreover, the new feasible region $\{\vec{\delta}\in \mathscr{D}(\bar{r},K)\;|\; L(\vec{\delta})\leq L(\vec{\delta}^0)\}$ is closed since the function $L(\cdot)$ is lower semicontinuous by Lemma~\ref{lem:L-lsc}. Then, this set is also compact since $\mathscr{D}(\bar{r},K)$ is compact by Proposition~\ref{prop:D-compact}. The function $L$ is finitely valued on this set. Therefore, an optimal solution exists for both problems.
\end{proof}

We end this section by formulating an analogous framework for single-element aggregation functions. To that end, let $\bar\Delta\colon\R^N\to\R\cup\{-\infty\}$ be a single-element aggregation function such that
\[
\dom\bar\Delta=\underline{x}+\R^N_+
\]
for some $\underline{x}\in\R^N$.

\begin{definition}\label{defn:ref-dec-single}
Let $\bar{x}\in \dom\bar\Delta$ and $k>0$. Let $\bar{\delta}_1,\ldots,\bar{\delta}_N\colon \R^N\to\R\cup\{-\infty\}$ be single-element aggregation functions. The collection $\vec{\bar{\delta}}=(\bar\delta_i)_{i\in[N]}$ is called a \textbf{single-element decomposition of $\bar\Delta$ with reference pair $(\bar{x},k)$} if it satisfies the following properties:
\begin{enumerate}
\item \textbf{Decomposition:} $\sum_{i\in[N]}\bar\delta_i(x)=\bar\Delta(x)$ for every $x\in\R^N$.
\item \textbf{Consistent domain:} $\dom\bar\delta_i=\dom\bar\Delta$ for each $i\in[N]$.
\item \textbf{Self-feasible with $\bar{x}$:} For every $x\in\dom\bar\Delta$ and $i\in[N]$, it holds $\bar\delta_i(\bar{x}_i,\underline{x}_{-i})\geq 0$.
\item \textbf{Bounded by $k$ at $\underline{x}$:} For every $x\in\dom\bar\Delta$, it holds $\sum_{i\in[N]}\abs{\bar\delta_i(\underline{x})}\leq k$.
\end{enumerate}
We denote by $\bar{\mathscr{D}}(\bar{x},k)$ the set of all single-element decompositions of $\bar\Delta$ with reference pair $(\bar{x},k)$.

Given $\vec{\bar\delta}\in \bar{\mathscr{D}}(\bar{x},k)$, the minimal self-preferential constant of $\vec{\bar\delta}$ is defined as
\[
\bar{L}(\vec{\bar\delta}):=\inf\cb{L\geq 0\; \left| \; 
\begin{array}{l} \bar{\delta}_i(x+L\alpha e_i)\geq \bar{\delta}_i(x+\alpha e_j)\text{ for every }i,j\in[N]\\ \text{ with }i\neq j,\ x\in \dom\bar\Delta, \ \alpha\geq 0\end{array}.
\right.
}
\]
\end{definition}

\begin{theorem}\label{thm:exist-dec-single}
Assume that the function $\bar{L}\colon \bar{\mathscr{D}}(\bar{x},k)\to [0,+\infty]$ is proper, i.e., there exists $\vec{\bar\delta}^0\in \bar{\mathscr{D}}(\bar{x},k)$ such that $\bar{L}(\vec{\bar\delta}^0)<+\infty$. Then, the optimization problem
\begin{equation}\label{eq:MD-bar}
\text{minimize}\ \bar{L}(\vec{\bar\delta})\ \text{subject to }\ \vec{\bar{\delta}}\in \bar{\mathscr{D}}(\bar{x},k)\tag{$\overline{\text{MD}}$}
\end{equation}
has an optimal solution.
\end{theorem}

\begin{proof}
The proof follows by the simplified ``single-element version'' versions of the lemmata leading to the proof of Theorem~\ref{thm:exist-dec}. To avoid repetitions, we omit the details.
\end{proof}

\begin{remark}\label{rem:neuralnet}
Within this section we introduced Problem~\eqref{eq:MD} to define the minimal decomposition and present theoretical results related to the existence of a minimizer. However, as an infinite-dimensional problem, the computation of such a minimal decomposition is, generically, beyond the scope of this work. To approximate this minimal decomposition, we suggest utilizing input convex neural networks (see, e.g.,~\cite{amos2017input}) which minimizes the self-preferential constant (computed via the partial derivatives) at a random sampling of training points with penalization for violations of monotonicity and exact decomposition of the full aggregation function.
\end{remark}

\begin{remark}
We note that self-feasibility as defined in Assumption~\ref{asmp:decomposition} is with respect to a strict inequality while, herein, with a non-strict inequality. Notably, the full decomposition $(\Lambda_i := \lambda_i + \delta_i)_{i \in [N]}$ is self-feasible with the strict inequality so long as the natural decomposition $(\lambda_i)_{i \in [N]}$ is a non-trivial decomposition. In the case that $\lambda_i \equiv 0$ for every bank $i$, then existence of a Nash equilibrium can be proven so long as the minimal decomposition of $\Delta = \Lambda$ is self-preferential with $L < 1$ even with the non-strict inequality in self-feasibility; this follows via the same continuity argument as in the proof of Theorem~\ref{thm:exist-unique:2}.
\end{remark}

\subsection{Minimal Decomposition of the Mean-Field Aggregation Function}\label{app:mf-decomp}

Within this section, we want to explore the minimal decomposition (Section~\ref{app:general-minimal}) for the mean-field aggregation function (Example~\ref{ex:se-aggregation}\eqref{mf}), i.e., $\bar\Lambda^{MF}(x) = \sum_{i \in [N]} u_i(x_i) + \bar u(\frac{1}{N}\sum_{i \in [N]}x_i)$ for any $x \in \R^N$. In particular, we want to consider the decomposition as follows from Example~\ref{ex:decomposition}\eqref{mf2} with $\bar\lambda_i(x) := u_i(x_i)$. This approach produces a remainder term $\bar\Delta(x) := \bar u(\frac{1}{N} \sum_{i \in [N]} x_i)$. In this section, we provide the minimal single-element decomposition for the sensitive version of this aggregation function, i.e., $\Delta(x,m) := \bar\Delta(x+m)$, following Problem~\eqref{eq:MD}.

\begin{proposition}\label{prop:mf-decomp}
Consider the single-element aggregation function $\bar\Delta(x) = \bar u(\frac{1}{N} \sum_{i \in [N]} x_i)$ with $\dom\bar\Delta = \R^N_+$, where $\bar u$ is a strictly concave, increasing, and twice continuously differentiable function such that $\lim_{z \to \infty} u'(z) = 0$. Let $\vec{\bar{\delta}}=(\bar\delta_i)_{i \in [N]}$ be a single-element decomposition of $\bar\Delta$ satisfying:
\begin{enumerate}
\item \textbf{Symmetry:} For any distinct banks $i,j,k \in [N]$, $\bar\delta_i(x) = \bar\delta_i(x_{j \leftrightarrow k})$, where $x_{j \leftrightarrow k} = x + (x_k - x_j)e_j + (x_j - x_k)e_k$;
\item \textbf{Twice differentiable:} $\bar\delta_i$ is twice continuously differentiable for each $i\in[N]$.
\end{enumerate}
The sensitive version of this decomposition, $\delta_i(x,m) := \bar\delta_i(x+m)$ for any $i$, has a self-preferential constant $L \geq 1$.
\end{proposition}

\begin{proof}
Let us fix some single-element decomposition $(\bar\delta_i)_{i \in [N]}$ of $\bar\Delta$. Recall that, by definition, $\sum_{i \in [N]} \bar\delta_i = \bar\Delta$ (\emph{decomposition}) and each $\bar\delta_i$ is increasing in all components (\emph{monotonicity}).

To complete this proof, we will first prove that, for any $i$, $\bar\delta_i(x) = \hat\delta_i(\sum_{k \in [N]} x_k) + \beta x_i$ for some concave function $\hat\delta_i: \R_+ \to \R$ and constant $\beta \in \R$.
Second, we will prove that $\beta \leq 0$. Finally, with this structural form, we will demonstrate that the self-preferential constant $L$ can\emph{not} be less than $1$, which completes the proof.

\begin{enumerate}
\item \textbf{Structural Form:} Fix banks $j \neq k$ and define $x(t) := \bar x + td$ for some $\bar x \in \R^N_+$ and $\sum_{i \in [N]}d_i = 0$. Note that $\sum_{i \in [N]} x_i(t) = \sum_{i \in [N]} \bar x_i$ for any feasible $t$. Let $\phi_i(t) := \bar\delta_i(x(t))$. By construction, $\phi_i$ is concave for every bank $i$, i.e., $\phi_i''(t) \leq 0$. Furthermore, by the decomposition property, $\sum_{i \in [N]} \phi_i(t) = \bar u(\frac{1}{N} \sum_{i \in [N]} \bar x_i)$ for any feasible $t$. Therefore, it must follow that $\sum_{i \in [N]} \phi_i''(t) = 0$. As the sum of non-positive elements that sum to 0, it must therefore follow that $\phi_i''(t) = 0$ for every bank $i$ and every feasible $t$.

Following Proposition~\ref{prop:structural-form} below, any function with zero second derivatives along all directions $d$ such that $\sum_{k \in [N]}d_k = 0$ must have the form 
\[
\bar\delta_i(x) = G_i\of{\sum_{k \in [N]} x_k} + \sum_{k \in [N]} \bar\beta_{ik} x_k
\]
for any bank $i$ and $x \in \R^N_+$. In addition, by symmetry, it must follow that $\bar\beta_{ij} = \bar\beta_{ik} =: \hat\beta_i$ for any $j,k \neq i$. Therefore,
    \[
    \bar\delta_i(x) = G_i\of{\sum_{k \in [N]} x_k} + \bar\beta_{ii}x_i + \hat\beta_i\sum_{j \neq i} x_j 
    = \underbrace{G_i\of{\sum_{k \in [N]} x_k} + \hat\beta_i\sum_{k \in [N]} x_k}_{=: h_i(\sum_{k \in [N]} x_k)} + \underbrace{(\bar\beta_{ii} - \hat\beta_i)}_{=: \beta_i} x_i.
    \]
It remains to prove that $\beta_i = \beta_j$ for any $i,j \in [N]$. By the decomposition property, $\bar u(\frac{1}{N} \sum_{k \in [N]} x_k) = \sum_{i \in [N]} h_i(\sum_{k \in [N]} x_k) + \sum_{i \in [N]} \beta_i x_i$. However, since the left-hand side only depends on the sum $S = \sum_{k \in [N]} x_k$ it must follow that the same is true of the right-hand side, i.e., $\beta_i = \beta_j$ for any $i,j \in [N]$.

\item \textbf{Monotonicity:} Consider some fixed vector $x \in \R^N_+$ and define $S := \sum_{k \in [N]} x_k$. Note that, by the construction of the structural form for the decomposition and efficiency, $\bar u(\frac{1}{N} S) = \sum_{i \in [N]} h_i(S) + \beta S$. By taking derivatives, this also implies that $\frac{1}{N} \bar u'(\frac{1}{N} S) = \sum_{i \in [N]} h_i'(S) + \beta$. Furthermore, by monotonicity of $\bar\delta_i$, it must follow that $h_i: \R_+ \to \R$ is also non-decreasing, i.e., $h_i'(S) \geq 0$. Therefore, by the asymptotic behavior of $\bar u'$, it must follow that $\beta \leq 0$.

\item \textbf{Self-Preferential Constant:} Finally, note that
\[
\frac{\partial \bar\delta_i}{\partial x_i}(x) = h_i'\of{\sum_{k \in [N]} x_k} + \beta \leq h_i'\of{\sum_{k \in [N]} x_k} = \frac{\partial \bar\delta_i}{\partial x_j}(x)
\]
for any $i \neq j$ by $\beta \leq 0$. Therefore, it must follow that the self-preferential constant $L$ must exceed $1$.
\end{enumerate}
\end{proof}

\begin{example}\label{ex:mf-decomp}
Proposition~\ref{prop:mf-decomp} states that no symmetric decomposition can have a self-preferential constant smaller than $1$. Notably, any decomposition $\bar\delta_i^\alpha(x) := \alpha_i \bar u(\frac{1}{N}\sum_{k \in [N]} x_k)$ for $\alpha$ in the unit simplex attains this minimal possible self-preferential constant of $1$. For stronger symmetry, we recommend taking $\alpha := \vec{1}/N$ as done in Example~\ref{ex:nash}\eqref{nash-mf}.
\end{example}

\begin{proposition}\label{prop:structural-form}
Let $f: \R^N \to \R \cup \{-\infty\}$ be a concave, twice continuously differentiable function such that $d^\T\nabla^2 f(x)d = 0$ for any $x \in \operatorname{int}\dom f$ and $d \in D := \{d \in \R^N \; | \; \sum_{i \in [N]}d_i = 0\}$. Then $f(x) = G(\sum_{i \in [N]} x_i) + \beta^\T x$ for some function $G: \R \to \R \cup \{-\infty\}$ and vector $\beta \in \R^N$ with $\sum_{i \in [N]} \beta_i = 0$.
\end{proposition}

\begin{proof}
In order to prove this result, we will first prove that $\nabla^2 f(x) = \bar c(x)\vec{1}\vec{1}^\T$, $x\in \operatorname{int}\dom f$, for some univariate mapping $\bar c: \R^N \to \R \cup \{-\infty\}$. With this Hessian, we are able to deduce a structural form for the gradient $\nabla f(x)$. Finally, we use these results to prove the proposed structural form for $f$. For the purposes of this proof, fix $x \in \operatorname{int}\dom f$.
\begin{enumerate}
\item \textbf{Hessian:} Fix some vector $v \in \R^N$ and note that it can be decomposed as $v = \alpha\vec{1} + d$ for $\alpha := \vec{1}^\T v/N \in \R$ and $d := v - \alpha\vec{1} \in D$. To provide a structure to the Hessian $\nabla^2 f(x)$, consider $v^\T\nabla^2 f(x) v$:
\begin{align*}
v^\T \nabla^2 f(x) v &= (d^\T \nabla^2 f(x) d) + 2\alpha(d^\T \nabla^2 f(x) \vec{1}) + \alpha^2(\vec{1}^\T\nabla^2 f(x) \vec{1})\\
&= 2\alpha (v^\T \nabla^2 f(x) \vec{1}) - \alpha^2 (\vec{1} \nabla^2 f(x) \vec{1}) \\
&= \frac{2}{N}(\vec{1}^\T v)(v^\T \nabla^2 f(x) \vec{1}) - \frac{1}{N^2}(\vec{1}^\T v)^2 (\vec{1} \nabla^2 f(x) \vec{1}) \\
&= \sum_{i \in [N]} \sum_{j \in [N]} \left[\frac{1}{N}\left((\nabla^2 f(x) \vec{1})_i + (\nabla^2 f(x) \vec{1})_j\right) - \frac{1}{N^2}(\vec{1}^\T \nabla^2 f(x) \vec{1})\right]v_i v_j.
\end{align*}
As this holds for any vector $v \in \R^N$, it must follow that $\nabla^2 f(x) = a(x)\vec{1}^\T + \vec{1}a(x)^\T + c(x)\vec{1}\vec{1}^\T$ for vector-valued $a(x) := \frac{1}{N}(\nabla^2 f(x)\vec{1})$ and scalar-valued $c(x) := -\frac{1}{N^2}(\vec{1}^\T \nabla^2 f(x) \vec{1})$. Taking advantage of this structural form, recall that $f$ is concave. Therefore,
\begin{align*}
0 &\geq v^\T \nabla^2 f(x) v = (\alpha\vec{1} + d)^\T [a(x)\vec{1}^\T + \vec{1}a(x)^\T + c(x)\vec{1}\vec{1}^\T](\alpha\vec{1} + d) \\
&= N\left[\alpha^2(Nc(x) + 2\vec{1}^\T a(x)) + 2\alpha(d^\T a(x))\right].
\end{align*}
Noting that $Nc(x) + 2\vec{1}^\T a(x) \leq 0$ by considering the same quadratic in direction $v_\alpha := \alpha\vec{1}$, it must follow that $d^\T a(x) = 0$ for any $d \in D$ (or else there exists some $d \in D$ such that the quadratic $v^\T\nabla^2 f(x)v > 0$ for $\alpha$ sufficiently small). As a consequence, $a(x) = \bar a(x) \vec{1}$ for some scalar function $\bar a: \R^N \to \R$. In particular, this implies that $\nabla^2 f(x) = \bar c(x) \vec{1}\vec{1}^\T$ for $\bar c(x) := c(x) + 2\bar a(x)$.
\item \textbf{Gradient:} As a direct consequence of the structure of the Hessian, for any $d,d' \in D$:
\begin{align*}
\frac{\partial}{\partial t}d^\T\nabla f(x + td') &= d^\T\nabla^2 f(x + td') d' = d^\T [\bar c(x+td')\vec{1}\vec{1}^\T] d' = 0.
\end{align*}
Therefore, $d^\T\nabla f(x) = d^\T\nabla f(x + d')$ for any $d,d' \in D$. In particular, this implies that $d^\T\nabla f(x) = d^\T\bar\beta(\sum_{i \in [N]} x_i)$ for some mapping $\bar\beta: \R \to \R^N$. In fact, we can take $\bar\beta$ to be orthogonal to $\vec{1}$ without loss of generality, i.e., $\vec{1}^\T\bar\beta(S) = 0$ for any feasible $S$. In this way, we know that $\nabla f(x) = \gamma(x)\vec{1} + \bar\beta(\sum_{i \in [N]} x_i)$.
\item \textbf{Structural Form:} Fix some $\alpha \in \R$ and $d \in D$. Recall that $d^\T \nabla^2 f(x) d = 0$, therefore we can represent $f(\alpha\vec{1}+d)$ by its first-order Taylor expansion, i.e.,
\begin{align*}
f(\alpha\vec{1} + d) &= f(\alpha\vec{1}) + d^\T\nabla f(\alpha\vec{1}) = f(\alpha\vec{1}) + d^\T\bar\beta(N\alpha).
\end{align*}
Let $\alpha = \frac{1}{N}\sum_{i \in [N]}$ and $d = x - \alpha\vec{1}$ then the above Taylor expansion can be reformulated as:
\begin{align*}
f(x) &= \underbrace{f\of{\frac{1}{N}\sum_{i \in [N]}x_i \vec{1}}}_{=: \, G(\sum_{i \in [N]} x_i)} + x^\T \bar\beta(\sum_{i = 1}^N x_i) - \frac{1}{N} \sum_{i \in [N]} x_i \underbrace{\vec{1}^\T \bar\beta\of{\sum_{i \in [N]} x_i}}_{= \, 0}.
\end{align*}
That is, $f(x) = G(\sum_{i \in [N]} x_i) + \bar\beta(\sum_{i \in [N]} x_i)^\T x$. It remains to show that $\bar\beta(S) = \beta \in \R^N$ for any feasible $S \in \R$. 
Let $S(x) := \sum_{i \in [N]} x_i$ for simplicity of notation. Using the deduced form $f(x) = G(S(x)) + \bar\beta(S(x))^\T x$ and the Hessian we deduced in part 1 of this proof, we find:
\[\bar c(x) = G''(S(x)) + \bar\beta_i'(S(x)) + \bar\beta_j'(S(x)) + \bar\beta''(S(x))^\T x\]
for any $x \in \operatorname{int}\dom f$. Since $\bar c$ does not depend on the indices, it must follow that $\bar\beta_i'(S) + \bar\beta_j'(S) = \bar\beta_i'(S) + \bar\beta_k'(S)$ for any $i,j,k \in [N]$, i.e., $\bar\beta_j'(S) = \bar\beta_k'(S)$ for every pair of indices $j,k \in [N]$. Notably, this can only occur if $\bar\beta'(S) = \beta'(S)\vec{1}$ for some scalar function $\beta': \R \to \R \cup \{-\infty\}$. Furthermore, because $\sum_{i \in [N]} \bar\beta_i(\cdot) = 0$, we know that $\sum_{i \in [N]} \bar\beta_i'(S) = N\beta'(S) = 0$, i.e., $\beta'(S) = 0$ for any $S$. Integrating, it must follow that $\bar\beta(S) = \beta \in \R^N$ is constant and the proof is completed.
\end{enumerate}
\end{proof}

\section{Calibration to European Banking Dataset}\label{sec:calibration}
Within Section~\ref{sec:EN-numeric-EBA}, we reconstruct the interbank liability matrix following the approach outlined in \cite{gandy2017bayesian}. In doing so, we exactly copy the hyperparameters utilized in \cite{feinstein2019obligations}.

First, we note that the 2011 EBA stress testing dataset provides the total assets $T_i$, capital $c_i$, and interbank liabilities $\sum_{j \in [N]} \bar p_{ij}$ for all banks $i\in[N]$ in the dataset.\footnote{Due to technical issues with the calibration, we only consider $N=87$ of the $90$ institutions in the dataset. DE029, LU45, and SI058 are not included within this calibration.}
For simplicity, we set the total interbank assets to be equal to the total interbank liabilities, i.e., $\sum_{j \in [N]} \bar p_{ij} = \sum_{j \in [N]} \bar p_{ji}$ for every bank $i$ with slight perturbations as discussed within \cite{gandy2017bayesian}. For simplicity, we assume that the difference between the total assets ($T_i$) and the interbank assets ($\sum_{j \in [N]} \bar p_{ji}$) are the (initial) external assets ($x_i := T_i - \sum_{j \in [N]} \bar p_{ji}$). Similarly, all liabilities that are not owed to other banks are external ($\bar p_{i0} := T_i - \sum_{j \in [N]} \bar p_{ij} - c_i$).

With these parameters, we reconstruct the full nominal liabilities matrix, i.e., the exact values of $\bar p_{ij}$ for $i \neq j$, using the methodology of \cite{gandy2017bayesian}. Though a stochastic framework, within Section~\ref{sec:EN-numeric-EBA}, we only consider a single realization of this liabilities matrix. For this reconstruction we consider hyperparameters $p = 0.5$, $\text{thinning} = 10^4$, $n_{\text{burn-in}} = 10^9$, and $\lambda = \frac{pn(n-1)}{\sum_{i \in [N]} \sum_{j \in [N]} \bar p_{ij}} \approx 0.00122$.

\end{document}